\definecolor{orchid}{RGB}{204,204,255}
\definecolor{lightorange}{RGB}{255,229,204}
\definecolor{lightgreen}{RGB}{204,255,204}
\definecolor{lightpink}{RGB}{255,204,204}
\definecolor{lightblue}{RGB}{204,229,255}
\definecolor{darkgreen}{RGB}{102,102,0}
\definecolor{darkpurple}{RGB}{153,0,76}
\newcommand{\ester}[1]{{\color{black}#1}}
\def\rhorel{\mathbin{\rho}}
\def\wrhorel{\mathbin{\widehat{\rho}}}
\def\sat{\mathbf{Sat}}
\def\evi{\mathbf{Evi}}
\def\uncov{\mathbf{uncov}}
\def\crit{\mathbf{crit}}
\def\cand{\mathbf{cand}}
\def\canCover{\mathbf{canHit}}
\def\vios{\mathbf{vios}}
\newcommand{\var}{\mathrm{var}}
\def\e#1{\emph{#1}}
\newcommand{\eat}[1]{}
\newcommand{\set}[1]{\{#1\}}
\newcommand{\algname}[1]{{\sf #1}}
\def\myrulewidth{3.20in}
\def\therule{\rule{\myrulewidth}{0.2pt}}
\newenvironment{algseries}[2]
{\begin{figure}[#1]\def\thecaption{\caption{#2}}
\vskip-0.8em\begin{tabular}{p{\myrulewidth}}\therule\end{tabular}\vskip0.2em}
{\thecaption\end{figure}}
\newenvironment{insidealg}[2]
{\normalsize\begin{insidecode}{#1}{#2}{Algorithm}}
{\end{insidecode}}
\newenvironment{insidealg2}[3]
{\normalsize\begin{insidecode2}{#1}{#2}{#3}{Algorithm}}
{\end{insidecode2}}
\newenvironment{insidesub}[2]
{\begin{insidecode}{#1}{#2}{Subroutine}}
{\end{insidecode}}
\newenvironment{insidecode}[3]
{
%\small
\begin{tabular}{p{\myrulewidth}}
%\toprule
\multicolumn{1}{c}{\rule{0mm}{3mm}{\bf #3} $\algname{#1}(\mbox{#2})$\vspace{-0.6em}}\\
\therule\vskip-0.8em\therule
\vspace{-1em}
\begin{algorithmic}[1]}
{\end{algorithmic}
\vskip-0.3em\therule
\end{tabular}}
\newenvironment{insidecode2}[4]
{
%\small
\begin{tabular}{p{\myrulewidth}}
%\toprule
\multicolumn{1}{c}{\rule{0mm}{3mm}{\bf #4} $\algname{#1}(\mbox{#2})\mbox{#3}$\vspace{-0.6em}}\\
\therule\vskip-0.8em\therule
\vspace{-1em}
\begin{algorithmic}[1]}
{\end{algorithmic}
\vskip-0.3em\therule
\end{tabular}}
\newcommand{\depset}{\mathrm{\Delta}}
\def\phi{\varphi}
\def\signature{\mathcal{S}}
\def\pspace{\mathcal{P}}
\newtheorem{theorem}{Theorem}[section]
\newtheorem{proposition}[theorem]{Proposition}
\newtheorem{lemma}[theorem]{Lemma}
\newtheorem{problem}[theorem]{Problem}
\newtheorem{definitionthm}[theorem]{Definition}
\newenvironment{definition}{\begin{definitionthm}\em}
{\end{definitionthm}}
\newenvironment{repeatresult}[2]
{\vskip0.5em\par\textsc{#1} #2.\em}
{\vskip1em}
\newtheorem{examplethm}[theorem]{Example}
\newenvironment{example}{\begin{examplethm}\em}
{\qed\end{examplethm}}
\def\att#1{\mathrm{#1}}
\newenvironment{citemize}
{\begin{compactitem}}
{\end{compactitem}}
\newenvironment{cenumerate}
{\begin{compactenum}}
  {\end{compactenum}}
\def\att#1{\mathsf{#1}}
\pgfplotsset{compat=1.9}
\begin{document}

% ****************** TITLE ****************************************

\title{Approximate Denial Constraints}

% possible, but not really needed or used for PVLDB:
%\subtitle{[Extended Abstract]
%\titlenote{A full version of this paper is available as\textit{Author's Guide to Preparing ACM SIG Proceedings Using \LaTeX$2_\epsilon$\ and BibTeX} at \texttt{www.acm.org/eaddress.htm}}}

% ****************** AUTHORS **************************************

% You need the command \numberofauthors to handle the 'placement
% and alignment' of the authors beneath the title.
%
% For aesthetic reasons, we recommend 'three authors at a time'
% i.e. three 'name/affiliation blocks' be placed beneath the title.
%
% NOTE: You are NOT restricted in how many 'rows' of
% "name/affiliations" may appear. We just ask that you restrict
% the number of 'columns' to three.
%
% Because of the available 'opening page real-estate'
% we ask you to refrain from putting more than six authors
% (two rows with three columns) beneath the article title.
% More than six makes the first-page appear very cluttered indeed.
%
% Use the \alignauthor commands to handle the names
% and affiliations for an 'aesthetic maximum' of six authors.
% Add names, affiliations, addresses for
% the seventh etc. author(s) as the argument for the
% \additionalauthors command.
% These 'additional authors' will be output/set for you
% without further effort on your part as the last section in
% the body of your article BEFORE References or any Appendices.

\numberofauthors{4} %  in this sample file, there are a *total*
% of EIGHT authors. SIX appear on the 'first-page' (for formatting
% reasons) and the remaining two appear in the \additionalauthors section.

\author{
% You can go ahead and credit any number of authors here,
% e.g. one 'row of three' or two rows (consisting of one row of three
% and a second row of one, two or three).
%
% The command \alignauthor (no curly braces needed) should
% precede each author name, affiliation/snail-mail address and
% e-mail address. Additionally, tag each line of
% affiliation/address with \affaddr, and tag the
% e-mail address with \email.
%
% 1st. author
\alignauthor
Ester Livshits\\
       \affaddr{Technion}\\
       \affaddr{Haifa, Israel}\\
       \email{esterliv@cs.technion.ac.il}
% 2nd. author
\alignauthor
Alireza Heidari\\
       \affaddr{University of Waterloo}\\
       \affaddr{Waterloo, ON, Canada}\\
       \email{a5heidar@uwaterloo.ca}
% 3rd. author
\alignauthor Ihab F. Ilyas\\
       \affaddr{University of Waterloo}\\
       \affaddr{Waterloo, ON, Canada}\\
       \email{ilyas@uwaterloo.ca}
\and  % use '\and' if you need 'another row' of author names
% 4th. author
\alignauthor Benny Kimelfeld\\
       \affaddr{Technion}\\
       \affaddr{Haifa, Israel}\\
       \email{bennyk@cs.technion.ac.il}}
\date{30 July 1999}
% Just remember to make sure that the TOTAL number of authors
% is the number that will appear on the first page PLUS the
% number that will appear in the \additionalauthors section.

\maketitle

\begin{abstract}
The problem of mining integrity constraints from data has been extensively studied over the past two decades for commonly used types of constraints including the classic Functional Dependencies (FDs) and the more general Denial Constraints (DCs). In this paper, we investigate the problem of mining approximate DCs (i.e., DCs that are ``almost'' satisfied) from data. Considering approximate constraints allows us to discover more accurate constraints in inconsistent databases, detect rules that are generally correct but may have a few exceptions, as well as avoid overfitting and obtain more general and less contrived constraints. We introduce the algorithm $\algname{ADCMiner}$ for mining approximate DCs. An important feature of this algorithm is that it does not assume any specific definition of an approximate DC, but takes the semantics as input. 
Since there is more than one way to define an approximate DC and different definitions may produce very different results, we do not focus on one definition, but rather on a general family of approximation functions that satisfies some natural axioms defined in this paper and captures commonly used definitions of approximate constraints. We also show how our algorithm can be combined with sampling to return results with high accuracy while significantly reducing the running time.
\end{abstract}

\section{Introduction}
Integrity constraints are used for stating semantic conditions that
the data in the database must comply with. Enforcing the constraints
helps to make the database a more accurate model of the real world.
Integrity constraints may be obtained by domain experts; however, this
is often an expensive task that requires expertise not only in the
domain but also in the constraint language. In the past two decades,
extensive effort has been invested in exploring the challenge of
automatically discovering constraints from the data itself, for
different types of constraints, including the classic Functional
Dependencies (FDs)~\cite{DBLP:journals/cj/HuhtalaKPT99,
  DBLP:conf/icdt/NovelliC01, DBLP:conf/edbt/LopesPL00,
  DBLP:conf/dawak/WyssGR01,
  DBLP:journals/aicom/FlachS99,DBLP:journals/tkde/LiuLLC12,DBLP:journals/pvldb/PapenbrockEMNRZ15,heidari2019holodetect},
the more general Conditional FDs
(CFDs)~\cite{DBLP:journals/pvldb/ChiangM08,DBLP:journals/tkde/FanGLX11,DBLP:conf/pkdd/RammelaereG18},
and the more general Denial Constraints
(DCs)~\cite{DBLP:journals/pvldb/ChuIP13,DBLP:journals/pvldb/BleifussKN17,DBLP:conf/dexa/PenaA18,DBLP:journals/pvldb/PenaAN19}.

In practice, databases nowadays are often inconsistent and violate the
integrity constraints that are supposed to hold. In most large
enterprises, information is obtained from imprecise and sometimes
contradicting sources (e.g., social networks, news feeds, and user
behavior data) via imprecise procedures (e.g., natural-language
processing and image
processing). %Inconsistency may also arise when data is obtained from different sources that disagree with one another or use different formats.
In such cases, mining constraints that are satisfied by the entire database
will be inadequate, as they rely on the assumption that all data values
are correct. 
Hence, in this work, we consider the problem of mining \e{approximate
  constraints}, that is, constraints that are ``almost'' satisfied.
\ester{Approximate constraints are useful even for accurate datasets, since
they avoid overfitting to the current observations, and allow us to detect more general and less contrived rules, as well as rules that are generally correct but may have a
few exceptions (which is useful, for example, for the task of
detecting outliers).}
 
%In fact, our motivation for considering approximate constraints is
%threefold. First, as aforementioned, approximate constraints can be
%used to detect and repair inconsistencies in the data, as the common
%belief is that usually, most of the data is clean while only a small
%part is inconsistent. Second, considering approximate constraints
%allows us to detect rules that are generally correct but may have a
%few exceptions (e.g., most computer science professors at universities
%have a PhD, but in rare cases, they are hired without this
%requirement). Such rules can be used, for example, for the task of
%detecting outliers. Finally, approximate constraints help us avoid
%overfitting and discover more general and less contrived rules, as
%illustrated by the following example.
 
 \begin{example}\label{ex:example1}
  \ester{Consider the database of Table~\ref{table:running_example} storing information about the yearly income and tax payments of people from different states in the US. We assume that as a general rule, for a given state, it holds that a higher yearly income implies higher tax payments. However, the database does not satisfy this constraint (e.g., tuples $t_6$ and $t_7$ jointly violate the constraint, and the same holds for tuples 
  $t_{14}$ and $t_{15}$). If we consider constraints that are satisfied by the entire database, these violations require us to add additional conditions to the constraint, such as ``the constraint holds only for two people who have the same name'' or ``the constraint holds only if none of the people is called Julia and none of them lives in Illinois'', which results in very specific and complicated rules. However, we will be able to find the correct constraint if we allow for exceptions, and consider approximate constraints.}
%A constraint satisfied by this database is $\phi_1=\forall t_1,t_2 \neg(t_1[\att{State}]=t2[\att{State}] \wedge t_1[\att{Income}]>t_2[\att{Income}] \wedge t_1[\att{Tax}] \le t_2[\att{Tax}] \wedge t_1[\att{Name}]=t_2[\att{Name}])$ (i.e., it cannot be the case that two people live in the same state, the income of the first one is higher than the income of the second one, the tax payments of the first one are lower than the tax payments of the second one, and they have the same name). Another valid DC is the DC $\phi_2=\forall t_1,t_2 \neg(t_1.State=t2[\att{State}] \wedge t_1[\att{state}] \neq \val{IL} \wedge t_1[\att{Income}]>t_2[\att{Income}] \wedge t_1[\att{Tax}] \le t_2[\att{Tax}] \wedge t_1[\att{Name}]\neq \val{Julia} \wedge t_2[\att{Name}]\neq \val{Julia},t_1[\att{State}]\neq \val{IL})$ (i.e., it cannot be the case that two people live in the same state, the income of the first one is higher than the income of the second one, the tax payments of the first one are lower than the tax payments of the second one, none of them is called Julia and the first one does not live in Illinois). If, on the other hand, we allow exceptions and consider approximate DCs, we may be able to discover the more general DC $\phi_3=\forall t_1,t_2 \neg(t_1[\att{State}]=t2[\att{State}] \wedge t_1[\att{Income}]>t_2[\att{Income}] \wedge t_1[\att{Tax}] \le t_2[\att{Tax}])$, as the additional conditions on the names and states were only added to the DC to handle the outliers, resulting in very complicated DCs.
 \end{example}
 
 \definecolor{lightgray}{gray}{0.9}
\definecolor{red}{rgb}{1.0, 0.41, 0.38}
\definecolor{green}{rgb}{0.66, 0.89, 0.63}
 
 Most of the work to date on approximate constraint discovery has focused on approximate FDs~\cite{DBLP:journals/cj/HuhtalaKPT99,DBLP:journals/cbm/CombiMSSAMP15,DBLP:conf/apweb/LiLCJY16} or CFDs~\cite{DBLP:journals/pvldb/ChiangM08,DBLP:journals/tkde/FanGLX11,DBLP:conf/pkdd/RammelaereG18}. %\ester{The algorithms for detecting approximate (C)FDs heavily rely on the fact that (C)FDs are not symmetric (that is, a (C)FD has a left-hand side and a right-hand side), and it is not clear whether and how these algorithms can be adapted for the task of detecting ADCs.} Therefore, 
 Chu et al.~\cite{DBLP:journals/pvldb/ChuIP13} and later Pena et al.~\cite{DBLP:conf/dexa/PenaA18,DBLP:journals/pvldb/PenaAN19} considered approximate DCs.  As the expressive power of (C)FDs is rather restricted, in this work, we consider the problem of mining approximate DCs (ADCs for short) from data.
This problem has not received much attention and the currently existing algorithms are $\algname{AFASTDC}$~\cite{DBLP:journals/pvldb/ChuIP13} and its improved versions $\algname{BFASTDC}$~\cite{DBLP:conf/dexa/PenaA18} and $\algname{DCFinder}$~\cite{DBLP:journals/pvldb/PenaAN19}, that we will discuss in more details in the next section. 

%IHAB: Not sure we need this argument since we already reference approximate DC algorithms!
%The algorithms for detecting approximate (C)FDs heavily rely on the fact that (C)FDs are not symmetric (that is, a (C)FD has a left-hand side and a right-hand side), and it is not clear whether and how these algorithms can be adapted for the task of detecting ADCs.
 
A common shortcoming of many works on approximate constraints (including the existing works on ADCs) is that the algorithms proposed for this task are often an after-thought of detecting valid exact constraints, and are usually obtained by relaxing some of the parameters of the original algorithm. Hence, existing algorithms miss opportunities to use techniques that are designed specifically for mining approximate constraints. These existing algorithms are often inefficient, since they need to examine ``all" combinations of records necessary to validate the discovered DCs.
Another drawback of existing algorithms is the fact that the approximation function is hard-wired into the algorithm. However, there are many possible definitions of approximate constraints, and different works indeed consider different definitions that produce very different results. The most common definition of approximate (C)FDs, for example, is based on the minimal number of tuples that should be removed for the (C)FD to hold~\cite{DBLP:journals/cj/HuhtalaKPT99,DBLP:journals/cbm/CombiMSSAMP15,DBLP:conf/apweb/LiLCJY16,DBLP:journals/pvldb/ChiangM08}, while the definition used for approximate DCs is based on the number of tuple pairs violating the DC~\cite{DBLP:journals/pvldb/ChuIP13,DBLP:conf/dexa/PenaA18,DBLP:journals/pvldb/PenaAN19}. It is not clear whether one of the definitions is the ``best'' one, and it may be the case that different definitions produce better results in different cases.
 
 \begin{table}[t]
  \small
    \centering
\begin{tabular}{|c||c|c|c|c|c|}
\hline
\rowcolor{lightgray} & \textbf{Name} & \textbf{State} & \textbf{Zip} & \textbf{Income} & \textbf{Tax}\\\hline
$t_1$ & Alice & NY & $11803$ & $28$K & $2.4$K\\
$t_2$ & Mark & NY & $10102$& $42$K & $4.7$K\\
$t_3$ & Bob & NY & $13914$ & $93$K & $11.8$K\\
$t_4$ & Mary & NY & $10437$ & $58$K & $6.7$K\\
$t_5$ & Alice & NY & $10437$& $26$K & $2.1$K\\
$t_6$ & Julia & WA & $98112$ & $27$K & $1.4$K\\
$t_7$ & Jimmy & WA & $98112$ & $24$K & $1.6$K\\
$t_8$ & Sam & WA & $98112$ & $49$K & $6.8$K\\
$t_9$ & Jeff & WA & $98112$ & $56$K & $7.8$K\\
$t_{10}$ & Gary & WA & $98112$ & $50$K & $7.2$K\\
$t_{11}$ & Ron & WA & $98112$ & $58$K & $8$K\\
$t_{12}$ & Jennifer & WA & $98112$ & $61$K & $8.5$K\\
$t_{13}$ & Adam & WA & $98112$ & $20$K & $1$K\\
$t_{14}$ & Tim & IL & $62078$ & $39$K & $5$K\\
$t_{15}$ & Sarah & IL & $98112$ & $54$K & $5$K\\
\hline
\end{tabular}
\caption{Running example.\label{table:running_example}}
\vspace{-1em}
\end{table}
 
 \begin{example}\label{ex:example2}
Consider again the database of Table~\ref{table:running_example} and the DC of Example~\ref{ex:example1} (i.e., $\phi_1=\forall t,t' \neg(t[\att{State}]=t'[\att{State}] \wedge t[\att{Income}]>t'[\att{Income}] \wedge t[\att{Tax}] \le t'[\att{Tax}])$). Two out of two hundred and ten pairs of tuples (i.e., $0.95\%$) violate this DC (note that $\langle t,t' \rangle$ and $\langle t',t \rangle$ are considered separately). The minimal number of tuples that should be removed from the database for the DC to hold is two (one of $t_6,t_7$ and one of $t_{14},t_{15}$); that is, $13.3\%$. Therefore, if we allow, for example, an exception rate of $5\%$, then $\phi$ will be an approximate DC according to the first definition, but it will not be an approximate DC according to the second one.  

Now, consider the DC $\phi_2=\forall t,t' \neg(t[\att{Zip}]=t'[\att{Zip}] \wedge t[\att{State}] \neq t'[\att{State}])$ (i.e., it cannot be the case that the same zip code appears for two different states). Sixteen out of two hundred and ten pairs of tuples (i.e., $7.62\%$) violate the DC (every pair of tuples that includes $t_{15}$ and one of $t_6,\dots,t_{13}$). The only tuple that needs to be removed from the database for the DC to be satisfied is $t_{15}$; thus, it is possible to remove at most $6.67\%$ of the tuples.
In this case, if the allowed exception rate is $7\%$, then $\phi_2$ is an approximate DC according to the second definition, but it is not an approximate DC according to the first one. Note that while the difference in the exception rate for these two definitions is very small here, this difference can be very significant in larger datasets.
 \end{example}

The main objective of this work is to gain a deeper understanding of ADCs and introduce a general framework for mining ADCs that takes the semantics (i.e., the approximation function) as an input. We introduce the algorithm $\algname{ADCMiner}$ for mining ADCs from data. The algorithm consists of four main components -- a predicate space generator, an evidence set constructor, an enumeration algorithm and a sampler. In summary, our main contributions in this paper are as follows:
%The algorithm consists of four main components -- a predicate space generator, an evidence set constructor, an enumeration algorithm and a sampler.  %Our main contribution is focused on \e{(1)}~a novel enumeration algorithm to generate ADCs; and \e{(2)}~a sampling technique that draws a random sample of tuples from the database to use for ADC discovery. Sampling, while cannot be used to mine exact DCs, allows us to efficiently return highly accurate results (w.r.t.~the approximation metric). The two components leverage the nature of ADCs and avoid the space explosion, which algorithms designed for exact valid DCs suffer from.
%
\begin{citemize}
\item We formally define the problem of approximate DC mining (Section~\ref{sec:problem}), and 
  we give a formal definition of a valid approximation function (Section~\ref{sec:functions})
  that is used to define ADCs. To the  best of our knowledge, we are the first to consider approximate
  constraint discovery that is not tied to a specific  approximation
  function, but rather to a general family of approximation functions,
  that captures, but is not limited to, commonly used approximation
  functions. 
  \item We introduce an algorithm for enumerating
  ADCs that takes the approximation function as input (Section~\ref{sec:algorithm}). 
 Our algorithm is a general algorithm for enumerating \e{minimal
    approximate hitting sets} that can even be used outside the scope of
  constraint discovery.
  
    \item 
    %As aforementioned, the most time consuming part of any DC discovery algorithm is constructing the evidence set. 
    %To tackle this challenge, 
    For efficiency, we propose a \e{sampling} scheme (Section~\ref{sec:sample}), and we address two fundamental problems: \e{(1)} how to estimate the number of violations of $\phi$ in $D$ from a sample; and \e{(2)} how to use this estimate to deduce the right threshold (or approximation function) to be used when enumerating the ADCs from the sample. Sampling, while cannot be used to mine exact DCs, allows us to efficiently return highly accurate results (w.r.t.~the approximation metric) by leveraging the nature of ADCs and avoiding the space explosion, which algorithms designed for exact valid DCs suffer from.
\end{citemize}

We experimentally evaluate our proposal (Section~\ref{sec:experiments}) and show that although it subsumes previously proposed approximation frameworks, we manage to achieve better efficiency. Our experiments also show that we can achieve high precision and recall from a relatively small sample, while reducing the time by as much as $90\%$.

\section{Related Work}
\ester{
We now discuss the relationship between our work and past work on mining DCs from data. Chu et al.~\cite{DBLP:journals/pvldb/ChuIP13} have introduced the first algorithms for mining DCs and ADCs from data ( \algname{FASTDC} and \algname{AFASTDC}, respectively). Their definition of an ADC is based on the fraction of tuple pairs violating the DC. The algorithm \algname{AFASTDC} is obtained from \algname{FASTDC} by modifying the base case of the algorithm; that is, they return a constraint if the fraction of tuple pairs violating it is smaller than some predefined threshold $\epsilon$, rather than when it is zero.
Their solution consists of two main parts. First, they generate a certain data structure, namely the evidence set, that we will formally define later on, and then they use the evidence set to generate all the (A)DCs. The first part has a very high computational cost, as it requires going over all tuple pairs in the database; hence, 
this algorithm may run for days on a database that consists of one million tuples~\cite{DBLP:journals/pvldb/ChuIP13}.

Pena et al.~\cite{DBLP:conf/dexa/PenaA18, DBLP:journals/pvldb/PenaAN19} significantly improved the running times of this part using bit-level operations, and  Position List Indexes (PLIs) that minimize the number of required tuple comparisons. Their focus was on improving the efficiency of the evidence set construction, and they did not modify the second part of the solution (that generates the ADCs) and adopted the definition of ADCs used by Chu et al.~\cite{DBLP:journals/pvldb/ChuIP13}. Our work is complementary to that of Pena et al.~\cite{DBLP:conf/dexa/PenaA18,DBLP:journals/pvldb/PenaAN19} as we focus on other aspects of ADC discovery. In particular, we do not propose a new method to construct the evidence set, but rather use the algorithm of Pena et al.~\cite{DBLP:journals/pvldb/PenaAN19} for this purpose.

Another related work is that of Bleifu{\ss} et al.~\cite{DBLP:journals/pvldb/BleifussKN17}, who introduced $\algname{Hydra}$---an algorithm that significantly improves the running times of DC discovery by incorporating sampling to invalidate candidates. However, their algorithm only works for valid exact DCs, and, as stated by the authors, it is not clear whether and how their approach can be generalized to ADCs.}
\def\athree{a_3}
\def\aiii{a_{\mathrm{iii}}}
\def\htwo{h_2}
\def\hthree{h_3}
\def\hii{h_{\mathrm{ii}}}
\def\mtwo{m_2}
\def\mi{m_\mathrm{i}}
\def\mone{m_1}
\def\nfour{n_4}

\section{Preliminaries}\label{sec:preliminaries}

\def\sabc{\signature_{\mathrm{rl}}}
\def\dabc{\depset_{\mathrm{rl}}}
\def\stk{\signature_{2\mathrm{fd}}}
\def\dtk{\depset_{2\mathrm{fd}}}
\def\stfd{\signature_{2\mathrm{r}}}
\def\dtfd{\depset_{2\mathrm{r}}}
\def\str{\signature_{\mathrm{tr}}}
\def\dtr{\depset_{\mathrm{tr}}}
\def\rtk{R_{2\mathrm{fd}}}
\def\rabc{R_{\mathrm{rl}}}
\def\rtfd{R_{2\mathrm{r}}}
\def\rtr{R_{\mathrm{tr}}}

\begin{table}[t]
  \small
    \centering
\begin{tabular}{|c|c|}
\hline
\textbf{Notation} & \textbf{Meaning}\\\hline\hline
$S_\varphi$ & The set of predicates in the DC $\varphi$\\\hline
$\pspace_R$ & The predicate space over the relation $R$\\\hline
$\sat(t,t')$ & The set of predicates satisfied by $\langle t,t'\rangle$\\\hline
$\evi(D)$ & The evidence set of the database $D$\\
\hline
\end{tabular}
\vspace{-0.5em}
\caption{\ester{Notation table.}\label{table:notation}}
\vspace{-1em}
\end{table}

We first present some basic terminology and notation that we use
throughout the paper.

%\paragraph*{Relational Signatures and Databases}
By $R(A_1,\dots,A_k)$ we denote a relation symbol $R$
with the attributes $A_1,\dots, A_k$.
 %We say that the \e{arity} of the relation $R(A_1,\dots,A_k)$ is $k$, as it containts $k$ attributes. 
 A \e{database} $D$ over a relation $R(A_1,\dots,A_k)$ is a finite set of \e{tuples} $(c_1,\dots,c_k)$ where each
$c_i$ is a constant. We denote by $t[A_i]$ the value of tuple $t$ in attribute $A_i$. 
%(i.e., $t[A_i]=c_i$).
%We may omit stating the relation $R$ of a database $D$ when $R$ is clear from the context or irrelevant. 
%We use capital letters from the beginning of the English alphabet (e.g., $A, B, C$) to denote individual attributes, and
%capital letters from the end of the English alphabet
%(e.g., $X, Y , Z$) to denote sets of attributes. 
%
%Let $\signature$ be a signature, $R$ a relation symbol of
%$\signature$, $I$ an instance of $\signature$, and $t$ be a tuple in
%$R^I$. We refer to the expression $R(t)$ as a \e{fact of $I$}. By a
%slight abuse of notation, we identify an instance $I$ with the set of
%its facts. For example, $R(t)\in I$ denotes that $t$ is a tuple in
%$R^I$.  As another example, $J\subseteq I$ means that
%$R^J\subseteq R^I$ for every relation symbol $R$ of $\signature$; in
%this case, we say that $J$ is \e{subinstance} of $I$.

%\paragraph*{Denial Constraints}
\def\rhorel{\mathbin{\rho}}

A \e{denial constraint} (DC for short) is an expression of the form $\forall x \neg(\omega(x)\wedge\psi(x))$, where $x$ is a sequence of variables, $\omega(x)$ is a conjunction of atomic formulas and $\psi(x)$ is a conjunction of comparisons between two variables in $x$. \ester{Following previous works on the problem of mining DCs~\cite{DBLP:journals/pvldb/ChuIP13,DBLP:journals/pvldb/BleifussKN17,DBLP:conf/dexa/PenaA18,DBLP:journals/pvldb/PenaAN19}}, we limit ourselves to DCs where $\omega(x)$ is a conjunction of precisely two atomic formulas over the same relation and the comparison operators are $\mathbb{B}=\set{=,\neq,>,<,\ge,\le}$. 

Let $R$ be a relation and let $D$ be database over $R$. The \e{predicate space} $\pspace_R$ from which DCs can be formed consists of predicates of the form $t[A]\rhorel t'[B]$, where $A$ and $B$ are attributes of $R$, and $\rho$ is a comparison operator from $\mathbb{B}$.  Throughout the paper, we will use the following notation for DCs: $\forall t,t' \neg(P_1,\dots,P_m)$, where each $P_i$ is a predicate from $\pspace_R$.
The \e{complement} of a predicate $t[A] \rhorel t'[B]$ is the predicate $\widehat{P}=t[A] \wrhorel t'[B]$, where $\widehat{\rho}$ is the complement operator of $\rho$ (e.g., the complement operator of $>$ is $\le$). The complement of a set $S=\set{P_1,\dots,P_m}$ of predicates is the set of predicates $\set{\widehat{P_1},\dots,\widehat{P_m}}$. We denote this set by $\widehat{S}$.

For a pair $\langle t,t'\rangle$ of tuples in a database $D$ over $R$, we denote by $\sat(t,t')$ the set of all predicates in $\pspace_R$ satisfied by $\langle t,t'\rangle$. We denote by $\evi(D)$ the set $\set{\sat(t,t')\mid t,t'\in D}$, which we refer to as the \e{evidence set}~\cite{DBLP:journals/pvldb/ChuIP13}. \ester{Throughout the paper we assume the bag semantics for $\evi(D)$, as the number of occurrences of each set in $\evi(D)$ is important. In practice, we store every set in $\evi(D)$ once, along with its number of occurrences.} We identify a DC $\phi$ with the set $S_\phi$ of
its predicates. A DC states that its predicates cannot be satisfied all at the same time. That is, a DC $\phi$ is satisfied by a tuple pair $\langle t,t'\rangle$ if at least one of the predicates $P\in S_\phi$ does not hold for $\langle t,t'\rangle$, or, equivalently, $\widehat{P}\in \sat(t,t')$.
A DC $\phi$ is \e{satisfied} by a database $D$ (denoted by $D\models\phi$) if it is satisfied by all pairs of tuples , and  \e{violated} otherwise. If a DC $\phi$ is satisfied by a database $D$, we say that it is a \e{valid} DC w.r.t.~$D$. 
%A DC $\phi$ is \e{trivial} if it is satisfied by all databases $D$; otherwise, it is \e{nontrivial}.

\begin{example}
Table~\ref{table:pspace} contains a subset of the predicate space $\pspace_R$ over the relation of our running example. We use the operations in $\set{<,\le,>,\ge}$ only for numeric attributes,  and  we only allow comparisons among attributes of the same type (i.e., two numeric or string attributes). For example, the predicate $t[\att{Name}]=t[\att{Income}]$ will not appear in $\pspace_R$. Among the predicates of Table~\ref{table:pspace}, the predicate set $\sat(t_2,t_5)$ of the tuples $t_2$ and $t_5$ of our running example will contain the predicates $t[\att{Name}]\neq t'[\att{Name}]$,
$t[\att{Income}]>t'[\att{Income}]$,
$t[\att{Income}]\ge t'[\att{Income}]$, and
$t[\att{Income}]>t'[\att{Tax}]$. The set $\sat(t_5,t_2)$ will also
contain the first two predicates, but it will not contain the other two
predicates; instead, $t[\att{Income}]<t'[\att{Income}]$ and
$t[\att{Income}]\le t'[\att{Income}]$ will appear in the set.
\end{example}

\ester{In principle, our solution could be extended to more general DCs. For example, we could relax the limitation on the number of atomic formulas, which will affect mainly the size of $\evi(D)$ (i.e., if we allow for $k$ atomic formulas, then $\evi(D)$ will contain a set $\sat(t_1,\dots,t_k)$ for each sequence $t_1,\dots,t_k$ of tuples in $D$, and each such set will consist of more predicates, as $t_1[A]=t_2[A]$ is different than $t_2[A]=t_3[A]$). We could also consider other types of predicates, such as $t[A]\rhorel (k\times t'[B])$, which will increase the size of the predicate space. However, such extensions will have a significant impact on the running times, and the trade-off between more general constraints and lower running times has to be taken into account. When we focus on the DCs considered in this paper, we are already able to discover many constraints that cannot be discovered using FD discovery methods. In our experiments, about $70\%$ of the discovered constraints cannot be expressed as FDs.}

\begin{table}[t]
  \small
    \centering
\begin{tabular}{|c c c c|}
\hline
$t[\att{Name}]=t'[\att{Name}]$ & & & $t[\att{Name}]\neq t'[\att{Name}]$\\
$t[\att{Income}]=t'[\att{Income}]$ & & & $t[\att{Income}]\neq t'[\att{Income}]$\\
$t[\att{Income}]>t'[\att{Income}]$ & & & $t[\att{Income}]\ge t'[\att{Income}]$\\
$t[\att{Income}]<t'[\att{Income}]$ & & & $t[\att{Income}]\le t'[\att{Income}]$\\
$t[\att{Income}]>t'[\att{Tax}]$ & & & $t[\att{Income}]\ge t'[\att{Tax}]$\\
$t[\att{Income}]<t'[\att{Tax}]$ & & & $t[\att{Income}]\le t'[\att{Tax}]$\\
\hline
\end{tabular}
\vspace{-0.5em}
\caption{A sample of the predicate space of our example.\label{table:pspace}}
\vspace{-1em}
\end{table}

\section{Problem and Solution Overview}\label{sec:problem}
In this section, we formally define the problem that we study in the paper and give an overview of our solution.

\subsection{Problem Definition}

We start by defining a \e{valid approximation function}.
Let $D$ be a database, and let $\phi$ be a DC.
Let $f$ be a function $f:(D,S_\phi)\rightarrow [0,1]$. We now define two properties of such a function $f$, namely, \e{Monotonicity} and \e{Indifference to Redundancy}.

\begin{definition}[Monotonicity]
A function $f:(D,S_\phi)\rightarrow [0,1]$ is monotonic if it holds that $f(D,S_{\phi})\le f(D,S_{\phi'})$ whenever $S_{\phi}\subset S_{\phi'}$.\qed
\end{definition}

Intuitively, monotonicity ensures that the more predicates a DC contains, the higher its score is, as the number of tuple pairs that satisfy the DC can only increase.
Monotonicity allows us to consider only minimal ADCs (i.e., ADCs that do not strictly contain any ADC), as it assures that whenever $\phi$ is an ADC, every $\phi'$ such that $S_\phi\subset S_{\phi'}$ is also an ADC. Hence, when returning only minimal ADCs $\phi$, we also implicitly provide the user with information on any $\phi'$ that can be obtained from $\phi$ by adding more predicates. For non-monotonic functions, on the other hand, it may be the case, for example, that for $\phi,\phi'$ and $\phi''$ such that $S_\phi\subset S_\phi'\subset S_\phi''$, the DCs $\phi$ and $\phi''$ are ADCs, while $\phi'$ is not. Thus, returning only $\phi$ will result is the loss of valuable information (that is, the fact that $\phi'$ is not an ADC), and it will be necessary to go over the entire space of possible ADCs to make sure that we return all of them.

%We now move on to the second property.

\begin{definition}[Indifference to Redundancy]
A function $f:(D,S_\phi)\rightarrow [0,1]$ is indifferent to redundancy if we have that $f(D,S_\phi)=f(D,S_\phi')$ whenever $S_\phi\subset S_{\phi'}$ and $\set{\langle t,t' \rangle\mid t,t'\in D,\set{t,t'}\models\phi}=\set{\langle t,t' \rangle\mid t,t'\in D,\set{t,t'}\models\phi'}$.\qed
\end{definition}

A function $f$ is indifferent to redundancy if adding more predicates to a DC $\phi$ without affecting the coverage, does not affect the score; that is, if two DCs $\phi$ and $\phi'$ such that $S_\phi\subset S_{\phi'}$ are satisfied by the exact same tuple pairs, then $f$ gives them the same score. While our algorithm for enumerating minimal ADCs could work for functions that do not satisfy indifference to redundancy, having this property allows us to significantly increase the algorithm efficiency by pruning the search tree early, as we explain in Section~\ref{sec:algorithm}. 

We now define valid approximation functions.

\begin{definition}[Valid Approximation Function]
A function $f:(D,S_\phi)\rightarrow [0,1]$ is a valid approximation function if it satisfies monotonicity and indifference to redundancy.\qed
\end{definition}

In the next section, we will show that this definition is quite general and captures commonly used approximation functions. Next, we give the formal definition of a minimal ADC.

\begin{definition}[Approximate Denial Constraint]
Let $D$ be a database, let $f$ be a valid approximation function, and let $\epsilon\ge 0$. Then, a DC $\phi$ is a minimal ADC if:
\begin{cenumerate}
    \item $1-f(D,S_\phi)\le\epsilon$, \e{and}
    \item no DC $\phi'$ s.t.~$S_{\phi'}\subset S_\phi$ satisfies $1-f(D,S_{\phi'})\le\epsilon$.\qed
\end{cenumerate}
\end{definition}

The intuition behind using valid approximation functions (i.e., combining the two properties) when considering ADCs is illustrated in the following example.

\begin{example}
Consider the following DCs:
\begin{align*}
\phi=&\forall t,t' \neg(t[A]<t'[A]\wedge t[A]\le t'[A])\\
  \phi'=&\forall t,t' \neg(t[A]<t'[A])
          \end{align*}
The DC $\phi'$ is satisfied by the exact same pairs of tuples from $D$ as $\phi$, since whenever a tuple pair satisfies the predicate $t[A]<t'[A]$ it also satisfies $t[A]\le t'[A]$. Intuitively, the DC $\phi'$ is minimal, while $\phi$ is not minimal, as there is no benefit in adding the predicate $t[A]\le t'[A]$ to the DC. For a monotonic function $f$, it will hold that $f(D,S_{\phi'})\le f(D,S_\phi)$; however, it may be the case that $1-f(D,S_\phi)\le\epsilon$, while $1-f(D,S_\phi')>\epsilon$, in which case we will return $\phi$ and not $\phi'$. The existence of the second property (i.e., indifference to redundancy) resolves this problem since, as aforementioned, the same pairs of tuples satisfy both DCs; thus, we have that $f(D,S_\phi)=f(D,S_{\phi'})$ and we will either return $\phi'$ (if $1-f(D,S_{\phi'})\le\epsilon$) or none of the DCs.
\end{example}

%\begin{definition}[Minimal Approximate Hitting Set]
%Let $K$ be a finite set of elements, and let $M$ be a set of subsets of $K$. In addition, let $f$ be a valid approximation function and let $\epsilon>0$. Then, $S\subset K$ is a minimal approximate hitting set of $M$ w.r.t~$f$ if it satisfies the following:
%\begin{enumerate}
%\item It holds that $1-f(M,S)<\epsilon$,
%\item there is no set $S'\subset K$ such that $S'\subset S$ and $1-f(M,S')<\epsilon$.
%\end{enumerate}
%\end{definition}

\ester{Finally, we define the problem that we study in this paper. 
\begin{problem}[ADC Mining Problem]
For a database $D$, an approximation function $f$, and a threshold $\epsilon\ge 0$,  generate all the nontrivial minimal ADCs for $D$ w.r.t.~$f$ and $\epsilon$.
\end{problem}
Since generating ADCs from the entire database may be very time consuming for large databases, we also consider the problem of discovering ADCs from a sample.}

\subsection{ADCMiner}

Our algorithm, $\algname{ADCMiner}$ is depicted in Figure~\ref{alg:adcminer}. The input to the algorithm consists of a database $D$ over a relation $R$, a valid approximation function $f$, and an approximation threshold $\epsilon\ge 0$. The following are the four main components of the algorithm.
\begin{cenumerate}
\item A {\it predicate space generator}, which builds the predicate space $\pspace_R$ for the given relation $R$. We use the algorithm of Chu et al.~\cite{DBLP:journals/pvldb/ChuIP13} for this task. The predicates in $\pspace_R$ may compare the same attribute in two different tuples (i.e., $t[A]\rhorel t'[A]$), two different attributes in the same tuple (i.e., $t[A]\rhorel t[B]$), or two different attributes in two tuples (i.e., $t[A]\rhorel t'[B]$). We allow comparing two attributes only if they have at least $30\%$ common values as in~\cite{DBLP:journals/pvldb/ChuIP13,DBLP:journals/pvldb/PenaAN19}. \ester{In principle, it is possible to compare attributes with less than 30\% common values; however, relaxing this requirement may also significantly increase the number of unuseful predicates (like $t_1[Age]\neq t_2[Zip]$). The experiments conducted by Chu et al.~\cite{DBLP:journals/pvldb/ChuIP13} have shown that requiring at least 30\% common values allows us to identify many of the comparable attributes, while avoiding a significant increase in the number of meaningless predicates.}

\item A {\it sampler}, which  draws a random sample $J$ of tuples from $D$. We provide a theoretical analysis of mining ADCs from a sample in Section~\ref{sec:sample} and experimentally evaluate the accuracy of the results obtained from a sample in Section~\ref{sec:experiments}.

\item An {\it evidence set generator}, which builds the evidence set from the sample $J$. In this paper, we use an existing algorithm for constructing the evidence set~\cite{DBLP:journals/pvldb/PenaAN19}.

\item An {\it enumeration algorithm}, which takes as input the sample $J$, the evidence set $\evi(J)$, the approximation function $f$ and the approximation threshold $\epsilon$ and enumerates all the minimal ADCs of $J$ w.r.t.~$f$ and $\epsilon$ (cf. Section~\ref{sec:algorithm}).
\end{cenumerate}

Note that ADCs allow exceptions by definition, and can be seen as DCs obtained from a sample, where the sample consists of the subset of tuples that jointly satisfy the DC. Hence, we are able to obtain good results from a sample, instead of using the whole database $D$. Our experimental evaluation shows that using a sample of $30\%-40\%$ of the tuples, we consistently obtain results with a high $F_1$ score (compared to  mining the whole database), while reducing the running time by as much as 90\%.

{
\begin{algseries}{t}{\label{alg:adcminer} An algorithm for discovering ADCs.}
\begin{insidealg}{ADCMiner}{$R$, $D$, $f$, $\epsilon$}
\STATE $\pspace_R=\algname{GeneratePSpace(R)}$
\STATE $J=\algname{Sample}(D)$
\STATE $\evi(J)=\algname{ConstructEvidence}(J)$
\STATE $\algname{ADCEnum}(J,\evi(J),\pspace_R, f,\epsilon)$
\end{insidealg}
%\vspace{-1em}
\end{algseries}
}

\section{Approximation Functions}\label{sec:functions}
In this section, we discuss three specific valid approximation functions. Kivinen et al.~\cite{Kivinen1992} introduced three definitions of approximate FDs, based on three different measures, which can be easily generalized to DCs. We start by discussing each one of these measures and the corresponding approximation functions. 
%Then, for each one of them, we prove that it is a valid approximation function.

\ester{
Let $D$ be a database and let $\varphi$ be a DC. The first measure proposed by Kivinen et al.~\cite{Kivinen1992} (denoted by $g_1$) is based on the proportion of tuple pairs violating the constraint. Formally, we define the following approximation function based on this measure:
\[f_1(D,S_\phi)=\left|\set{\langle t,t' \rangle\mid t,t'\in D, \{t,t'\}\models\phi}\right|/|D|^2\]
Note that in our definition we count the pairs satisfying the constraint; hence, we have that $g_1(D,\varphi)=1-f(D,S_\phi)$.
Intuitively, $f_1(D,S_\phi)$ is the probability to select a satisfying tuple pair among all pairs, assuming a uniform distribution of the violations.
This measure has been used in~\cite{DBLP:journals/pvldb/ChuIP13} and ~\cite{DBLP:conf/dexa/PenaA18,DBLP:journals/pvldb/PenaAN19} to define ADCs.
}

\ester{
The second measure in~\cite{Kivinen1992}, denoted by $g_2$, is based on the proportion of ``problematic'' tuples (i.e., tuples that are involved in a violation of the constraint). Here, we define the following approximation function:
\[f_2(D,S_\phi)=\left|\set{t\mid t\in D, \not\exists t'\in D, \set{t,t'}\not\models \phi}\right|/|D|\]
}
%\ester{
%\begin{align*}
%&g_2(D,\psi)=\left|\set{t\mid t\in D, \exists t'\in D(\set{t,t'}\not\models\psi)}\right|)/|D|
%\end{align*}}
%We can again easily generalize this measure to DCs, using the following approximation function.
%\[f_2(D,S_\phi)=\left|\set{t\mid t\in D, \not\exists t'\in D, \set{t,t'}\not\models \phi}\right|/|D|\]
%\ester{That is, the value $f_2(D,S_\phi)$ is the number of tuples that satisfy the DC with \e{every} other tuple in the database (or in other words, the number of tuples that do not participate in a violation of the DC), divided by the total number of tuples.}
\ester{
Again, we have that $g_2(D,\phi)=1-f_2(D,S_\phi)$. If we consider an inconsistent database $D$, it may be the case that only one tuple contains errors, but every pair of tuples that includes this tuple violates the DC $\phi$. In this case, it holds that $f_2(D,S_\phi)=0$, as all the tuples appear in one violating pair. However, if we just remove this one tuple, the DC will hold. Thus, this measure may be too sensitive, and the last measure ($g_3$) proposed by Kivinen et al.~\cite{Kivinen1992}, that is based on the minimal number of tuples to remove from the database for the constraint to hold,
%(which is also the most common definition of approximate (C)FDs~\cite{DBLP:journals/cj/HuhtalaKPT99,DBLP:journals/cbm/CombiMSSAMP15,DBLP:conf/apweb/LiLCJY16,DBLP:journals/pvldb/ChiangM08}) 
seems to be a better fit in this case. Hence, we introduce the following approximation function.
%\ester{
%\begin{align*}
%&G_3(D,S_\phi)=(|D|-\max_{D'}\set{|D'|\mid D'\subseteq D, D'\models \psi})/|D|
%\end{align*}}
%For DCs, we use the following corresponding approximation function.
\[f_3(D,S_\phi)=\max_{D'}\set{|D'|\mid D'\subseteq D, D'\models \phi}/|D|\]
That is, the value $f_3(D,S_\phi)$ (or, equivalently, $1-g_3(D,\phi)$) is the size of a \e{cardinality repair}~\cite{DBLP:conf/icdt/LopatenkoB07} of $D$ (i.e., the largest subinstance of $D$ among all those satisfying the DC). The subinstance $D'$ considered in this function can also be seen as a Most Probable Database~\cite{Gribkoff2014TheMP} in the framework of tuple independent probabilistic databases. This approximation function has been used in many works on approximate (C)FDs~\cite{DBLP:journals/cj/HuhtalaKPT99,DBLP:journals/cbm/CombiMSSAMP15,DBLP:conf/apweb/LiLCJY16,DBLP:journals/pvldb/ChiangM08}.}
%given the FD $X\rightarrow Y$, the subinstance $D'$ is the subinstance that has the highest probability among all the subinstances of $D$ that satisfy the FD, assuming that all tuples have the same probability $p\ge 0.5$. 

%This measure can also be interpreted as the probability of selecting a tuple that contains errors affecting the constraint, assuming a uniform distribution. The following version of the measure can be used for DCs.

%Next, we give the formal definition of an approximate DC that we consider in this paper.
%
%\begin{definition}[Approximate DC]
%Let $\signature$ be a schema and let $I$ be an instance of $\signature$. Let $\epsilon>0$. An $m_i^\epsilon$-approximate DC is a nontrivial minimal DC $\phi$, such that $m_i(\phi)<\epsilon$.
%\end{definition}
%
%We say that an approximate DC $\phi$ is \e{minimal} if no other DC $\phi'$ such that $\varpi(\phi')\subset\varpi(\phi)$ satisfies $m_i(\phi')<\epsilon$. Finally, we define the problem that we are studying in this paper. Let $\signature$ be a schema and let $I$ be an instance of $\signature$. Let $\epsilon>0$. The \e{mining problem} for approximate DCs is to find all the $m_i^\epsilon$-approximate DCs for $i\in\set{1,2,3,4}$.

%\subsection{Properties and Relationships}

We now prove that the functions $f_1$, $f_2$ and $f_3$ satisfy both monotonicity and indifference to redundancy.

\begin{proposition}
The functions $f_1,f_2$, and $f_3$ are monotonic.
%For each $i\in\set{1,2,3}$, the function $f_i$ is monotonic.
\end{proposition}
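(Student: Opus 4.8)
The plan is to reduce all three cases to a single pair-level monotonicity observation and then read off each inequality directly from the corresponding formula. First I would establish the key lemma: whenever $S_\phi \subseteq S_{\phi'}$, every tuple pair that satisfies $\phi$ also satisfies $\phi'$. This is immediate from the DC semantics recalled in the preliminaries: a pair $\langle t,t'\rangle$ satisfies $\phi$ precisely when some predicate $P \in S_\phi$ fails on it, i.e.\ $\widehat{P}\in\sat(t,t')$; since $S_\phi \subseteq S_{\phi'}$, that same $P$ lies in $S_{\phi'}$ and witnesses satisfaction of $\phi'$. Consequently $\{\langle t,t'\rangle \mid t,t'\in D, \{t,t'\}\models\phi\} \subseteq \{\langle t,t'\rangle \mid t,t'\in D, \{t,t'\}\models\phi'\}$. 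This containment (which in particular covers the strict case $S_\phi \subset S_{\phi'}$ in the definition of monotonicity) is the only nontrivial ingredient; each of the three inequalities is then a formal consequence of how the satisfying pairs are aggregated.

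For $f_1$ the inequality is immediate: the numerator counts exactly the satisfying pairs, the denominator $|D|^2$ is independent of the DC, and the pair-level containment shows the numerator can only grow when we pass from $S_\phi$ to $S_{\phi'}$. For $f_3$ I would argue one level up, at the granularity of subinstances. If $D'\subseteq D$ satisfies $\phi$, then every pair of tuples drawn from $D'$ satisfies $\phi$, hence satisfies $\phi'$ by the lemma, so $D'\models\phi'$. Thus the family $\{D'\subseteq D\mid D'\models\phi\}$ over which we maximize for $\phi$ is contained in the corresponding family for $\phi'$, so the maximum cardinality can only increase; dividing by the fixed $|D|$ preserves the inequality and yields $f_3(D,S_\phi)\le f_3(D,S_{\phi'})$.

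The case of $f_2$ requires tracking the quantifier structure but is no harder. A tuple $t$ is counted by $f_2(D,S_\phi)$ exactly when it is not involved in any violation, that is, $\{t,t'\}\models\phi$ for every $t'\in D$. If $t$ has this property for $\phi$, then by the lemma $\{t,t'\}\models\phi'$ holds for every $t'$ as well, so $t$ is also counted for $\phi'$; hence the set of non-problematic tuples only grows and $f_2(D,S_\phi)\le f_2(D,S_{\phi'})$. I do not anticipate a genuine obstacle: the whole argument rests on the single monotonicity-at-the-pair-level lemma, and the only point demanding care is to phrase that lemma so that it feeds uniformly into the three different aggregations---a ratio of satisfying pairs for $f_1$, a count of universally safe tuples for $f_2$, and a maximum over satisfying subinstances for $f_3$.
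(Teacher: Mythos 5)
Your proof is correct and follows essentially the same route as the paper's: the paper likewise fixes the denominators, argues that adding predicates can only enlarge the set of satisfying pairs (handling $f_1$ and $f_2$ directly), and for $f_3$ observes that a maximum satisfying subinstance for $\phi$ still satisfies any superset DC $\phi'$. Your version merely makes explicit, as a stated lemma with a one-line semantic proof, the pair-level monotonicity fact that the paper treats as obvious.
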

\begin{proof}
The denominator does not depend on $\phi$ in any of the three functions; hence, monotonicity only depends on the numerator.
Clearly, the function $f_1$ is monotonic, as adding more predicates to $\phi$ can only increase the number of tuple pairs that satisfy the DC. For that same reason, the number of tuples $t\in D$ for which we have that for every $t'\in D$ both $\langle t,t'\rangle$ and $\langle t',t\rangle$ satisfy $\phi$ can only increase, and the function $f_2$ is also monotonic. Finally, we prove that $f_3$ is monotonic. Let $D'$ be a subinstance of $D$ such that $D'\models\phi$ and there is no other subinstance $D''$ of $D$ that also satisfies this property such that $|D''|>|D'|$. Clearly, for each $\phi'$ such that $S_\phi\subseteq S_{\phi'}$ it holds that $D'\models\phi'$ as well. Thus, $D'$ also satisfies the condition in the numerator of $f_3$ for $\phi'$ (although $D'$ is not necessarily maximal in this case), and the value $f_3(D,S_{\phi'})$ cannot be lower than $f_3(D,S_\phi)$.
\end{proof}

\begin{proposition}
The functions $f_1,f_2$, and $f_3$ are indifferent to redundancy.
\end{proposition}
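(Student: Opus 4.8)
The plan is to observe that each of the three functions depends on the DC $\phi$ \emph{only} through the set of tuple pairs it declares satisfying, and never through the syntactic predicate set $S_\phi$ itself. Concretely, I would introduce the notation
\[P_\phi\eqdef\set{\langle t,t'\rangle\mid t,t'\in D,\ \set{t,t'}\models\phi}\,,\]
so that the hypothesis of indifference to redundancy reads exactly $P_\phi=P_{\phi'}$. The entire argument then reduces to checking, for each $i\in\set{1,2,3}$, that $f_i(D,S_\phi)$ is a function of the pair $(D,P_\phi)$ alone; once this ``factoring through $P_\phi$'' is established, the equality $P_\phi=P_{\phi'}$ forces $f_i(D,S_\phi)=f_i(D,S_{\phi'})$ with no further work.

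I would then dispatch the three cases in turn. For $f_1$ this is immediate, since its numerator is literally $|P_\phi|$ and its denominator $|D|^2$ does not mention $\phi$. For $f_2$, the point is that a tuple $t$ contributes to the numerator precisely when there is no $t'$ with $\set{t,t'}\not\models\phi$, i.e.\ when $\langle t,t'\rangle\in P_\phi$ for every $t'\in D$; hence the set of contributing tuples, and therefore $f_2(D,S_\phi)$, is determined by $P_\phi$. For $f_3$, the key observation is that a subinstance $D'\subseteq D$ satisfies $\phi$ if and only if $\langle t,t'\rangle\in P_\phi$ for every $t,t'\in D'$, so the feasible family $\set{D'\subseteq D\mid D'\models\phi}$ and hence its maximum cardinality depend only on $P_\phi$.

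I do not expect a genuine obstacle here: all three measures are ``coverage-based'' and the statement is essentially the recognition that each factors through $P_\phi$. The only step deserving an explicit line is $f_3$, which reasons about subinstances rather than directly about pairs; there I would spell out that membership of $D'$ in the feasible family is equivalent to all of its internal pairs lying in $P_\phi$ (this is just the definition of DC satisfaction restricted to $D'$), rather than leaving that equivalence implicit.
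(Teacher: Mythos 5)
Your proposal is correct and follows essentially the same route as the paper's proof: in both cases the argument is that each of $f_1$, $f_2$, and $f_3$ is determined entirely by the set of satisfying (equivalently, violating) tuple pairs, with the only nontrivial step being the observation that a subinstance $D'$ satisfies a DC exactly when all of its internal pairs do, so the feasible family in $f_3$ coincides for $\phi$ and $\phi'$. Your explicit ``factoring through $P_\phi$'' framing is just a cleaner packaging of the same three case checks the paper performs.
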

\begin{proof}
The fact that this property is satisfied by $f_1$ and $f_2$ is rather straightforward. If the same tuple pairs satisfy both $\phi$ and $\phi'$, then clearly the function $f_1$ that counts such pairs assigns the same value to both DCs. This also implies that the tuples involved in violations of both DCs are exactly the same, which means that $f_2(D,S_\phi)=f_2(D,S_{\phi'})$ as well. To prove indifference to redundancy for $f_3$, we will show that every subinstance $D'$ of $D$ satisfies $\phi$ if and only it satisfies $\phi'$. This holds since every subinstance $D'$ satisfying one of these DCs does not contain any pair of tuples from $D$ that jointly violate the DC, and since the exact same pairs of tuples from $D$ violate both DCs, it means that it does not contain any tuple pair violating the other DC.
\end{proof}

We also prove the following result regarding the relationships between the functions $f_2$, $f_3$ and the function $f_1$. As will be seen in the next section, throughout the algorithm we always keep track of the sets in $\evi(D)$ that have an empty intersection with $\widehat{S_\phi}$; hence, we can compute the function $f_1$ faster than computing $f_2$ or $f_3$. The next proposition allows us to reduce the number of times we are required to compute $f_2$ or $f_3$ using the function $f_1$.

\begin{proposition}\label{prop:f3-f1}
Let $D$ be a database, $\phi$ a DC, and $\epsilon\ge 0$. For $i\in\set{2,3}$, if $1-f_i(D,S_\phi)\le\epsilon$ then $1-f_1(D,S_\phi)\le 2\epsilon$.
\end{proposition}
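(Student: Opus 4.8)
The plan is to isolate a single counting fact that covers both cases at once. The common feature of the hypotheses for $i=2$ and $i=3$ is that each guarantees a small set of tuples whose removal eliminates all violations of $\phi$. So first I would prove the following claim: if $B\subseteq D$ is any set with $D\setminus B\models\phi$, then the number of \emph{ordered} pairs $\langle t,t'\rangle$ with $\set{t,t'}\not\models\phi$ is at most $2|D|\cdot|B|$. Indeed, every violating pair $\langle t,t'\rangle$ must have at least one endpoint in $B$, since otherwise $t,t'\in D\setminus B$ and that subinstance would contain a violation, contradicting $D\setminus B\models\phi$. Writing $n=|D|$, the number of ordered pairs over $D$ with at least one endpoint in $B$ is exactly $n^2-(n-|B|)^2=2n|B|-|B|^2\le 2n|B|$, which gives the claim.

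Next I would supply the appropriate set $B$ in each case and bound its size by $\epsilon n$. For $i=3$ this is immediate: by definition $n\cdot f_3(D,S_\phi)$ is the size of a largest satisfying subinstance, so there is a set $R$ of $n-n\,f_3(D,S_\phi)=n(1-f_3(D,S_\phi))\le\epsilon n$ tuples with $D\setminus R\models\phi$; take $B=R$. For $i=2$, let $P=\set{t\mid t\in D,\ \exists t'\in D,\ \set{t,t'}\not\models\phi}$ be the set of problematic tuples, so that $|P|=n(1-f_2(D,S_\phi))\le\epsilon n$. Here I would observe that $D\setminus P\models\phi$: any two tuples lying outside $P$ are, by definition, involved in no violation, hence do not jointly violate $\phi$, so taking $B=P$ is legitimate. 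In either case $|B|\le\epsilon n$, and combining with the claim yields $\left|\set{\langle t,t'\rangle\mid\set{t,t'}\not\models\phi}\right|\le 2n\cdot\epsilon n=2\epsilon n^2$; dividing by $n^2=|D|^2$ gives $1-f_1(D,S_\phi)\le 2\epsilon$, as required.

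I do not expect a serious obstacle in this proof; it is essentially one inclusion plus one inequality between binomials. The two points that require care are both bookkeeping. First, I must respect that $f_1$ is normalized by $|D|^2$ and counts ordered pairs — including both orders and the diagonal $t=t'$ — and align the count $2n|B|-|B|^2$ with exactly that convention, since a different convention would change the constant $2$. Second, for $i=2$ the only genuine content is the observation that discarding \emph{all} problematic tuples leaves a consistent instance (which is also the reason $f_2\le f_3$), and this is what licenses using $P$ as the removal set; I would state this step explicitly rather than leave it to the reader.
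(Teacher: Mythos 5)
Your proof is correct and is essentially the paper's own argument: both bound the number of violating ordered pairs by noting that every such pair must touch a small set of tuples (the problematic tuples for $f_2$; the complement of a maximum satisfying subinstance for $f_3$), and that each tuple lies in at most roughly $2|D|$ ordered pairs, giving at most $2\epsilon|D|^2$ violating pairs. Your packaging of the count as a single deletion-set lemma, with the $f_2$ case reduced to it via the observation $D\setminus P\models\phi$, is only a cosmetic unification of what the paper does case by case (phrased there in terms of the evidence set), and if anything your version is slightly more careful about aligning the count with the $|D|^2$ normalization of $f_1$.
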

\begin{proof}
The evidence set $\evi(D)$ contains \ester{$2(|D|-1)$} sets for every tuple $t\in D$ (two sets, $\sat(t,t')$ and $\sat(t',t)$, for every tuple $t'\in D$). If $1-f_2(D,S_\phi)\le\epsilon$, then at most $\epsilon|D|$ tuples appear in a violating pair. Thus, the number of violating pairs is at most \ester{$2\epsilon|D|(|D|-1)$}, which is exactly $2\epsilon$ of the tuple pairs. We conclude that $1-f_1(D,S_\phi)\le 2\epsilon$. As for the function $f_3$, when we remove a tuple from $D$, we remove \ester{$2(|D|-1)$} sets from $\evi(D)$. If $1-f_3(D,S_\phi)\le\epsilon$, then there is a subinstance $D'$ of $D$ that is obtained by removing at most $\epsilon|D|$ tuples from $D$ such that $D'\models\phi$. This observation implies that $\evi(D')$ contains every set in $\evi(D)$ except for at most \ester{$2\epsilon|D|(|D|-1)$} sets. Since $D'$ satisfies $\phi$, at most \ester{$2\epsilon|D|(|D|-1)$} pairs violate $\phi$, which is at most $2\epsilon$ of the tuple pairs, and again we have that $1-f_1(D,S_\phi)\le 2\epsilon$.
\end{proof}

Finally, we discuss the computational complexity of the three functions. Unlike the functions $f_1$ and $f_2$ that can be computed in polynomial time for both FDs and DCs, the function $f_3$ can be computed in polynomial time for FDs~\cite{DBLP:conf/pods/LivshitsKR18}, but not for DCs. %Lopatenko et al.~\cite{DBLP:conf/icdt/LopatenkoB07} have shown that computing the minimum number of tuples to remove for a DC to hold is complete for $FP^{NP(log(n))}$ for a DC involving three relations. 
Livshits et al.~\cite{DBLP:journals/corr/abs-1904-06492} have shown that this problem is NP-hard even when considering simple DCs over a single relation symbol (e.g., the DC $\forall t,t'\neg(t[A]\neq t'[B])$). Hence, we cannot efficiently compute $f_3$. However, there is a simple reduction from the problem of computing $1-f_3(D,S_\varphi)$ to the minimum vertex cover problem (where the goal is to find a minimal set of vertices that intersects with all the edges), based on the concept of a \e{conflict graph}, 
in which vertices represent tuples and edges represent violations. Since vertex cover is 2-approximable in polynomial
time~\cite{DBLP:journals/jal/Bar-YehudaE81}, this is also the case for our problem. Thus, to generate ADCs w.r.t.~$f_3$ we could use the $2$-approximation algorithm with the threshold $2\epsilon$. Note that we will return all ADCs, but we may also return some DCs for which it holds that $1-f_3(D,S_\phi)\le 2\epsilon$ but $1-f_3(D,S_\phi)>\epsilon$.

{
\begin{algseries}{t}{\label{alg:f3_appx} A greedy algorithm replacing $f_3$.}
\begin{insidealg}{\ester{GreedyF3}}{$D,S_\varphi,\vios,\epsilon$}
\STATE $(T,v)=\algname{SortTuples(D,S_\varphi,\vios)}$
\STATE $u=\left|S\in\evi(D)\mid S\cap S_\varphi=\emptyset\right|$
\STATE $c=0$, $R=\emptyset$
%\STATE 
\WHILE{$c<u$}
\STATE let $t$ be the first tuple in $T$
\STATE $c=c+v(t)$
\STATE remove $t$ from $T$ and add it to $R$
\ENDWHILE
\RETURN $\left(|R|/|D|\le\epsilon\right)$
\end{insidealg}
\begin{insidesub}{SortTuples}{$D,S_\varphi,\vios$}
\STATE $v(t)=0$ for all $t\in D$
\FORALL{$S\in\evi(D)$ such that $S\cap S_\varphi=\emptyset$}
\FORALL{$t\in\vios[S]$}
\STATE $v(t)=v(t)+\vios[S][t]$
\ENDFOR
\ENDFOR
\RETURN $(\mbox{tuples of } D \mbox{ in descending order of } v(t), v(t))$
\end{insidesub}
%\vspace{-1em}
\end{algseries}
}

\ester{In practice, the 2-approximation algorithms for minimum vertex cover assume an explicit representation of the graph. In our case, this requires storing, for every set $S$ in $\evi(D)$, all pairs $\langle t,t'\rangle $ of tuples such that $\sat(t,t')=S$. As the number of tuple pairs is quadratic in the size of the database, storing this information with reasonable memory usage is infeasible for large databases. Hence, in our experimental evaluation, we implement a greedy algorithm (depicted in Figure~\ref{alg:f3_appx}) instead. This greedy algorithm is inspired by the greedy $O(\log n)$-approximation algorithm for minimum vertex cover, that, in each iteration, selects a vertex that is adjacent to the maximal number of uncovered edges, and then marks each one of these edges as covered. However, our algorithm does not require an explicit representation of the graph; hence, we do not know which edges are covered. While we do not provide any theoretical guarantees on the result of this algorithm, our experimental evaluation shows that using this algorithm we often obtain more accurate results than the ones obtained using the function $f_2$.

In the algorithm, we sort the tuples in descending order according to the number of violations they participate in. For that, we use the data structure $\vios$ that stores, for every set $S\in\evi(D)$ and tuple $t\in D$, the number of violations of type $S$ that $t$ is involved in (that is, the number of tuple pairs $\langle t_1,t_2\rangle$ such that $\sat(t_1,t_2)=S$ and either $t_1=t$ or $t_2=t$). Then, we start selecting these tuples, one by one, while recording the change to the number of violations covered by the selected tuples. That is, with every tuple that we select, we add the number of violations it participates in to the number of covered violations $c$. We stop this process when the number of covered violations $c$ is at least the number of total violations $u$. The number of covered violations can be higher than the number of total violations, as if two tuples $t,t'$ jointly violate the DC and are both added to the result, we count this violation twice. Finally, we return the DC if the ratio between the number of tuples in the result and the total number of tuples is lower than the threshold. 

The most time consuming part of the algorithm is the subroutine \algname{SortTuples}; hence, the time complexity is $O(|D|\cdot n)$ where $n$ is the number of \e{distinct} sets in $\evi(D)$ (recall that we treat $\evi(D)$ as a bag), and the space complexity, which depends on the size of $\vios$, is the same. In all of our experiments, the number of distinct sets in $\evi(D)$ is orders of magnitude smaller than the number of tuple pairs; hence, storing this data structure requires significantly less space than storing data for every pair of tuples.}

\section{Enumeration Algorithm}\label{sec:algorithm}

In this section, we introduce an algorithm for enumerating minimal ADCs. Following Chu et al.~\cite{DBLP:journals/pvldb/ChuIP13}, we reduce our problem to that of enumerating \e{minimal approximate hitting sets}. The hitting set problem is the following: given a finite set $K$ and a family $M$ of subsets of $K$, find all subsets of $K$ that intersect every one of the subsets in $M$. A subset $F$ is a \e{minimal} hitting set if no proper subset of $F$ is a hitting set. As mentioned in the preliminaries, a pair $\langle t,t'\rangle$ of tuples satisfies a DC $\phi$ if $\widehat{P}\in\sat(t,t')$ for some $P\in S_\phi$. Hence, it is rather straightforward that $\phi$ is a valid DC if $\widehat{S_\phi}$ is a hitting set of $\evi(D)$. Note that the other direction does not necessarily hold, as a hitting set may not correspond to a nontrivial DC. For example, the set $\set{t[A]=t'[A],t[A]\neq t'[A]}$ is clearly a hitting set of $\evi(D)$, but the corresponding DC is trivial. Hence, the reduction is essentially to the hitting set problem with restrictions rather than the general hitting set problem.

Although the complexity of enumerating minimal hitting sets or, equivalently, hypergraph transversals 
is still an open problem (after decades of research), many algorithms have been proposed for this task (see~\cite{DBLP:journals/siamdm/Gainer-DewarV17} for a survey). Yet, to the best of our knowledge, the problem of enumerating minimal \e{approximate} hitting sets has not received much attention. \ester{Here, we refer to a set $F\subseteq K$ that satisfies $1-f(M,F)\le\epsilon$ for a given valid approximation function $f$ and a threshold $\epsilon$ as an approximate hitting set.} Researches typically refer to one of two problems as computing approximate hitting sets: \e{(1)} enumerating hitting sets, but not necessarily all of them (and not necessarily minimal)~\cite{DBLP:conf/sara/AbreuG09,DBLP:conf/musepat/CardosoA13,DBLP:conf/cla/NourineQT15}, \e{and (2)} computing an approximate hitting set of minimum cardinality~\cite{DBLP:journals/ijar/VinterboO00,DBLP:conf/soda/ChandrasekaranKMV11,DBLP:journals/dam/BusMR18}. However, we focus on the problem of generating minimal approximate hitting sets for a given approximation function. Hence, we devise an algorithm for enumerating minimal approximate hitting sets, building upon an algorithm for enumerating minimal hitting sets by Murakami and Uno~\cite{DBLP:journals/dam/MurakamiU14}. In Section~\ref{sec:experiments}, we compare the performance of our algorithm to the discovery algorithm used in~\cite{DBLP:journals/pvldb/ChuIP13,DBLP:conf/dexa/PenaA18,DBLP:journals/pvldb/PenaAN19}, and show that even though our algorithm is more general, we are able to significantly reduce the running time.

%{
%\begin{algseries}{t}{\label{alg:fastdc} An algorithm for enumerating minimal valid DCs}
%\begin{insidealg}{FASTDC}{$I,\signature$}
%\STATE $\pspace=\algname{GeneratePredicates}(I,\signature)$ 
%\STATE $Evi_I=\algname{BuildEvidenceSet}(I,P)$ 
%\STATE $MHS=\algname{SearchMinimalCovers}(I,Evi_I,\emptyset,>_{init},\emptyset)$
%\FORALL {$S\in MHS$}
%\STATE $\Sigma = \Sigma \cup \set{\forall t_1,t_2 \neg(\overline{S})}$
%\ENDFOR
%\FORALL {$\phi\in \Sigma$}
%\IF {$(\Sigma\setminus\set{\phi})\models \phi$}
%\STATE $\Sigma = \Sigma\setminus\set{\phi}$
%\ENDIF
%\ENDFOR
%\end{insidealg}
%\end{algseries}
%}

\subsection{Enumerating Minimal Hitting Sets}
We now introduce the algorithm of Murakami and Uno~\cite{DBLP:journals/dam/MurakamiU14} for enumeraing minimal hitting sets. In the next subsection, we will explain how we adapt the algorithm to the approximation problem. 

{
\begin{algseries}{t}{\label{alg:mmcs} An algorithm for enumerating minimal hitting sets.}
%\begin{insidealg2}{MMCS}{$K,M$}{~\cite{DBLP:journals/dam/MurakamiU14}}
%\STATE $\algname{MMCSRecursive}(\emptyset,\emptyset,M,K)$
%\end{insidealg2}
\begin{insidealg2}{MMCS}{$S,\crit,\uncov,\cand$}{~\cite{DBLP:journals/dam/MurakamiU14}}
\IF {$\uncov = \emptyset$} 
\STATE output $S$
\STATE \textbf{return}
\ENDIF
\STATE choose a set $F$ from $\uncov$
\STATE $C=\cand\cap F$
\STATE $\cand=\cand\setminus C$
\FORALL{$e\in C$}
\STATE $\algname{UpdateCritUncov}(e,S,\crit,\uncov)$
\IF{$\crit[u]\neq\emptyset$ for each $u\in S$}
\STATE $\algname{MMCS}(S\cup\set{e},\crit,\uncov,\cand)$
\STATE $\cand=\cand\cup\set{e}$
\ENDIF
\STATE recover the changes to $\crit$ and $\uncov$ done in $8$
\ENDFOR
\STATE recover the change to $\cand$ done in $6$
\end{insidealg2}
\begin{insidesub}{UpdateCritUncov}{$e,S,\crit,\uncov$}
\FORALL{$F\in \uncov$}
\IF {$e\in F$}
\STATE $\crit[e]= \crit[e]\cup \set{F}$
\STATE $\uncov= \uncov\setminus \set{F}$
\ENDIF
\ENDFOR
\FORALL{$u\in S$}
\FORALL {$F\in \crit[u]$}
\IF {$e\in F$}
\STATE $\crit[u] = \crit[u]\setminus \set{F}$
\ENDIF
\ENDFOR
\ENDFOR
\end{insidesub}
%\vspace{-1em}
\end{algseries}
}

The algorithm is depicted in Figure~\ref{alg:mmcs}. The input consists of a set $K$ of elements and a set $M$ of subsets of $K$. Those are used to initialize three data structures, namely $\uncov$, $\cand$ and $\crit$, maintained by the algorithm. The algorithm is a recursive algorithm that builds the hitting sets incrementally. It starts with an empty set $S$, and adds elements to $S$ until it has a nonempty intersection with each one of the subsets in $M$; that is, until $S$ is a hitting set.
The data structure $\uncov$ stores the subsets in $M$ that are not yet covered, that is, have an empty intersection with the intermediate $S$. Since we start with an empty $S$, initially, $\uncov$ contains all the subsets in $M$. The second data structure, $\cand$, stores the elements of $K$ that can be added to $S$ in the next iterations of the algorithm. Initially, $\cand$ contains every element of $K$. Finally, $\crit$ stores, for each element $e$ in the intermediate $S$, all the subsets in $M$ for which $e$ is critical (i.e., all the subsets that contain $e$, but do not contain any other element of $S$). The importance of each one of these data structures will become clear soon.

At each iteration, the algorithm selects a subset $F$ from $\uncov$. %(i.e., a subset that has an empty intersection with the intermediate $S$). 
The goal is then to add at least one element of $F$ to $S$, so that the two sets have a nonempty intersection. In line~5 of the algorithm, we store the intersection of $F$ and $\cand$ in $C$. The set $C$ thus contains all the elements of $F$ that we are allowed to add to $S$. Then, every element of $F$ is removed from $\cand$. Some of these elements will be added back to $\cand$ later on, while some are permanently removed from this list. The idea is the following. Let $\set{e_1,\dots,e_n}$ be the set of elements in $C$. First, we add $e_1$ to $S$, and the other elements of $C$ still do not belong to $\cand$; hence, we are able to generate minimal hitting sets that contain $e_1$, but do not contain any other element of $C$. Then, we add $e_2$ to $S$ and we add $e_1$ to $\cand$ (if some condition holds, as we will explain later). Thus, we are now able to generate minimal hitting sets that contain only $e_2$, or contain both $e_2$ and $e_1$, but do not contain any other element of $C$. Then, we add $e_3$ to $S$ and both $e_1$ and $e_2$ appear in the list of candidates, and so on. This allows us to avoid generating the same hitting set twice, but it also allows us to prune branches in the search tree early on, as we now explain.

Observe that a set $S$ is a \e{minimal} hitting set only if \e{every} element of $S$ is critical to at least one subset. Thus, after adding an element $e$ of $F$ to $S$, the $\algname{UpdateCritUncov}$ subroutine is called. This subroutine updates the data structures in the following way: \e{(a)} every subset in $\uncov$ that contains $e$ is removed from $\uncov$, as it no longer has an empty intersection with $S$, \e{(b)} every subset that has been removed from $\uncov$ is added to the list of subsets for which $e$ is critical, as it does not contain any other element of $S$, and \e{(c)} for every element $u$ in $S$, and for every subset $F$ that belongs to the list of subsets for which $u$ is critical, $F$ is removed from this list if it contains $e$ (as it now contains other elements of $S$). 

The purpose of calling $\algname{UpdateCritUncov}$ is twofold. First, it updates the data structures after adding a new element to $S$. Second, it is used to prune branches in the search tree early on. In line~9 of the algorithm, after the call to the subroutine, the algorithm tests whether for every element of $S$, the list of subsets for which it is critical is nonempty. Otherwise, as explained above, this branch will never result in a minimal hitting set. Hence, if the test of line~9 fails, we recover all changes to $\crit$ and $\uncov$, and move on to the next element of $C$ in the iteration in line~7. Observe that in this case, the element $e$ is not added back to $\cand$ due to the observation that if an element is not critical for any subset w.r.t.~$S$, then it cannot be critical for any subset w.r.t.~a set $S'$ such that $S\subseteq S'$. If, on the other hand, the test of line~9 succeeds, then we add $e$ back to $\cand$; thus, it could be added to $S$ later on. Murakami and Uno~\cite{DBLP:journals/dam/MurakamiU14} proved the following about the algorithm $\algname{MMCS}$: \e{(a)} it returns only minimal hitting sets, \e{(b)} it returns all the minimal hitting sets, and \e{(c)} it returns each minimal hitting set once. \ester{Moreover, they have shown that the time complexity of the algorithm is $O({\left\lVert M \right\rVert})$ per iteration, where $\left\lVert M \right\rVert$ is the sum of sizes of sets in $M$. The same holds for the space complexity.}

\subsection{Enumerating Approximate Hitting Sets}

{
\begin{algseries}{t}{\label{alg:amhs} Enumerating minimal ADCs - main.}
%\begin{insidealg}{ADCEnum}{$D,M,K,f,\epsilon$}
%\STATE $\crit=\emptyset$
%\STATE $\uncov=M$
%\STATE $\cand=K$
%\FORALL{$e\in M$}
%\STATE $\canCover(e)=\val{true}$
%\ENDFOR
%\STATE $\algname{ADCEnum}(\emptyset,\crit,\uncov,\cand,\canCover)$
%\end{insidealg}
\begin{insidealg}{ADCEnum}{$S,\crit,\uncov,\cand,\canCover,f,\epsilon$}
\IF {$1-f(D,S)\le\epsilon$ and $\algname{IsMinimal}(S,f,\epsilon)$}
\STATE output DC from $S$
\STATE \textbf{return}
\ENDIF
\STATE choose a set $F\in\uncov$ s.t.~$\canCover[F]=\mbox{true}$
\IF {such a set $F$ does not exist}
\STATE \textbf{return}
\ENDIF
\STATE $\cand = \cand\setminus F$
\STATE $\algname{UpdateCanCover}(\uncov,\cand,\canCover)$
\IF{$\algname{WillCover}(S,\cand,f,\epsilon)$}
\STATE $\algname{ADCEnum}(S,\crit,\uncov,\cand,\canCover)$
\ENDIF
\STATE  recover the change to $\cand$ done in $7$
\STATE  recover the change to $\canCover$ done in $8$
\STATE $C=\cand\cap F$
\STATE $\cand=\cand\setminus C$
\FORALL{$e\in C$}
\STATE $\algname{UpdateCritUncov}(e,S,\crit,\uncov)$
\IF{$\crit[u]\neq\emptyset$ for each $u\in S$}
%\STATE $A=\algname{getPredicatesSameAttributes}(e)\cap \cand$
%\STATE $\cand=\cand\setminus A$
\STATE $\algname{RemoveRedundantPreds(e,\cand)}$
\STATE $\algname{ADCEnum}(S\cup\set{e},\crit,\uncov,\cand,\canCover)$
\STATE $\cand=\cand\cup\set{e}$
\ENDIF
\STATE recover the changes to $\crit$ and $\uncov$ done in $16$
\ENDFOR
\STATE recover the change to $\cand$ done in $14$
\end{insidealg}
\vspace{-1em}
\end{algseries}
}

{
\begin{algseries}{t}{\label{alg:amhs-sub} Enumerating minimal ADCs - subroutines.}
\begin{insidesub}{IsMinimal}{$S,f,\epsilon$}
\FORALL{$e\in S$}
\IF{$1-f(D,S\setminus\set{e})\le\epsilon$}
\STATE \textbf{return false}
\ENDIF
\ENDFOR
\STATE \textbf{return true}
\end{insidesub}
\begin{insidesub}{UpdateCanCover}{$\uncov,\cand,\canCover$}
\FORALL{$F\in \uncov$}
\FORALL{$e\in \cand$}
\IF {$e\in F$}
\STATE continue outer loop
\ENDIF
\ENDFOR
\STATE $\canCover[F]=\mbox{false}$
\ENDFOR
\end{insidesub}
\begin{insidesub}{WillCover}{$S,\cand,f,\epsilon$}
\STATE $S'=S\cup \cand$
\IF {$1-f(D,S')\le\epsilon$}
\STATE \textbf{return true}
\ENDIF
\STATE \textbf{return false}
\end{insidesub}
\vspace{-1em}
\end{algseries}
}

One may suggest to adapt the algorithm of Figure~\ref{alg:mmcs} to generate minimal approximate hitting sets by modifying the base case. Instead of stopping when all the subsets have a nonempty intersection with $S$, we will stop when our condition for minimal approximate hitting sets holds (i.e., when $1-f(D,S)\le\epsilon$ for the function $f$ and threshold $\epsilon$). It is straightforward that this will return only minimal approximate hitting sets w.r.t.~$f$ and $\epsilon$, but will it return all of them? The answer to this question is negative. The problem with this approach, which also applies to many other algorithms for enumerating minimal hitting sets~\cite{DBLP:journals/siamdm/Gainer-DewarV17}, is that when we select a new subset at each iteration and try to ``hit'' it, we define a certain order over the subsets. An easy observation is that we will never return a set that has an empty intersection with the first chosen subset, even if it has a nonempty intersection with any other subset.

Our algorithm $\algname{ADCEnum}$ for enumerating minimal ADCs is depicted in Figure~\ref{alg:amhs}. We modify the algorithm $\algname{MMCS}$ in the following way. First, we change the base case, as aforementioned; that is, we print $S$ only if $1-f(D,S)\le\epsilon$. However, we also have to explicitly check for minimality before printing $S$. This is due to the fact that while a set $S$ of elements where each $e\in S$ is critical for at least one subset of $M$ is guaranteed to be minimal when considering hitting sets, this is not the case when considering approximate hitting sets, as our $S$ is allowed to have an empty intersection with some subsets of $M$. \ester{Due to the indifference to redundancy property, this condition is still necessary when considering approximate hitting sets, since we can remove elements that are not critical for any subset without affecting the set of tuple pairs that have a non-empty intersection with $S$, and, consequently, without affecting the value of the approximation function.}
However, this condition is no longer sufficient. Therefore, we check whether $S$ is minimal in the $\algname{IsMinimal}$ subroutine, depicted in Figure~\ref{alg:amhs-sub}. There, we go over all sets $S'$ of elements obtained from $S$ by removing a single element, and for each $S'$ we check whether $1-f(D,S')\le\epsilon$. Recall that the approximation functions that we consider are monotonic; hence, if for a subset $S'$ of $S$ it holds that $1-f(D,S')>\epsilon$, then we have that $1-f(D,S'')>\epsilon$ for any $S''\subset S'$, and we do not need to go over the subsets of $S$ obtained by removing more than one element.

Next, we choose a subset $F\in\uncov$ and make two recursive calls -- one that ``hits'' the chosen $F$ (i.e., adds an element of $F$ to $S$) and one that does not. We start with the second one. Observe that our algorithm contains an additional data structure, namely $\canCover$. It is used for the additional recursive call and it contains a single value, true or false, for every subset of $M$. Initially, the value is true for all subsets. The idea is the following. Whenever we choose not to hit $F$, this set remains in $\uncov$. To avoid choosing it again in a future iteration of the algorithm (which may result in an infinite recursion), we update $\canCover[F]=\mbox{false}$ in the $\algname{UpdateCanBeCovered}$ subroutine depicted in Figure~\ref{alg:amhs-sub}. However, $F$ may not be the only subset in $\uncov$ that has an empty intersection with $\cand$ after removing all the elements of $F$ from $\cand$ in line~7. Hence, in this subroutine, we mark every subset of $M$ that is still in $\uncov$ and does not contain any element of $\cand$. This way, we avoid selecting these subsets in future iterations, which significantly reduces the number of unnecessary recursive calls. We make the recursive call after checking whether it can result in an approximate hitting set. We check that in the $\algname{WillCover}$ subroutine that adds all the elements of $\cand$ to $S$ and checks whether the result $S'$ satisfies $1-f(D,S')\le\epsilon$. If this is not the case, the monotonicity property ensures that this branch will never result in an approximate hitting set \ester{(since we cannot increase the value of the approximation function by adding less predicates)}, and we do not make the recursive call. 

The second recursive call (where we hit the selected $F$) is identical to the recursive call of the original algorithm and we do not explain it again here. \ester{Note that if we did not assume indifference to redundancy, we could not prune branches based on the $\crit$ data structure (line~17) as done in the original algorithm, since it could be the case that adding predicates that are not critical for any subset actually increases the value of the approximation function (while having no impact on the set of tuple pairs satisfying the DC)}.

While the algorithm of Figure~\ref{alg:amhs} can be used as a general algorithm for enumerating minimal approximate hitting sets, there are two aspects  that are specific to our setting. First, we do not return the hitting set $S$ itself, but the DC obtained from $S$; Second, before making the recursive call of line~19, and after adding an element $u$ to $S$, 
%(recall that in our case, $u$ is a predicate from the predicate space),
we remove from $\cand$ all the predicates that differ from $u$ only by the operator. This way, we avoid developing branches that will result in trivial DCs, such as $\forall t,t' \neg(t[A]<t'[A]\wedge t[A]\ge t'[A])$, and avoid developing some branches that will fail the minimality condition, such as $\forall t,t' \neg(t[A]<t'[A]\wedge t[A]\le t'[A])$ (\ester{this is again based on the assumption that the approximation function is indifferent to redundancy, and the addition of the predicate $t[A]\le t'[A]$ cannot affect the value of the approximation function on a set that already contains the predicate $t[A]< t'[A]$}).

Finally, to improve the running time of the algorithm, we do not select a random set $F$ in line~4. Murakami and Uno~\cite{DBLP:journals/dam/MurakamiU14} suggested to select the set that has the minimum size intersection with the candidate list. Doing so, we minimize the number of iterations in the loop of line~15, and decrease the number of recursive calls. The problem with this approach is that when we select such a set, we remove less predicates from the candidate list in line~7; thus, the chances of the condition of line~9 to be satisfied increase. Hence, while we decrease the number of recursive calls in line~20, we increase the number of recursive calls in line~10. In our implementation, we select the set that maximizes the intersection with the candidates list. Our experiments show that this choice decreases the running times, as the total number of recursive call decreases compared to the approach in~\cite{DBLP:journals/dam/MurakamiU14}.

\subsection{Proof of Correctness}

\ester{The correctness of $\algname{ADCEnum}$ is stated in the following theorem.}

\begin{theorem}
Let $D$ be a database. Let $f$ be a valid approximation function and let $\epsilon\ge 0$. Then, the following hold for $\algname{ADCEnum}$ w.r.t.~$f$ and $\epsilon$: (a) it returns only minimal ADCs of $D$, (b) it returns all the minimal ADCs of $D$, and (c) it returns every minimal ADC of $D$ once.
\end{theorem}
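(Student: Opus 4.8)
The plan is to recast everything through the hitting-set reformulation stated at the start of this section and prove the three claims in the order (a), (c), (b), with (b) carrying the real weight. Write $S$ for the partial set built by $\algname{ADCEnum}$, viewed as a set of predicates playing the role of $\widehat{S_\phi}$; then the tuple pairs satisfying the corresponding DC are exactly the sets of $\evi(D)$ that $S$ hits, so the quantity $f(D,S)$ appearing in the code is the score of that DC and ``$1-f(D,S)\le\epsilon$'' is exactly the condition that $S$ is an approximate hitting set. I would use the two axioms as follows: monotonicity says enlarging $S$ can only raise $f$ (equivalently, shrinking $S$ can only lower it), and indifference to redundancy says enlarging $S$ without changing which sets of $\evi(D)$ are hit leaves $f$ unchanged. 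The whole argument is an induction on the recursion tree, which is finite because $\cand$ strictly shrinks on every recursive call (line~7 removes $F\cap\cand\ne\emptyset$ before the first call, and line~14 together with moving $e$ into $S$ shrinks it before the second).

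For (a), a set is emitted only from the base case, i.e.\ only when $1-f(D,S)\le\epsilon$ and $\algname{IsMinimal}(S,f,\epsilon)$ returns true, so it suffices to show $\algname{IsMinimal}$ is correct. It tests only single-element removals $S\setminus\set{e}$, and this is enough by monotonicity: any proper subset $S'\subsetneq S$ is contained in some $S\setminus\set{e}$, whence $1-f(D,S')\ge 1-f(D,S\setminus\set{e})$; so if every single removal exceeds $\epsilon$ then no proper subset is an ADC, and conversely any ADC proper subset is witnessed by a single removal. For (c), the two recursive calls partition the search space: the first branch removes all of $F$ from $\cand$, so every set it produces misses $F$, while the second branch adds an element of $F$, so every set it produces hits $F$; these are mutually exclusive. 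Within the second branch the candidate bookkeeping inherited from $\algname{MMCS}$ (remove $C=\cand\cap F$, then reinsert the earlier elements $e_1,\dots,e_{j-1}$ before handling $e_j$) makes each produced set arise from the unique largest-index element of $C$ that it contains, so no set is produced twice; combined with inter-branch disjointness this gives uniqueness exactly as for $\algname{MMCS}$.

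The core is (b). I would prove the invariant: \emph{for every call $\algname{ADCEnum}(S,\crit,\uncov,\cand,\canCover)$ and every minimal ADC $T$ with $S\subseteq T\subseteq S\cup\cand$, the set $T$ is output in the subtree rooted at that call}; applied at the root ($S=\emptyset$, $\cand$ the whole predicate space) this is completeness. If $1-f(D,S)\le\epsilon$ then $S$ is already an ADC, so $T=S$ is the only candidate (any proper superset is a non-minimal ADC) and it is emitted precisely when minimal. Otherwise $S$ is not an ADC and I split on the chosen $F$. If no eligible $F$ exists, every uncovered set is disjoint from $\cand$, so $S\cup\cand$ hits the same sets as $S$; indifference to redundancy gives $f(D,S\cup\cand)=f(D,S)$, and monotonicity then makes every $T\subseteq S\cup\cand$ fail the threshold, so there is nothing to output. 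If an eligible $F$ is chosen, I distinguish $T\cap F=\emptyset$ and $T\cap F\ne\emptyset$. In the first case $T\subseteq S\cup(\cand\setminus F)$; since $T$ is an ADC, monotonicity forces $\algname{WillCover}$ to succeed, so the first recursive call fires and the inductive hypothesis applies with strictly smaller $\cand$. In the second case let $e_{j^\ast}$ be the largest-index element of $C\cap T$; in iteration $j^\ast$ we have $S\cup\set{e_{j^\ast}}\subseteq T\subseteq (S\cup\set{e_{j^\ast}})\cup\cand_{j^\ast}$ (because $T$ contains no later element of $C$), and the hypothesis applies provided the branch survives.

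The main obstacle, and the place where indifference to redundancy is indispensable, is to show that the two prunings inside the second branch never discard a live minimal ADC. For the criticality test (line~17): if some $u\in S\cup\set{e_{j^\ast}}$ is critical for no set of $\evi(D)$, then since $S\cup\set{e_{j^\ast}}\subseteq T$ the same $u$ is critical for no set with respect to $T$, so $T$ and $T\setminus\set{u}$ hit exactly the same sets; indifference to redundancy gives $f(D,T\setminus\set{u})=f(D,T)$, making $T\setminus\set{u}$ a strictly smaller ADC and contradicting the minimality of $T$. Hence no minimal $T$ fails the test, and the pruning (as well as the non-reinsertion of $e$) is safe. For $\algname{RemoveRedundantPreds}$ (line~18) I would carry out the operator-level case analysis on a single attribute pair: any candidate differing from $e_{j^\ast}$ only in its comparison operator is, once $e_{j^\ast}$ is present, either contradictory with it (yielding a trivial DC) or leaves the hit sets unchanged — the incomparable cases being handled by applying indifference to redundancy through a common superset — so removing it cannot destroy a nontrivial minimal ADC, and in particular cannot delete any element of such a $T$. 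Verifying this operator analysis cleanly, and checking that the candidate restoration (line~20) keeps the invariant's hypothesis $T\subseteq S\cup\cand$ intact across the loop for a minimal $T$, is the most delicate bookkeeping; everything else reduces to the two axioms plus the structural guarantees already known for $\algname{MMCS}$.
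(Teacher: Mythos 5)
Your proposal takes essentially the same route as the paper's own proof: the same decomposition into (a), (c), (b); part (a) via the base-case test plus the observation that monotonicity lets $\algname{IsMinimal}$ check only single-element removals; part (c) via disjointness of the two recursive branches plus the $\algname{MMCS}$ candidate bookkeeping; and part (b) via induction over the recursion tree with the same invariant. Dropping the paper's second clause from the invariant (empty intersection with every $F$ for which $\canCover[F]$ is false) is harmless: such an $F$ is marked false only when it is disjoint from $\cand$, it lies in $\uncov$ and is therefore disjoint from $S$, and inside a subtree $\cand$ never exceeds its entry value while $S$ grows only by elements of $\cand$; hence that clause is already implied by $T\subseteq S\cup\cand$, and the two induction hypotheses quantify over the same family of sets. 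In two places you are more careful than the paper, whose proof is silent on both: you verify that the $\algname{WillCover}$ pruning cannot discard a live minimal ADC, and you give a termination measure. (On the latter, a small correction: $F\cap\cand$ can be empty at line~7, because line~14 and $\algname{RemoveRedundantPreds}$ shrink $\cand$ without re-running $\algname{UpdateCanCover}$, so a set can sit in $\uncov$ with $\canCover$ still true yet contain no candidates; the measure should be the pair of $|\cand|$ and the number of true entries of $\canCover$, ordered lexicographically.)

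The one step that does not close is precisely the one you flag as most delicate: the safety of $\algname{RemoveRedundantPreds}$, and there your dichotomy is wrong. A candidate differing from $e_{j^\ast}$ only in its operator need not be either ``contradictory, yielding a trivial DC'' or ``hit-set preserving''. Take $e_{j^\ast}=t[A]<t'[B]$ and the discarded candidate $t[A]>t'[B]$: no evidence set contains both, yet the DC corresponding to the hitting set $\set{t[A]<t'[B],\,t[A]>t'[B]}$, namely $\forall t,t'\,\neg(t[A]\ge t'[B]\wedge t[A]\le t'[B])$, is not trivial, and adding the second predicate genuinely enlarges the family of hit sets. Your common-superset repair does not help: indifference to redundancy applied to $T\subset T\cup\set{t[A]\neq t'[B]}$ shows the superset has the same score, but minimality of $T$ is a condition on \emph{subsets} of $T$, so no contradiction follows. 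Indeed, no argument can close this step as stated: if neither singleton meets the threshold but the pair does, that pair is a minimal ADC under the paper's definition (the semantically equivalent $\set{t[A]\neq t'[B]}$ is not a subset of it), and the algorithm never builds it, because $>$ leaves $\cand$ the moment $<$ enters $S$. To be fair, this is a gap you inherited from, and correctly surfaced in, the paper: its proof argues only about the criticality test of line~17 and never mentions line~18, and its informal justification of that pruning covers only comparable operator pairs such as $<$ versus $\le$. A complete proof must either drop the pruning or weaken claim (b) to completeness up to operator-level equivalence (e.g., over DCs with at most one predicate per attribute pair).
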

\begin{proof}
\ester{The first claim is rather straightforward, as we return a set $S$ only if $1-f(D,S)\le \epsilon$ and $S$ is minimal. For the last claim, observe that the two recursive calls in each iteration cannot result in the same $S$ (since in the first one $S$ will always have an empty intersection with the selected $F$, while in the second one $S$ will have a nonempty intersection with $F$). Moreover, the first recursive call does not modify $S$, while the second one is identical to the recursive call of the algorithm $\algname{MMSC}$. We conclude that since $\algname{MMSC}$ returns every minimal hitting set once, our algorithm returns every minimal ADC once.}
% The first part is straightforward, as we return a set $S$ only if $1-f(D,S)\le \epsilon$ and $S$ is minimal. For the last part, observe that the two recursive calls in each iteration cannot result in the same $S$ (since in the first one $S$ will always have an empty intersection with the selected $F$, while in the second one $S$ will have a nonempty intersection with $F$). Moreover, the first recursive call does not modify $S$, while the second one is identical to the recursive call of the algorithm $\algname{MMSC}$. We conclude that since $\algname{MMSC}$ returns every minimal hitting set once, our algorithm returns every minimal approximate hitting set once.
We prove by induction on $n$, the depth of the recursion, that $\algname{ADCEnum}(S,\crit,\uncov,\cand,\canCover)$ returns every minimal ADC $\varphi$ that satisfies:
\begin{citemize}
\item $S\subseteq S_\varphi$ and $S_\varphi\subseteq (S\cup \cand)$,
\item $S_\varphi$ has an empty intersection with all the sets $F\in \evi(D)$ for which $\canCover[F]=\mbox{false}$.
\end{citemize}
Note that the sets for which $\canCover[F]=\mbox{true}$ can either have an empty or a nonempty intersection with $S_\varphi$. Since at the beginning, $\cand$ contains all the predicates of $\pspace_R$ and we have that $\canCover[F]= \mbox{true}$ for each $F\in \evi(D)$, we will conclude that $\algname{ADCEnum}(\emptyset,\crit,\uncov,\cand,\canCover)$ returns every ADC $\varphi$ such that $\emptyset\subseteq S_\varphi$ and $S_\varphi\subseteq \cand$; that is, all the ADCs.

For the basis of the induction, $n=0$, one possible case is that the condition of line~1 holds. Then, the constraint corresponding to $S$ itself is a minimal ADC; thus, the only $S_\varphi$ that contains $S$ such that $\varphi$ is a minimal ADC is $S$ itself, and we indeed return $S$. Note that the sets $F$ for which it holds that $\canCover[F]=\mbox{false}$ have an empty intersection with $S$, as we update the value $\canCover[F]$ for a set $F$ to false only when $\cand$ no longer contains any predicate of $F$. If the condition of line~1 does not hold, then the only other option is that the condition of line~5 holds. In this case, no $S_\varphi$ that contains $S$ is such that $\varphi$ is an ADC, as it does not holds that $1-f(D,S)\le\epsilon$ and the remaining candidate predicates do not appear in any of the remaining sets in $\uncov$.

For the inductive step, we prove that if the claim holds for all $n\in\set{1,\dots,k-1}$, it also holds for $n = k$. Let us consider an iteration from which the depth of the recursion is $k$. In line~4, we choose a set $F$ for which $\canCover[F]=\mbox{true}$. Each $S_\varphi$ that contains $S$ either has a nonempty intersection with $F$ or an empty one. In the first case, let $S_\varphi$ be a minimal ADC that has a nonempty intersection with $F$ and satisfies the two conditions. In line~14, we go over all the predicates of $F$ and try to add each one of them to $S$.  Clearly, $S_\varphi$ contains at least one of these predicates. Let $\set{p_1,\dots,p_k}=S_\varphi\cap F$ and assume that $p_1,\dots,p_k$ is the order by which they are selected in line~14. We claim that $S_\varphi$ is generated in the recursive call made when $p_k$ is selected in line~14.

Each predicate of $\set{p_1,\dots,p_k}$ is in $\cand$ at this point, since we assume that $S_\varphi\subseteq (S\cup\cand)$ and none of these predicates can violate the minimality condition of line~17, as this will imply that $S_\varphi$ contains an element that is not critical for any subset, which is a contradiction to the fact that $S_\varphi$ is minimal (due to indifference to redundancy). Hence, every predicate in $\set{p_1,\dots,p_k}$ is added back to $\cand$ in line~20, and since $p_k$ is the last predicate selected in the loop of line~15, in that iteration all the other predicates already belong to $\cand$. From the inductive assumption, we know that $\algname{ADCEnum}$ generates every minimal ADC that contains $S\cup \set{p_k}$, is contained in $S\cup \set{p_k}\cup \cand$, and has an empty intersection with every $F'$ for which $\canCover[F']=\mbox{false}$, when given the set $S\cup \set{p_k}$ as input; among them is $S_\varphi$ (observe that we do not change the data structure $\canCover$ for the recursive call of line~19).

For the second case, let $S_\varphi$ be a minimal ADC that has an empty intersection with $F$ and satisfies the two conditions. We claim that $S_\varphi$ is generated in the recursive call of line~10. Here, we make a recursive call with the same $S$ after removing all the predicates of $F$ from $\cand$, and updating $\canCover[F']$ to false for each $F'$ that no longer contains any predicates from $\cand$. From the inductive assumption, we know that this recursive call generates every minimal ADC that contains $S$ and has an empty intersection with $F$ (as no predicate of $F$ appears in $\cand$); among them is $S_\varphi$. That concludes our proof of correctness for the algorithm.
\end{proof}

\ester{Finally, we discuss the complexity of $\algname{ADCEnum}$. There are two components of the algorithm that affect the time complexity compared to the complexity of $\algname{MMCS}$---the additional recursive call in line~10, and the computation of the function $f$ that affects the complexity per iteration. Recall that the complexity of $\algname{MMCS}$ per iteration is $O(\left\lVert M \right\rVert)$. In our case, we have that $\left\lVert M \right\rVert$ is bounded by $|\pspace|\cdot n$, where $n$ is the number of distinct sets in $\evi(D)$. We compute the function $f$ in the algorithm $|S|+2$ times, and since $|S|$ is bounded by $|\pspace|$, we conclude that the time complexity per iteration is $O\left(|\pspace|\cdot n+|\pspace|\cdot f(|\pspace|,|\evi(D)|)\right)$, where $f(|\pspace|,|\evi(D)|)$ is the time required to compute $f$. The space complexity is not affected compared to $\algname{MMCS}$ and remains $O(|\pspace|\cdot n)$.
}

\newcommand{\Prob}{\mathrm{Pr}}

% !TEX root=main.tex
\section{Mining ADCs From a Sample}\label{sec:sample}
 The input to our algorithm is the evidence set
and the complexity of building it is quadratic in the size of
the database (as we have to go over all pairs of tuples), which can be
prohibitively expensive for large databases. In this section, we show
how to use a sample from the database to produce ADCs with 
probabilistic guarantees, while avoiding the cost of building the
evidence set for the entire database~\cite{heidari2019approximate}.
For simplicity, we limit our discussion to a simple approximation
function, namely, the function $f_1$ introduced in 
Section~\ref{sec:functions}. Recall that the function $f_1$ is based
on the number of tuple pairs violating the DC in the database. 
%that is,
%\[f_1(D,S_\phi)=\left|\set{\langle t,t' \rangle\mid t,t'\in D, \{t,t'\}\models\phi}\right|/|D|^2\]

Let $J$ be a sample uniformly drawn from a database $D$ and let $\epsilon\ge 0$. Let $\phi$ be a DC. We address the following problems: \e{(1)} how to estimate the number of violations of $\phi$ in $D$ from $J$; and \e{(2)} how to use this estimate to decide on the right threshold (or approximation function) to use when enumerating ADCs from $J$.

%(3) we discuss the minimum sample size needed to produce $ \epsilon -approximate $ DC from sample with guarantees (Section \ref{sec:min}).

\subsection{Estimating the Number of Violations}
\label{sec:vionum}
Since we consider the function $f_1$ that is based on
the number of violations of the DC in the database, we now show how to
estimate this number from a sample $J$ uniformly drawn from $D$. 
We represent the violations of an ADC $\phi$ as a  conflict graph
$G(V,E)$~\cite{DBLP:journals/iandc/ChomickiM05}, where $V$ is the set of vertices corresponding to the tuples
in $D$, and $E$ is the set of edges corresponding to violations of the
DC, where an edge $(t_1,t_2)$ exists if the pair $\langle t_1,t_2\rangle$ violates the DC. Note that this is a directed graph since a pair $\langle t_1,t_2\rangle$ may violate a DC that is satisfied by $\langle t_2,t_1\rangle$. \ester{Hence, the problem that we consider here is that of estimating the \e{density} of a graph from a given sample}.

\ester{To the best of our knowledge, most works on the density of random graphs focus on the generation of samples with density requirements~\cite{boulle2015universal,lorrain1971structural,arabie1978constructing,schaeffer2007graph,fortunato2010community,holland1983stochastic}, which seems to be a harder problem. Hence, the methods proposed in these works are too robust for our problem, which reflects in the high computational complexity of the proposed solutions. In our case, the graph that we obtain is different for every DC, and we need to estimate the density for a different graph in every iteration of the algorithm; hence, using solutions with a high computational cost is infeasible.
There is also a line of work that focuses on the related problem of estimating the average degree of a graph, given the degree of some of the vertices~\cite{feige2006sums,goldreich2004estimating}; however, a basic requirement in the proposed solutions is to be able to query the actual degree of at least $O(\sqrt{|V|})$ vertices. To obtain this information, we will need to go over $O(|V|\cdot\sqrt{|V|})$ pairs of tuples in each iteration of the algorithm, which is again too expensive. Hence, we propose a simple method for estimating the graph density from a sample, that has no significant impact on the computational cost of our algorithm.}

Let $p=\frac{|E|}{2\cdot {|V|\choose 2}}$ (that is, $p=1-f_1(D,S_\phi)$). Let $G_J(V_J,E_J)$ be the conflict graph of $J$.
To estimate $p$ from $J$, we use the value $\hat{p}=\frac{|E_J|}{2\cdot {|V_J|\choose 2}}$.  We define the random variable $x_i$ for each pair of nodes in $V_J$, where $x_i = 1$ with probability $p$ and $x_i=0$ with probability $1-p$.
%Observe that while we assume that the probability of an edge in the conflict graph is $p$, 
It can be easily shown that $E(\hat{p}) = p$, so it is an unbiased estimator of $p$. 
Note that we do not make assumptions about the structure of the conflict graph or about the dependencies between the edges. 

%We now show that $\hat{p}$ is an unbiased estimator of $p$.
% Now, we can compute $E(\hat{p})$:
% \begin{align*}
% &E(\hat{p})=E\left(\frac{|E_J|}{{|V_J|\choose 2}}\right)=
% E\left(\frac{\sum\limits_{i=1}^{|V_J|\choose 2}x_i}{{|V_J|\choose 2}}\right)=\frac{1}{{|V_J|\choose 2}}E\left(\sum\limits_{i=1}^{|V_J|\choose 2}x_i\right)=\\
% &= \frac{1}{{|V_J|\choose 2}} \sum\limits_{i=1}^{|V_J|\choose 2}E(x_i)=\frac{1}{{|V_J|\choose 2}}\sum\limits_{i=1}^{|V_J|\choose 2}p=
% \frac{1}{{|V_J|\choose 2}}{|V_J|\choose 2}p=p
% \end{align*}

 We further derive error bounds on our estimator to help us derive our guarantees. To compute error bounds, various methods can be used, including Chebyshev's inequality and the normal distribution assumption. Most of them require estimating the variance of our estimator. Here, we use Chebyshev's inequality:
\begin{equation*}
\Prob(|\hat{p}-E(\hat{p})|>a)\leq \frac{1}{a^2}\cdot\var(\hat{p})
\end{equation*}

We know that $p=E(\hat{p})$; hence, we now compute an upper bound on $\var(\hat{p})$. 

\begin{align*}
\var(\hat{p})&=\var\left(\frac{|E_J|}{{|V_J|\choose 2}}\right)=\frac{1}{{|V_J|\choose 2}^2}\left[E(E_J^2)-E(E_J)^2\right]\\
&=\frac{1}{{|V_J|\choose 2}^2}\left[E(E_J^2)-{|V_J|\choose 2}^2\cdot p^2\right]
\end{align*} 

We now expand the term $E(E_J^2)$ using the random variables $x_1,\dots,x_{|V_J|\choose 2}$ as follows. 
\begin{align*}
E(E_J^2)&=E\left(\left(\sum\limits_{i=1}^{|V_J|\choose 2}x_i\right)^2\right)=\\
&=\sum\limits_{i=1}^{|V_J|\choose 2}E(x_i^2)+\sum\limits_{i\neq j\in \{1,\dots,{|V_J|\choose 2}\}} E(x_i\cdot x_j)
\end{align*}

Since we do not assume anything about the dependencies between the variables $x_i$, we cannot calculate the exact value of $E(x_i\cdot x_j)$; however, we can derive an upper bound for this value. We know that $x_i\cdot x_j=1$ if and only if $x_i=x_j=1$, and the value of $E(x_i\cdot x_j)$ depends of the number of these events. Hence, if we can find an upper bound for the probability of $x_i\cdot x_j=1$, this will be an upper bound for $E(x_i\cdot x_j)$. We have the following.
\begin{align*}
& E(x_i\cdot x_j)=\Prob(x_i=1,x_j=1)=\\
&=\Prob(x_i=1|x_j=1)\cdot \Prob(x_j=1)\leq
 \Prob(x_j=1)=p
\end{align*}

Clearly, for $i=j$ it holds that $\Prob(x_i=1,x_j=1)=p$. Hence, we obtain the following upper bound on $E(E_J^2)$.
\begin{align*}
& E(E_J^2)=\sum\limits_{i=1}^{|V_J|\choose 2}E(x_i^2)+\sum\limits_{i\neq j\in \{1,\dots,{|V_J|\choose 2}\}} E(x_i\cdot x_j)=\\
&\leq {|V_J|\choose 2}\cdot p + {{|V_J|\choose 2}\choose 2}\cdot p
\end{align*}

Next, we use the upper bound on $E(E_J^2)$ to obtain an upper bound for $\var(\hat{p})$.
\begin{align*}
\var(\hat{p})\leq p\cdot\left[\frac{{|V_J|\choose 2}+{{|V_J|\choose 2}\choose 2}}{{|V_J|\choose 2}^2}-p\right]
\end{align*} 

Using Chebyshev's inequality we obtain the following:

\begin{align*}
\Prob(|\hat{p}-p|>a)\leq \frac{p}{a^2}\cdot\left[\frac{{|S|\choose 2}+{{|S|\choose 2}\choose 2}}{{|S|\choose 2}^2}-p\right]
\end{align*}
 The obtained bounds are loose since we did not assume anything about the structure of the conflict graph and the dependencies among the violations. 
%Since Chebyshev's inequality is generally true, it provides weak error bounds in general and adding assumptions on the dependencies among the violations is required to obtain better bounds. 
We show that better bounds can be obtained under the assumption that violations (or, equivalently, edges) are introduced randomly and independently.

We first introduce the rationale behind random violations as follows. Assume a random polluter which is a probability distribution over graphs on $n$ labeled vertices, where each directed edge appears independently with probability $p$. Each violation (edge) independently occurs between two tuples without following any specific pattern. Under this assumption, the number of edges in a sample $J$ produces a binomial distribution. 
\begin{align*}
\Prob[E_J=i]={2\cdot {|V_J| \choose 2} \choose i}\cdot p^i\cdot (1-p)^{2\cdot {|V_J|\choose 2}-i}
\end{align*}

%There are many approximation methods that can be used such as the normal approximation, the Wilson score interval, the Jeffreys interval, and the Clopper-Pearson interval to approximate $p$ using the estimator $\hat{p}$. 

For simplicity, we assume that the sample size is not too small and $p$ is not too close to $0$ or $1$; 
%so we can use the normal approximation of the binomial distribution; 
hence, we can approximate the binomial $B(n,p)$ under the mentioned conditions using the normal distribution $N(np,np(1-p))$, and we can define a confidence interval parameterized by a confidence level $1-2\alpha$, and $n=2\cdot{|V_J|\choose 2}$. The confidence interval of normal distribution is given by the following equation.

\begin{equation}
\label{eq:dist}
\Prob\bigg[|p-\hat{p}|\leq z_{1-2\alpha}\cdot\sqrt{\frac{\hat{p}(1-\hat{p})}{n}}\bigg]\geq 1-2\alpha
\end{equation}

In the next subsection, we elaborate on how to use this idea to decide which threshold $\epsilon_J$ should be used on the sample, assuming that the desired threshold for the database is $\epsilon$.

\usetikzlibrary{patterns}

\subsection{Computing the Sample Threshold}
\label{sec:find}

We now focus on the following problem. Given a sample $J$, a threshold $\epsilon$ and an error bound $\alpha$, find the thresholds that should be used on the sample to obtain accurate ADCs with high probability. \ester{Note that the threshold may depend on the DC itself, since different DCs are violated by different tuple pairs, and, consequently, the conflict graphs of different DCs are different. That is, if $\varphi$ is an ADC on the sample $J$ w.r.t.~$\epsilon^\varphi_J$, then we require that with probability at least $1-\alpha$, it holds that $\varphi$ is an ADC on the entire database w.r.t.~$\epsilon$.} %More formally, we require that if
%$\hat{p}< \epsilon^\varphi_J$ (or, equivalently, $1-\hat{p}\geq 1-\epsilon^\varphi_J$) then $\Prob(1-p\geq 1-\epsilon)\geq 1-\alpha$. 
We use Inequality~\ref{eq:dist} for this task.

Using the symmetry of the normal distribution we obtain the following.
\begin{displaymath}
\Prob \bigg[p-\hat{p}\leq z_{1-2\alpha}\cdot\sqrt{\frac{\hat{p}(1-\hat{p})}{n}}\bigg]\geq 1-\alpha
\end{displaymath}

Next, we add $1$ and subtract $1$ from the expression $p-\hat{p}$ and multiply both sides of the inner inequality by $-1$. Clearly, none of these operations affects the outer inequality and we have that:
\begin{displaymath}
\Prob \bigg[(1-p)-(1-\hat{p})\geq -z_{1-2\alpha}\cdot\sqrt{\frac{\hat{p}(1-\hat{p})}{n}}\bigg]\geq 1-\alpha
\end{displaymath}

Finally, we move the term $(1-\hat{p})$ to the other side of the inner inequality to obtain the following result.
\begin{displaymath}
\Prob \bigg[(1-p)\geq (1-\hat{p})-z_{1-2\alpha}\cdot\sqrt{\frac{\hat{p}(1-\hat{p})}{n}}\bigg]\geq 1-\alpha
\end{displaymath}

Recall that our goal is to find an $\epsilon^\varphi_J$ such that if $1-\hat{p}\geq 1-\epsilon^\varphi_J$ then $\Prob(1-p\geq 1-\epsilon)>1-\alpha$. Thus, all we need to do now is to set:
\begin{displaymath}
(1-\hat{p})-z_{1-2\alpha}\cdot\sqrt{\frac{\hat{p}(1-\hat{p})}{n}}\geq 1-\epsilon
\end{displaymath}
Or, equivalently:
\begin{equation}
(1-\hat{p})\geq z_{1-2\alpha}\cdot\sqrt{\frac{\hat{p}(1-\hat{p})}{n}}+(1-\epsilon)
\label{eq:eps}
\end{equation}

Consequently, if we define $\epsilon^\varphi_J=1- z_{1-2\alpha}\cdot\sqrt{\frac{\hat{p}(1-\hat{p})}{n}}+(1-\epsilon)$, and accept the DC $\varphi$ if $1-\hat{p}\ge 1-\epsilon^\varphi_J$, then  with probability at least $1-\alpha$, this DC is an ADC on the entire database w.r.t.~the threshold $\epsilon$.
We conclude that we can use inequality~\ref{eq:eps} as criteria for accepting or rejecting an 
ADC on the sample.

\ester{Note that we can also look at Inequality~\ref{eq:eps} from a different point of view. Rather than defining a different threshold $\epsilon^\varphi_J$ for every DC, we can define the following approximation function:
$$f_1'=(1-\hat{p})-z_{1-2\alpha}\cdot\sqrt{\frac{\hat{p}(1-\hat{p})}{n}}$$
Then, Inequality~\ref{eq:eps} implies that the DC $\varphi$ is an ADC on the entire database w.r.t.~the threshold $\epsilon$ if it is an ADC on the sample w.r.t.~the approximation function $f_1'$ and the same $\epsilon$. Note that as the size of the sample increases, the value $n$ increases as well, and the difference between $f_1$ and $f_1'$ becomes very small, as expected.
}

\section{Experimental Evaluation}\label{sec:experiments}

\definecolor{mypink}{RGB}{255,20,147}
\definecolor{myred}{RGB}{255,0,0}
\definecolor{mygreen}{RGB}{0,128,0}
\definecolor{myorange}{RGB}{255,140,0}
\definecolor{mypurple}{RGB}{128,0,128}
\definecolor{mybrown}{RGB}{139,69,19}
  
  \begin{table}[t]
  \small
    \centering
\begin{tabular}{|c|c|c|c|}
\hline
\textbf{Dataset} & \textbf{\#Tuples} & \textbf{\#Attributes} & \textbf{\#Golden DCs}\\
\hline\hline
Tax & $1$M & $15$ & $9$\\\hline
Stock & $123$K & $7$ & $6$\\\hline
Hospital & $115$K & $19$ & $7$\\\hline
Food & $200$K & $17$ & $10$\\\hline
Airport & $55$K & $12$ & $9$\\\hline
Adult & $32$K & $15$ & $3$\\\hline
Flight & $582$K & $20$ & $13$\\\hline
Voter & $950$K & $25$ & $12$\\\hline
\end{tabular}
\caption{Datasets.\label{table:datasets}}
\vspace{-1em}
\end{table}
  
In this section, we provide an experimental evaluation of our ADC discovery algorithm.

\subsection{Experimental Setup}
%We first describe the general setup for our experiments.
%\paragraph*{Implementation and Hardware}
We implemented our enumeration algorithm, including the functions $f_1$ and $f_2$ in Java. As explained in Section~\ref{sec:functions}, the function $f_3$ is hard to compute for DCs; hence, we implemented the algorithm of Figure~\ref{alg:f3_appx}, and we refer to this algorithm when mentioning the function $f_3$. We also used the Java implementation of the algorithm $\algname{AFASTDC}$ by Chu et al.~\cite{DBLP:journals/pvldb/ChuIP13} and the Java implementation of the algorithm $\algname{DCFinder}$ provided by the authors of~\cite{DBLP:journals/pvldb/PenaAN19}.

All experiments were executed on a machine with an Intel Xeon CPU E5-2603 v3 (1.60GHz, 12 cores) with 64GB of RAM running Ubuntu 14.04.3 LTS. All the experiments were repeated ten times and the average values are reported.
%The running time experiments were executed on a cluster consisting of five machines with the same configuration.

%\paragraph*{Datasets}
\ester{Following previous works on the problem of discovering DCs~\cite{DBLP:journals/pvldb/ChuIP13,DBLP:journals/pvldb/BleifussKN17,DBLP:journals/pvldb/PenaAN19}, we evaluate our algorithm on seven real-world datasets (\textbf{SP Stock}, \textbf{Hospital}, \textbf{Food Inspection}, \textbf{Airport}, \textbf{Adult}, \textbf{Flight}, and \textbf{NCVoter}), and one synthetic dataset (\textbf{Tax}).
Table~\ref{table:datasets} depicts the number of tuples, attributes, and golden DCs (i.e., DCs obtained by human experts) for each one of the datasets.}
%The \textbf{Hospital} dataset contains medical information that is provided by the US government\footnote{https://data.medicare.gov}. It contains fifteen attributes (all string attributes) and $115$K tuples. The \textbf{SP Stock} dataset contains historical S\&P 500 stocks data. It consists of seven attributes (three string attributes and four numeric attributes) and $123$K tuples. Finally, the \textbf{Tax} dataset is a synthetic dataset generated using the tax data generator from~\cite{DBLP:conf/icde/BohannonFGJK07}. It has fifteen attributes (nine string attributes and six numeric attributes) and $1$M tuples.}

\subsection{Running Time}

 \begin{figure}[b!]
    \centering
    \scalebox{0.9}{
\begin{tikzpicture}
\begin{axis}[%tax
    ybar=0pt,
    ymax=7000,
    %legend style={at={(0.35,0.97)},
    %  anchor=north},
    ylabel={Runtime (seconds)},
    %xlabel={Sample size},
    %symbolic x coords={20\%,40\%,60\%,80\%,100\%},
    xtick={0,1,2,3,4,5,6,7},
    xticklabels={Tax,Stock,Hospital,Food,Airport,Adult,Flight,Voter},
    bar width=0.3,
    height=0.2\textwidth,
    width=0.53\textwidth,
    ymode=log,
    log basis y={10},
    tick label style={font=\small},
    label style={font=\small},
    enlargelimits=0.1
    ]
\addplot[fill=orchid,postaction={
         pattern=horizontal lines}] coordinates {(0,2371.52) (1,79.43) (2,503.040) (3,77.88) (4,13.64) (5,20.82) (6,855.23) (7,1795.94)};\label{bar-adc}
\addplot[fill=lightpink,postaction={pattern=crosshatch dots}] coordinates {(0,5749.11) (1,241.31) (2,1498.387) (3, 80) (4,31.51) (5,29.47) (6,3654.292) (7,6420.846)};\label{bar-smc}
\legend{}
\end{axis}
\end{tikzpicture}}
\vspace{-1em}
\caption{Running times of $\algname{ADCEnum}$ (\ref{bar-adc}) and $\algname{SearchMC}$ (\ref{bar-smc}).}
\label{fig:ADCEnumVSSMC}
\vspace{-1em}
  \end{figure}
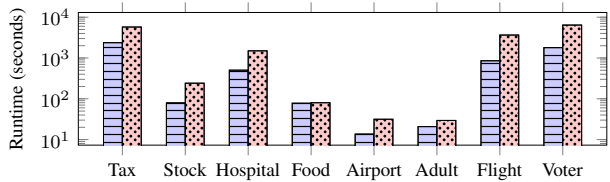
  
   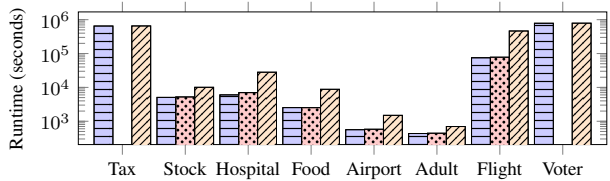
\begin{figure}[b!]
    \centering
    \scalebox{0.9}{
\begin{tikzpicture}
\begin{axis}[%tax
    ybar=0pt,
    ymax=800000,
    %legend style={at={(0.35,0.97)},
    %  anchor=north},
    ylabel={Runtime (seconds)},
    %xlabel={Sample size},
    %symbolic x coords={20\%,40\%,60\%,80\%,100\%},
    xtick={0,1,2,3,4,5,6,7},
    xticklabels={Tax,Stock,Hospital,Food,Airport,Adult,Flight,Voter},
    bar width=0.3,
    height=0.2\textwidth,
    width=0.53\textwidth,
    ymode=log,
    log basis y={10},
    tick label style={font=\small},
    label style={font=\small},
    enlargelimits=0.1
    ]
\addplot[fill=orchid,postaction={
         pattern=horizontal lines}] coordinates {(0,650623.87) (1,5070.65) (2,6020.341) (3,2529.93) (4,563.01) (5,434.48) (6,75162.57) (7,784279.888)};\label{bar-adc}
\addplot[fill=lightpink,postaction={pattern=crosshatch dots}] coordinates {(0,0) (1,5232.53) (2,7015.688) (3, 2520.96) (4,580.88) (5,443.13) (6,77961.629) (7,0)};\label{bar-smc}
\addplot[fill=lightorange,postaction={
         pattern=north east lines}] coordinates {(0,654001.46) (1,10064.74) (2,28077.167) (3,8782.4) (4,1495.26) (5,699.38) (6,463831.841) (7,788904.787)};\label{bar-afastdc}
\legend{}
\end{axis}
\end{tikzpicture}}
\vspace{-1em}
\caption{Running times of $\algname{ADCMiner}$ (\ref{bar-adc}), \algname{DCFinder} (\ref{bar-smc}), and $\algname{AFASTDC}$ (\ref{bar-afastdc}).}
\label{fig:threealg}
\vspace{-1em}
  \end{figure}

We evaluate the running time of our algorithm on the aforementioned datasets and compare them to the running times of the algorithm $\algname{AFASTDC}$~\cite{DBLP:journals/pvldb/ChuIP13}. As we do not propose a new technique for constructing the evidence set, we only compare the running times of the DC enumeration algorithms (that is, we compare our algorithm $\algname{ADCEnum}$ with the algorithm $\algname{SearchMinimalCovers}$ used in~\cite{DBLP:journals/pvldb/ChuIP13,DBLP:conf/dexa/PenaA18,DBLP:journals/pvldb/PenaAN19}, that we denote here by $\algname{SearchMC}$). We discuss the running time of the evidence set construction later.

In the experiments, we used the approximation function $f_1$ (which is the function $\algname{SearchMC}$ is designed for) with the threshold $\epsilon=0.1$.
Figure~\ref{fig:ADCEnumVSSMC} depicts the running times of both algorithms. Note that the y axis is in log scale. The results show that our algorithm is two to three times faster than $\algname{SearchMC}$ on most of the datasets. As an example, it took $\algname{SearchMC}$ 5750 seconds (96 minutes) to generate all ADCs on the entire Tax dataset, while $\algname{ADCEnum}$ finished after 2373 seconds (39 minutes); that is, about $2.5$ times faster.

We have also conducted a running time comparison between \algname{ADCEnum} and \algname{SearchMC} on different sample sizes. The results are depicted in Figure~\ref{fig:ADCvsAFASTDC2}. Note that in some cases, the running times for higher sample sizes are slightly higher than the running times for smaller sample sizes (e.g., the running time on a sample that consists of $60\%$ of the tuples in the Hospital dataset is higher than the running time on a sample that consists of $80\%$ of the tuples). This is due to the fact that while increasing the number of tuples in the database significantly increases the number of tuple pairs, which, in turn, significantly increases the total running time, the number of \e{distinct} sets in the evidence set becomes relatively stable at some point, and does not change much when more tuples are added to the database. Since the running time of ADCEnum depends on the number of distinct sets in the evidence set, the running times for different sample sizes are usually very close. Therefore, we do not reason about these small differences.

   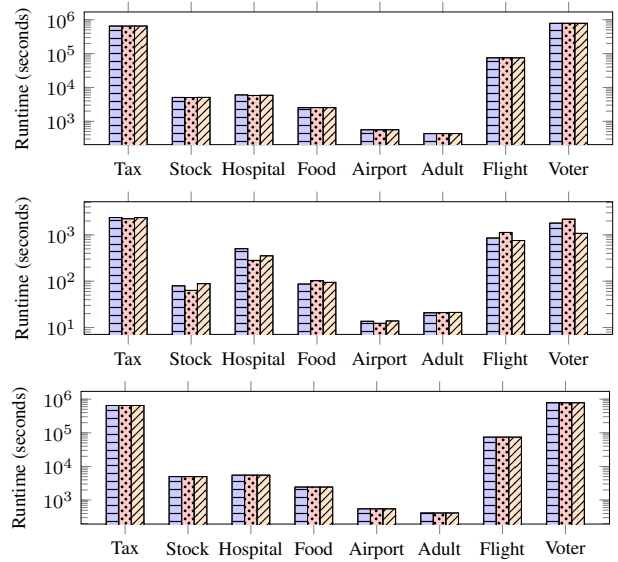
\begin{figure}[t!]
    \centering
    \scalebox{0.9}{
\begin{tikzpicture}
\begin{axis}[%tax
    ybar=0pt,
    ymax=800000,
    %legend style={at={(0.35,0.97)},
    %  anchor=north},
    ylabel={Runtime (seconds)},
    %xlabel={Sample size},
    %symbolic x coords={20\%,40\%,60\%,80\%,100\%},
    xtick={0,1,2,3,4,5,6,7},
    xticklabels={Tax,Stock,Hospital,Food,Airport,Adult,Flight,Voter},
    bar width=0.2,
    height=0.2\textwidth,
    width=0.53\textwidth,
    ymode=log,
    log basis y={10},
    tick label style={font=\small},
    label style={font=\small},
    enlargelimits=0.1
    ]
\addplot[fill=orchid,postaction={
         pattern=horizontal lines}] coordinates {(0,650623.87) (1,5070.65) (2,6020.341) (3,2529.93) (4,563.01) (5,434.48) (6,75162.57) (7,784279.888)};\label{f1_time}
\addplot[fill=lightpink,postaction={pattern=crosshatch dots}] coordinates {(0,650486.8
) (1,5054.54) (2,5798.591) (3, 2545.9
) (4,561.69) (5,434.7
) (6,75436.118) (7,784666.41)};\label{f2_time}
\addplot[fill=lightorange,postaction={
         pattern=north east lines}] coordinates {(0,650615.472) (1,5079.797) (2,5870.758) (3,2537.43) (4,563.24) (5,434.84) (6,75064.011) (7,783562.249)};\label{f3_time}
\legend{}
\end{axis}
\end{tikzpicture}}
\scalebox{0.9}{
\begin{tikzpicture}
\begin{axis}[%tax
    ybar=0pt,
    ymax=3000,
    %legend style={at={(0.35,0.97)},
    %  anchor=north},
    ylabel={Runtime (seconds)},
    %xlabel={Sample size},
    %symbolic x coords={20\%,40\%,60\%,80\%,100\%},
    xtick={0,1,2,3,4,5,6,7},
    xticklabels={Tax,Stock,Hospital,Food,Airport,Adult,Flight,Voter},
    bar width=0.2,
    height=0.2\textwidth,
    width=0.53\textwidth,
    ymode=log,
    log basis y={10},
    tick label style={font=\small},
    label style={font=\small},
    enlargelimits=0.1
    ]
\addplot[fill=orchid,postaction={
         pattern=horizontal lines}] coordinates {(0,2371.52) (1,79.43) (2,503.040) (3,86.85) (4,13.64) (5,20.82
) (6,855.233) (7,1795.947)};\label{f1_time}
\addplot[fill=lightpink,postaction={pattern=crosshatch dots}] coordinates {(0,2234.45
) (1,63.32) (2,281.29) (3, 102.82
) (4,12.32) (5,21.04
) (6,1128.781) (7,2182.469)};\label{f2_time}
\addplot[fill=lightorange,postaction={
         pattern=north east lines}] coordinates {(0,2363.122) (1,88.577) (2,353.457) (3,94.35) (4,13.87) (5,21.18) (6,756.674) (7,1078.308)};\label{f3_time}
\legend{}
\end{axis}
\end{tikzpicture}}
\scalebox{0.9}{
\begin{tikzpicture}
\begin{axis}[%tax
    ybar=0pt,
    ymax=800000,
    %legend style={at={(0.35,0.97)},
    %  anchor=north},
    ylabel={Runtime (seconds)},
    %xlabel={Sample size},
    %symbolic x coords={20\%,40\%,60\%,80\%,100\%},
    xtick={0,1,2,3,4,5,6,7},
    xticklabels={Tax,Stock,Hospital,Food,Airport,Adult,Flight,Voter},
    bar width=0.2,
    height=0.2\textwidth,
    width=0.53\textwidth,
    ymode=log,
    log basis y={10},
    tick label style={font=\small},
    label style={font=\small},
    enlargelimits=0.1
    ]
\addplot[fill=orchid,postaction={
         pattern=horizontal lines}] coordinates {(0,648252.35) (1,4991.22) (2,5517.301) (3,2443.08) (4,549.37) (5,413.66
) (6,74307.337) (7,782483.94)};
\addplot[fill=lightpink,postaction={pattern=crosshatch dots}] coordinates {(0,648252.35) (1,4991.22) (2,5517.301) (3,2443.08) (4,549.37) (5,413.66
) (6,74307.337) (7,782483.94)};
\addplot[fill=lightorange,postaction={
         pattern=north east lines}] coordinates {(0,648252.35) (1,4991.22) (2,5517.301) (3,2443.08) (4,549.37) (5,413.66
) (6,74307.337) (7,782483.94)};
\legend{}
\end{axis}
\end{tikzpicture}}
\vspace{-1em}
\caption{Running times of \algname{ADCMiner} for $f_1$ (\ref{f1_time}), $f_2$ (\ref{f2_time}), and $f_3$ (\ref{f3_time}). Top: total running time, middle: running time of ADCEnum, bottom: running time of evidence set construction.}
\label{fig:runtime}
  \end{figure}

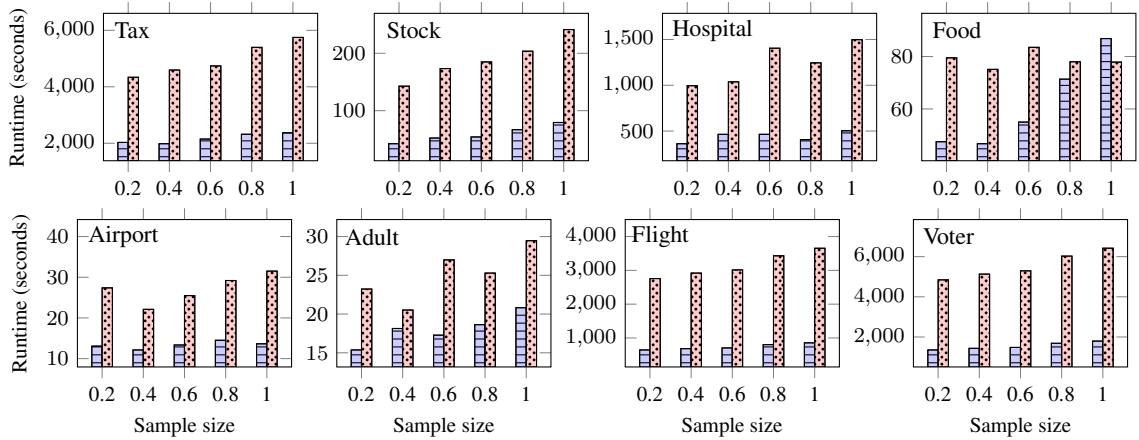
\begin{figure*}[t!]
    \centering
\begin{tikzpicture}
\begin{axis}[%tax
    ybar=0pt,
    ymax=6000,
    enlargelimits=0.15,
    %legend style={at={(0.35,0.97)},
    %  anchor=north},
    ylabel={Runtime (seconds)},
    %xlabel={Sample size},
    %symbolic x coords={20\%,40\%,60\%,80\%,100\%},
    xtick={0,1,2,3,4},
    xticklabels={0.2,0.4,0.6,0.8,1},
    bar width=0.25,
    height=0.2\textwidth,
    width=0.25\textwidth,
        tick label style={font=\small},
    label style={font=\small},
    ]
\addplot[fill=orchid,postaction={
         pattern=horizontal lines}] coordinates {(0,2029.13) (1,1983.41) (2,2148.56) (3,2319.32) (4,2371.52)};\label{bar-adc}
\addplot[fill=lightpink,postaction={pattern=crosshatch dots}] coordinates {(0,4337.47) (1,4593.73) (2,4739.63) (3,5396.65) (4,5749.11)};\label{bar-smc}
\legend{}
\end{axis}
\node[align=center, text=black] at (rel axis cs:0.25,1) {Tax};
\end{tikzpicture}
\begin{tikzpicture}
\begin{axis}[%stock
    ybar=0pt,
    ymax=240,
    enlargelimits=0.15,
    %legend style={at={(0.35,0.97)},
   %   anchor=north},
    %ylabel={Runtime (seconds)},
    %xlabel={Sample size},
    %symbolic x coords={20\%,40\%,60\%,80\%,100\%},
    xtick={0,1,2,3,4},
    xticklabels={0.2,0.4,0.6,0.8,1},
    bar width=0.25,
    height=0.2\textwidth,
    width=0.25\textwidth,
        tick label style={font=\small},
    label style={font=\small},
    ]
\addplot[fill=orchid,postaction={
         pattern=horizontal lines}] coordinates {(0,42.42) (1,52.31) (2,54.21) (3,66.89) (4,79.43)};
\addplot[fill=lightpink,postaction={pattern=crosshatch dots}] coordinates {(0,142.72) (1,173.38) (2,184.96) (3,203.63) (4,241.31)};
\legend{}
\end{axis}
\node[align=center, text=black] at (rel axis cs:0.3,1) {Stock};
\end{tikzpicture}
\begin{tikzpicture}
\begin{axis}[%hospital
    ybar=0pt,
    ymax=1600,
    enlargelimits=0.15,
    %legend style={at={(0.24,0.97)},nodes={scale=0.85, transform shape},
    %  anchor=north},
    %ylabel={Runtime (seconds)},
    %xlabel={Sample size},
        %symbolic x coords={20\%,40\%,60\%,80\%,100\%},
    xtick={0,1,2,3,4},
    xticklabels={0.2,0.4,0.6,0.8,1},
    bar width=0.25,
    height=0.2\textwidth,
    width=0.25\textwidth,
        tick label style={font=\small},
    label style={font=\small},
    ]
\addplot[fill=orchid,postaction={
         pattern=horizontal lines}] coordinates {(0,361.520) (1,464.015) (2,464.229) (3,406.335) (4,503.040)};
\addplot[fill=lightpink,postaction={pattern=crosshatch dots}] coordinates {(0,995.302) (1,1037.522) (2,1404.751) (3,1245.068) (4,1498.387)};
%\legend{$\algname{ADCEnum}$,\algname{SearchMC}}
\end{axis}
\node[align=center, text=black] at (rel axis cs:0.35,1) {Hospital};
\end{tikzpicture}
\begin{tikzpicture}
\begin{axis}[%inspection
    ybar=0pt,
    ymax=90,
    enlargelimits=0.15,
    %legend style={at={(0.24,0.97)},nodes={scale=0.85, transform shape},
    %  anchor=north},
    %ylabel={Runtime (seconds)},
    %xlabel={Sample size},
        %symbolic x coords={20\%,40\%,60\%,80\%,100\%},
    xtick={0,1,2,3,4},
    xticklabels={0.2,0.4,0.6,0.8,1},
    bar width=0.25,
    height=0.2\textwidth,
    width=0.25\textwidth,
        tick label style={font=\small},
    label style={font=\small},
    ]
\addplot[fill=orchid,postaction={
         pattern=horizontal lines}] coordinates {(0,47.49) (1,46.73) (2,54.99) (3,71.37) (4,86.85)};
\addplot[fill=lightpink,postaction={pattern=crosshatch dots}] coordinates {(0,79.50) (1,75.10) (2,83.51) (3,78.01) (4,77.88)};
\legend{}
\end{axis}
\node[align=center, text=black] at (rel axis cs:0.28,1) {Food};
\end{tikzpicture}

\begin{tikzpicture}
\begin{axis}[%airport
    ybar=0pt,
    ymax=40,
    enlargelimits=0.15,
    %legend style={at={(0.35,0.97)},
   %   anchor=north},
    ylabel={Runtime (seconds)},
    xlabel={Sample size},
        %symbolic x coords={20\%,40\%,60\%,80\%,100\%},
    xtick={0,1,2,3,4},
    xticklabels={0.2,0.4,0.6,0.8,1},
    bar width=0.25,
    height=0.2\textwidth,
    width=0.25\textwidth,
        tick label style={font=\small},
    label style={font=\small},
    ]
\addplot[fill=orchid,postaction={
         pattern=horizontal lines}] coordinates {(0,13.07) (1,12.13) (2,13.35) (3,14.51) (4,13.64)};
\addplot[fill=lightpink,postaction={pattern=crosshatch dots}] coordinates {(0,27.39) (1,22.11) (2,25.49) (3,29.2) (4,31.51)};
\legend{}
\end{axis}
\node[align=center, text=black] at (rel axis cs:0.33,1) {Airport};
\end{tikzpicture}
\begin{tikzpicture}
\begin{axis}[%adult
    ybar=0pt,
    ymax=30,
    enlargelimits=0.15,
    %legend style={at={(0.35,0.97)},
   %   anchor=north},
    %ylabel={Runtime (seconds)},
    xlabel={Sample size},
        %symbolic x coords={20\%,40\%,60\%,80\%,100\%},
    xtick={0,1,2,3,4},
    xticklabels={0.2,0.4,0.6,0.8,1},
    bar width=0.25,
    height=0.2\textwidth,
    width=0.25\textwidth,
        tick label style={font=\small},
    label style={font=\small},
    ]
\addplot[fill=orchid,postaction={
         pattern=horizontal lines}] coordinates {(0,15.37) (1,18.14) (2,17.28) (3,18.63) (4,20.82)};
\addplot[fill=lightpink,postaction={pattern=crosshatch dots}] coordinates {(0,23.23) (1,20.52) (2,26.99) (3,25.29) (4,29.47)};
\legend{}
\end{axis}
\node[align=center, text=black] at (rel axis cs:0.28,1) {Adult};
\end{tikzpicture}
\begin{tikzpicture}
\begin{axis}[%flight - missing
    ybar=0pt,
    ymax=4000,
    enlargelimits=0.15,
    %legend style={at={(0.35,0.97)},
   %   anchor=north},
    %ylabel={Runtime (seconds)},
    xlabel={Sample size},
            %symbolic x coords={20\%,40\%,60\%,80\%,100\%},
    xtick={0,1,2,3,4},
    xticklabels={0.2,0.4,0.6,0.8,1},
    bar width=0.25,
    height=0.2\textwidth,
    width=0.25\textwidth,
        tick label style={font=\small},
    label style={font=\small},
    ]
\addplot[fill=orchid,postaction={
         pattern=horizontal lines}] coordinates {(0,645.239) (1,683.360) (2,705.064) (3,802.802) (4,855.233)};
\addplot[fill=lightpink,postaction={pattern=crosshatch dots}] coordinates {(0,2757.015) (1,2919.901) (2,3012.639) (3,3430.259) (4,3654.292)};
\legend{}
\end{axis}
\node[align=center, text=black] at (rel axis cs:0.28,1) {Flight};
\end{tikzpicture}
\begin{tikzpicture}
\begin{axis}[%ncvoter - missing
    ybar=0pt,
    ymax=7000,
    enlargelimits=0.15,
    %legend style={at={(0.35,0.97)},
   %   anchor=north},
    %ylabel={Runtime (seconds)},
    xlabel={Sample size},
            %symbolic x coords={20\%,40\%,60\%,80\%,100\%},
    xtick={0,1,2,3,4},
    xticklabels={0.2,0.4,0.6,0.8,1},
    bar width=0.25,
    height=0.2\textwidth,
    width=0.25\textwidth,
        tick label style={font=\small},
    label style={font=\small},
    ]
\addplot[fill=orchid,postaction={
         pattern=horizontal lines}] coordinates {(0,1354.969) (1,1435.021) (2,1480.599) (3,1685.843) (4,1795.947)};
\addplot[fill=lightpink,postaction={pattern=crosshatch dots}] coordinates {(0,4844.267) (1,5130.469) (2,5293.416) (3,6027.204) (4,6420.846)};
\legend{}
\end{axis}
\node[align=center, text=black] at (rel axis cs:0.28,1) {Voter};
\end{tikzpicture}
\vspace{-1em}
\caption{Running times in seconds of $\algname{ADCEnum}$ (\ref{bar-adc}) and $\algname{SearchMC}$ (\ref{bar-smc}) for varying sample sizes.}
\label{fig:ADCvsAFASTDC2}
  \end{figure*}
  
In Figure~\ref{fig:threealg}, we compare the total running times of \algname{ADCMiner}, \algname{AFASTDC}~\cite{DBLP:journals/pvldb/ChuIP13}, and \algname{DCFinder}~\cite{DBLP:journals/pvldb/PenaAN19}. Note that we do not report on the running times of \algname{DCFinder} on the Tax and Voter datasets since we were unable to generate the evidence set with their algorithm (using the parameters recommended by the authors) even when dedicating almost the entire memory of our machine to the Java heap. While our algorithm is faster than the other two algorithms, the running time is mainly affected by the evidence set construction, which has a high computational cost in all three algorithms; hence, there is no drastic difference in the running times between our algorithm and \algname{DCFinder}. Sampling allows us to significantly reduce the running times compared to the other solutions, and we show that in the next subsection.

\ester{In Figure~\ref{fig:runtime}, we present the running times of \algname{ADCMiner} on all datasets for all three approximation functions. The top, middle, and bottom diagrams depict the total running time, the running time of \algname{ADCEnum}, and the running time of the evidence set construction, respectively. Note that the running times of \algname{ADCEnum} (which is the only part that depends on the choice of the approximation function) are very close for all three functions, and the total running time mostly depends on the evidence set construction}. To construct the evidence set, we used the algorithm introduced by Pena et al.~\cite{DBLP:journals/pvldb/PenaAN19}, which is the fastest algorithm for that task. \ester{However, since, as aforementioned, their algorithm was not able to process the Tax and NCVoter datasets, for these datasets, we used the algorithm of Chu et al.~\cite{DBLP:journals/pvldb/ChuIP13} to construct the evidence set.}
While for the Adult dataset, building the entire evidence set takes seven minutes, the evidence set construction requires almost an hour and a half on the SP Stock dataset, and more than twenty hours on the Flight dataset. This highlights the importance of incorporating sampling in our algorithm, as we are able to reduce the running times by as much as $90\%$, as we explain in the next subsection.

Finally, as discussed in Section~\ref{sec:algorithm}, we do not select a random set from $\uncov$ in each iteration of \algname{ADCEnum}, but rather the set that has the maximal intersection with the candidate list, as this choice decreases the running times, compared to the approach of Murakami and Uno~\cite{DBLP:journals/dam/MurakamiU14} who select the set that minimizes this intersection. In Figure~\ref{fig:maxvsmin}, we report the running times of \algname{ADCEnum} on $60k$ tuples from the Tax, SP Stock, and Hospital datasets, for both approaches. We see that the running times are indeed lower when we choose the set with the maximal intersection, for all three approximation functions.

 \begin{figure}[b!]
    \centering
    \scalebox{0.9}{
\begin{tikzpicture}
\begin{axis}[%tax
    ybar=0pt,
    ymax=400,
    %legend style={at={(0.35,0.97)},
    %  anchor=north},
    ylabel={Runtime (seconds)},
    %xlabel={Sample size},
    %symbolic x coords={20\%,40\%,60\%,80\%,100\%},
    xtick={0,1,2},
    xticklabels={Tax,Stock,Hospital},
    bar width=0.2,
    height=0.2\textwidth,
    width=0.3\textwidth,
    %ymode=log,
    %log basis y={10},
    tick label style={font=\small},
    label style={font=\small},
    enlarge x limits=0.2
    ]
\addplot[fill=orchid,postaction={
         pattern=horizontal lines}] coordinates {(0,159) (1,40) (2,201)};\label{bar-adc}
\addplot[fill=lightpink,postaction={pattern=crosshatch dots}] coordinates {(0,258) (1,52) (2,258) };\label{bar-smc}
\legend{}
\end{axis}
\end{tikzpicture}}
    \scalebox{0.9}{
\begin{tikzpicture}
\begin{axis}[%tax
    ybar=0pt,
    ymax=400,
    %legend style={at={(0.35,0.97)},
    %  anchor=north},
    ylabel={Runtime (seconds)},
    %xlabel={Sample size},
    %symbolic x coords={20\%,40\%,60\%,80\%,100\%},
    xtick={0,1,2},
    xticklabels={Tax,Stock,Hospital},
    bar width=0.2,
    height=0.2\textwidth,
    width=0.3\textwidth,
    %ymode=log,
    %log basis y={10},
    tick label style={font=\small},
    label style={font=\small},
    enlarge x limits=0.2
    ]
\addplot[fill=orchid,postaction={
         pattern=horizontal lines}] coordinates {(0,148) (1,37) (2,190)};\label{bar-adc}
\addplot[fill=lightpink,postaction={pattern=crosshatch dots}] coordinates {(0,375) (1,48) (2,292) };\label{bar-smc}
\legend{}
\end{axis}
\end{tikzpicture}}
       \scalebox{0.9}{
\begin{tikzpicture}
\begin{axis}[%tax
    ybar=0pt,
    ymax=450,
    %legend style={at={(0.35,0.97)},
    %  anchor=north},
    ylabel={Runtime (seconds)},
    %xlabel={Sample size},
    %symbolic x coords={20\%,40\%,60\%,80\%,100\%},
    xtick={0,1,2},
    xticklabels={Tax,Stock,Hospital},
    bar width=0.2,
    height=0.2\textwidth,
    width=0.3\textwidth,
    %ymode=log,
    %log basis y={10},
    tick label style={font=\small},
    label style={font=\small},
    enlarge x limits=0.2
    ]
\addplot[fill=orchid,postaction={
         pattern=horizontal lines}] coordinates {(0,129) (1,42) (2,165)};\label{bar-adc}
\addplot[fill=lightpink,postaction={pattern=crosshatch dots}] coordinates {(0,427) (1,54) (2,333) };\label{bar-smc}
\legend{}
\end{axis}
\end{tikzpicture}}
\vspace{-1em}
\caption{Running times of $\algname{ADCEnum}$ choosing the set $F$ with the maximal (\ref{bar-adc}) and minimal (\ref{bar-smc}) intersection with $\cand$, for the functions $f_1$ (top), $f_2$ (middle), and $f_3$ (bottom).}
\label{fig:maxvsmin}
\vspace{-1em}
  \end{figure}
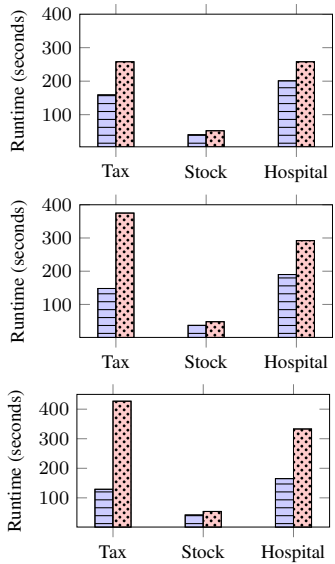

\definecolor{lightgray}{gray}{0.9}

\subsection{Sampling}

  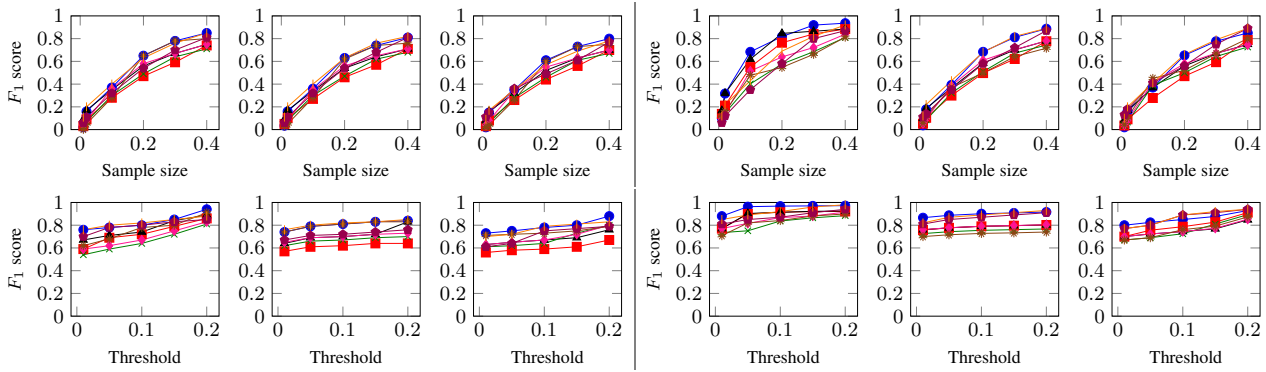
\begin{figure*}[t!]
\scalebox{0.85}{
\begin{tikzpicture}
\begin{axis}
[
xlabel = Sample size,
ytick={0,0.2,0.4,0.6,0.8,1},
ymin=0,
ymax=1,
ylabel = $F_1$ score,
width=0.22\textwidth,
label style={font=\small}
]
\addplot coordinates{
(0.01,0.02)
(0.02,0.16)
(0.1,0.37)
(0.2,0.65)
(0.3,0.78)
(0.4,0.85)
};
%\addlegendentry{Tax} ;
\addplot[color=myred,mark=square*] coordinates{
(0.01,0.03)
(0.02,0.09)
(0.1,0.28)
(0.2,0.47)
(0.3,0.592)
(0.4,0.74)
};
%\addlegendentry{Hospital} ;
\addplot[color=black,mark=triangle*] coordinates{
(0.01,0.09)
(0.02,0.17)
(0.1,0.35)
(0.2,0.56)
(0.3,0.67)
(0.4,0.75)
};
%\addlegendentry{SP Stock} ;
\addplot[color=mygreen,mark=x] coordinates{
(0.01,0.03)
(0.02,0.07)
(0.1,0.3)
(0.2,0.49)
(0.3,0.64)
(0.4,0.71)
};
%\addlegendentry{Inspection} ;
\addplot[color=mypink,mark=diamond*] coordinates{
(0.01,0.04)
(0.02,0.08)
(0.1,0.36)
(0.2,0.58)
(0.3,0.67)
(0.4,0.75)
};
%\addlegendentry{Airport} ;
\addplot[color=mybrown,mark=10-pointed star] coordinates{
(0.01,0.002)
(0.02,0.05)
(0.1,0.28)
(0.2,0.65)
(0.3,0.78)
(0.4,0.8)
};
%\addlegendentry{Adult} ;
\addplot[color=darkpurple,mark=pentagon*] coordinates{
(0.01,0.06)
(0.02,0.11)
(0.1,0.32)
(0.2,0.54)
(0.3,0.7)
(0.4,0.81)
};
%\addlegendentry{Flight} ;
\addplot[color=orange,mark=|] coordinates{
(0.01,0.11)
(0.02,0.2)
(0.1,0.4)
(0.2,0.63)
(0.3,0.76)
(0.4,0.85)
};
%\addlegendentry{Voter} ;
\end{axis}
\end{tikzpicture}
\begin{tikzpicture}
\begin{axis}
[
xlabel = Sample size,
%ylabel = $F_1$ score,
ytick={0,0.2,0.4,0.6,0.8,1},
ymin=0,
ymax=1,
width=0.22\textwidth,
label style={font=\small}
]
\addplot coordinates{
(0.01,0.034)
(0.02,0.16)
(0.1,0.36)
(0.2,0.63)
(0.3,0.74)
(0.4,0.81)
};
%\addlegendentry{Tax} ;
\addplot[color=myred,mark=square*] coordinates{
(0.01,0.05)
(0.02,0.1)
(0.1,0.27)
(0.2,0.46)
(0.3,0.57)
(0.4,0.71)
};
%\addlegendentry{Hospital} ;
\addplot[color=black,mark=triangle*] coordinates{
(0.01,0.09)
(0.02,0.17)
(0.1,0.34)
(0.2,0.54)
(0.3,0.64)
(0.4,0.71)
};
%\addlegendentry{SP Stock} ;
\addplot[color=mygreen,mark=x] coordinates{
(0.01,0.05)
(0.02,0.08)
(0.1,0.3)
(0.2,0.47)
(0.3,0.62)
(0.4,0.68)
};
%\addlegendentry{Inspection} ;
\addplot[color=mypink,mark=diamond*] coordinates{
(0.01,0.05)
(0.02,0.09)
(0.1,0.35)
(0.2,0.56)
(0.3,0.65)
(0.4,0.71)
};
%\addlegendentry{Airport} ;
\addplot[color=mybrown,mark=10-pointed star] coordinates{
(0.01,0.03)
(0.02,0.05)
(0.1,0.28)
(0.2,0.62)
(0.3,0.74)
(0.4,0.76)
};
%\addlegendentry{Adult} ;
\addplot[color=darkpurple,mark=pentagon*] coordinates{
(0.01,0.06)
(0.02,0.11)
(0.1,0.32)
(0.2,0.54)
(0.3,0.68)
(0.4,0.81)
};
%\addlegendentry{Flight} ;
\addplot[color=orange,mark=|] coordinates{
(0.01,0.11)
(0.02,0.2)
(0.1,0.4)
(0.2,0.63)
(0.3,0.76)
(0.4,0.82)
};
%\addlegendentry{Voter} ;
\end{axis}
\end{tikzpicture}
\begin{tikzpicture}
\begin{axis}
[
xlabel = Sample size,
ytick={0,0.2,0.4,0.6,0.8,1},
ymin=0,
ymax=1,
%ylabel = $F_1$ score,
width=0.22\textwidth,
label style={font=\small}
]
\addplot coordinates{
(0.01,0.02)
(0.02,0.15)
(0.1,0.35)
(0.2,0.61)
(0.3,0.73)
(0.4,0.8)
};
%\addlegendentry{Tax} ;
\addplot[color=myred,mark=square*] coordinates{
(0.01,0.03)
(0.02,0.08)
(0.1,0.26)
(0.2,0.44)
(0.3,0.56)
(0.4,0.7)
};
%\addlegendentry{Hospital} ;
\addplot[color=black,mark=triangle*] coordinates{
(0.01,0.08)
(0.02,0.16)
(0.1,0.32)
(0.2,0.52)
(0.3,0.63)
(0.4,0.7)
};
%\addlegendentry{SP Stock} ;
\addplot[color=mygreen,mark=x] coordinates{
(0.01,0.03)
(0.02,0.06)
(0.1,0.28)
(0.2,0.46)
(0.3,0.61)
(0.4,0.67)
};
%\addlegendentry{Inspection} ;
\addplot[color=mypink,mark=diamond*] coordinates{
(0.01,0.04)
(0.02,0.08)
(0.1,0.34)
(0.2,0.55)
(0.3,0.63)
(0.4,0.7)
};
%\addlegendentry{Airport} ;
\addplot[color=mybrown,mark=10-pointed star] coordinates{
(0.01,0.02)
(0.02,0.04)
(0.1,0.26)
(0.2,0.6)
(0.3,0.73)
(0.4,0.75)
};
%\addlegendentry{Adult} ;
\addplot[color=darkpurple,mark=pentagon*] coordinates{
(0.01,0.11)
(0.02,0.15)
(0.1,0.36)
(0.2,0.49)
(0.3,0.61)
(0.4,0.77)
};\label{f1-p7}
%\addlegendentry{Flight} ;
\addplot[color=orange,mark=|] coordinates{
(0.01,0.09)
(0.02,0.17)
(0.1,0.35)
(0.2,0.57)
(0.3,0.69)
(0.4,0.78)
};\label{f1-p8}
%\addlegendentry{Voter} ;
\end{axis}
\end{tikzpicture}
{\color{gray}\vrule}
\begin{tikzpicture}
\begin{axis}
[
xlabel = Sample size,
ytick={0,0.2,0.4,0.6,0.8,1},
ymin=0,
ymax=1,
ylabel = $F_1$ score,
width=0.22\textwidth,
label style={font=\small}
]
\addplot coordinates{
(0.01,0.1)
(0.02,0.317)
(0.1,0.684)
(0.2,0.818)
(0.3,0.919)
(0.4,0.936)
};
%\addlegendentry{Tax} ;
\addplot[color=myred,mark=square*] coordinates{
(0.01,0.138)
(0.02,0.211)
(0.1,0.55)
(0.2,0.764)
(0.3,0.841)
(0.4,0.884)
};
%\addlegendentry{Hospital} ;
\addplot[color=black,mark=triangle*] coordinates{
(0.01,0.167)
(0.02,0.319)
(0.1,0.619)
(0.2,0.846)
(0.3,0.865)
(0.4,0.884)
};
%\addlegendentry{SP Stock} ;
\addplot[color=mygreen,mark=x] coordinates{
(0.01,0.087)
(0.02,0.145)
(0.1,0.402)
(0.2,0.577)
(0.3,0.685)
(0.4,0.81)
};
%\addlegendentry{Inspection} ;
\addplot[color=mypink,mark=diamond*] coordinates{
(0.01,0.09)
(0.02,0.145)
(0.1,0.515)
(0.2,0.639)
(0.3,0.722)
(0.4,0.839)
};
%\addlegendentry{Airport} ;
\addplot[color=mybrown,mark=10-pointed star] coordinates{
(0.01,0.117)
(0.02,0.157)
(0.1,0.477)
(0.2,0.547)
(0.3,0.662)
(0.4,0.813)
};
%\addlegendentry{Adult} ;
\addplot[color=darkpurple,mark=pentagon*] coordinates{
(0.01,0.06)
(0.02,0.12)
(0.1,0.35)
(0.2,0.58)
(0.3,0.8)
(0.4,0.87)
};
%\addlegendentry{Flight} ;
\addplot[color=orange,mark=|] coordinates{
(0.01,0.12)
(0.02,0.22)
(0.1,0.43)
(0.2,0.69)
(0.3,0.82)
(0.4,0.92)
};
%\addlegendentry{Voter} ;
\end{axis}
\end{tikzpicture}
\begin{tikzpicture}
\begin{axis}
[
xlabel = Sample size,
%ylabel = $F_1$ score,
ytick={0,0.2,0.4,0.6,0.8,1},
ymin=0,
ymax=1,
width=0.22\textwidth,
label style={font=\small}
]
\addplot coordinates{
(0.01,0.037)
(0.02,0.178)
(0.1,0.393)
(0.2,0.684)
(0.3,0.811)
(0.4,0.887)
};
%\addlegendentry{Tax} ;
\addplot[color=myred,mark=square*] coordinates{
(0.01,0.052)
(0.02,0.104)
(0.1,0.299)
(0.2,0.497)
(0.3,0.621)
(0.4,0.777)
};
%\addlegendentry{Hospital} ;
\addplot[color=black,mark=triangle*] coordinates{
(0.01,0.1)
(0.02,0.186)
(0.1,0.367)
(0.2,0.584)
(0.3,0.701)
(0.4,0.78)
};
%\addlegendentry{SP Stock} ;
\addplot[color=mygreen,mark=x] coordinates{
(0.01,0.05)
(0.02,0.085)
(0.1,0.322)
(0.2,0.51)
(0.3,0.673)
(0.4,0.743)
};
%\addlegendentry{Inspection} ;
\addplot[color=mypink,mark=diamond*] coordinates{
(0.01,0.053)
(0.02,0.097)
(0.1,0.386)
(0.2,0.61)
(0.3,0.703)
(0.4,0.78)
};
%\addlegendentry{Airport} ;
\addplot[color=mybrown,mark=10-pointed star] coordinates{
(0.01,0.06)
(0.02,0.118)
(0.1,0.37)
(0.2,0.498)
(0.3,0.642)
(0.4,0.716)
};
%\addlegendentry{Adult} ;
\addplot[color=darkpurple,mark=pentagon*] coordinates{
(0.01,0.11)
(0.02,0.14)
(0.1,0.35)
(0.2,0.58)
(0.3,0.72)
(0.4,0.87)
};
%\addlegendentry{Flight} ;
\addplot[color=orange,mark=|] coordinates{
(0.01,0.12)
(0.02,0.22)
(0.1,0.43)
(0.2,0.68)
(0.3,0.82)
(0.4,0.89)
};
%\addlegendentry{Voter} ;
\end{axis}
\end{tikzpicture}
\begin{tikzpicture}
\begin{axis}
[
xlabel = Sample size,
ytick={0,0.2,0.4,0.6,0.8,1},
ymin=0,
ymax=1,
%ylabel = $F_1$ score,
width=0.22\textwidth,
label style={font=\small}
]
\addplot coordinates{
(0.01,0.02)
(0.02,0.161)
(0.1,0.37)
(0.2,0.653)
(0.3,0.777)
(0.4,0.85)
};
%\addlegendentry{Tax} ;
\addplot[color=myred,mark=square*] coordinates{
(0.01,0.036)
(0.02,0.09)
(0.1,0.278)
(0.2,0.47)
(0.3,0.593)
(0.4,0.788)
};
%\addlegendentry{Hospital} ;
\addplot[color=black,mark=triangle*] coordinates{
(0.01,0.085)
(0.02,0.17)
(0.1,0.37)
(0.2,0.56)
(0.3,0.67)
(0.4,0.75)
};
%\addlegendentry{SP Stock} ;
\addplot[color=mygreen,mark=x] coordinates{
(0.01,0.035)
(0.02,0.066)
(0.1,0.39)
(0.2,0.5)
(0.3,0.643)
(0.4,0.726)
};
%\addlegendentry{Inspection} ;
\addplot[color=mypink,mark=diamond*] coordinates{
(0.01,0.04)
(0.02,0.083)
(0.1,0.416)
(0.2,0.58)
(0.3,0.672)
(0.4,0.745)
};
%\addlegendentry{Airport} ;
\addplot[color=mybrown,mark=10-pointed star] coordinates{
(0.01,0.045)
(0.02,0.104)
(0.1,0.45)
(0.2,0.522)
(0.3,0.667)
(0.4,0.8)
};
%\addlegendentry{Adult} ;
\addplot[color=darkpurple,mark=pentagon*] coordinates{
(0.01,0.13)
(0.02,0.18)
(0.1,0.41)
(0.2,0.56)
(0.3,0.75)
(0.4,0.89)
};\label{f1-p7}
%\addlegendentry{Flight} ;
\addplot[color=orange,mark=|] coordinates{
(0.01,0.1)
(0.02,0.2)
(0.1,0.41)
(0.2,0.66)
(0.3,0.79)
(0.4,0.89)
};\label{f1-p8}
%\addlegendentry{Voter} ;
\end{axis}
\end{tikzpicture}}

\scalebox{0.85}{
\begin{tikzpicture}
\begin{axis}
[
xlabel = Threshold,
ytick={0,0.2,0.4,0.6,0.8,1},
ymin=0,
ymax=1,
ylabel = $F_1$ score,
width=0.22\textwidth,
label style={font=\small}
]
\addplot coordinates{
(0.01,0.76)
(0.05,0.78)
(0.1,0.8)
(0.15,0.85)
(0.2,0.94)
};\label{f1-p1}
%\addlegendentry{Tax} ;
\addplot[color=myred,mark=square*] coordinates{
(0.01,0.59)
(0.05,0.69)
(0.1,0.72)
(0.15,0.79)
(0.2,0.86)
};\label{f1-p2}
%\addlegendentry{Hospital} ;
\addplot[color=black,mark=triangle*] coordinates{
(0.01,0.67)
(0.05,0.71)
(0.1,0.74)
(0.15,0.84)
(0.2,0.89)
};\label{f1-p3}
%\addlegendentry{SP Stock} ;
\addplot[color=mygreen,mark=x] coordinates{
(0.01,0.54)
(0.05,0.59)
(0.1,0.64)
(0.15,0.72)
(0.2,0.81)
};
%\addlegendentry{Inspection} ;
\addplot[color=mypink,mark=diamond*] coordinates{
(0.01,0.59)
(0.05,0.62)
(0.1,0.67)
(0.15,0.75)
(0.2,0.83)
};
%\addlegendentry{Airport} ;
\addplot[color=mybrown,mark=10-pointed star] coordinates{
(0.01,0.62)
(0.05,0.68)
(0.1,0.78)
(0.15,0.8)
(0.2,0.91)
};
%\addlegendentry{Adult} ;
\addplot[color=darkpurple,mark=pentagon*] coordinates{
(0.01,0.7)
(0.05,0.78)
(0.1,0.8)
(0.15,0.83)
(0.2,0.85)
};\label{f1-p7}
%\addlegendentry{Flight} ;
\addplot[color=orange,mark=|] coordinates{
(0.01,0.76)
(0.05,0.8)
(0.1,0.82)
(0.15,0.85)
(0.2,0.88)
};\label{f1-p8}
%\addlegendentry{Voter} ;
\end{axis}
\end{tikzpicture}
\begin{tikzpicture}
\begin{axis}
[
xlabel = Threshold,
ytick={0,0.2,0.4,0.6,0.8,1},
ymin=0,
ymax=1,
%ylabel = $F_1$ score,
width=0.22\textwidth,
label style={font=\small}
]
\addplot coordinates{
(0.01,0.74)
(0.05,0.79)
(0.1,0.81)
(0.15,0.83)
(0.2,0.84)
};\label{f1-p1}
%\addlegendentry{Tax} ;
\addplot[color=myred,mark=square*] coordinates{
(0.01,0.57)
(0.05,0.61)
(0.1,0.62)
(0.15,0.64)
(0.2,0.64)
};\label{f1-p2}
%\addlegendentry{Hospital} ;
\addplot[color=black,mark=triangle*] coordinates{
(0.01,0.64)
(0.05,0.69)
(0.1,0.7)
(0.15,0.72)
(0.2,0.83)
};\label{f1-p3}
%\addlegendentry{SP Stock} ;
\addplot[color=mygreen,mark=x] coordinates{
(0.01,0.62)
(0.05,0.66)
(0.1,0.67)
(0.15,0.69)
(0.2,0.7)
};
%\addlegendentry{Inspection} ;
\addplot[color=mypink,mark=diamond*] coordinates{
(0.01,0.65)
(0.05,0.69)
(0.1,0.7)
(0.15,0.72)
(0.2,0.73)
};
%\addlegendentry{Airport} ;
\addplot[color=mybrown,mark=10-pointed star] coordinates{
(0.01,0.74)
(0.05,0.79)
(0.1,0.81)
(0.15,0.83)
(0.2,0.83)
};
%\addlegendentry{Adult} ;
\addplot[color=darkpurple,mark=pentagon*] coordinates{
(0.01,0.67)
(0.05,0.71)
(0.1,0.72)
(0.15,0.74)
(0.2,0.76)
};\label{f1-p7}
%\addlegendentry{Flight} ;
\addplot[color=orange,mark=|] coordinates{
(0.01,0.76)
(0.05,0.8)
(0.1,0.82)
(0.15,0.83)
(0.2,0.85)
};\label{f1-p8}
%\addlegendentry{Voter} ;
\end{axis}
\end{tikzpicture}
\begin{tikzpicture}
\begin{axis}
[
xlabel = Threshold,
ytick={0,0.2,0.4,0.6,0.8,1},
ymin=0,
ymax=1,
%ylabel = $F_1$ score,
width=0.22\textwidth,
label style={font=\small}
]
\addplot coordinates{
(0.01,0.73)
(0.05,0.75)
(0.1,0.78)
(0.15,0.8)
(0.2,0.88)
};\label{f1-p1}
%\addlegendentry{Tax} ;
\addplot[color=myred,mark=square*] coordinates{
(0.01,0.56)
(0.05,0.58)
(0.1,0.59)
(0.15,0.61)
(0.2,0.67)
};\label{f1-p2}
%\addlegendentry{Hospital} ;
\addplot[color=black,mark=triangle*] coordinates{
(0.01,0.63)
(0.05,0.65)
(0.1,0.67)
(0.15,0.69)
(0.2,0.76)
};\label{f1-p3}
%\addlegendentry{SP Stock} ;
\addplot[color=mygreen,mark=x] coordinates{
(0.01,0.61)
(0.05,0.62)
(0.1,0.64)
(0.15,0.73)
(0.2,0.8)
};\label{f1-p4}
%\addlegendentry{Inspection} ;
\addplot[color=mypink,mark=diamond*] coordinates{
(0.01,0.63)
(0.05,0.65)
(0.1,0.67)
(0.15,0.72)
(0.2,0.8)
};\label{f1-p5}
%\addlegendentry{Airport} ;
\addplot[color=mybrown,mark=10-pointed star] coordinates{
(0.01,0.71)
(0.05,0.72)
(0.1,0.73)
(0.15,0.75)
(0.2,0.79)
};\label{f1-p6}
%\addlegendentry{Adult} ;
\addplot[color=darkpurple,mark=pentagon*] coordinates{
(0.01,0.61)
(0.05,0.64)
(0.1,0.75)
(0.15,0.77)
(0.2,0.79)
};\label{f1-p7}
%\addlegendentry{Flight} ;
\addplot[color=orange,mark=|] coordinates{
(0.01,0.7)
(0.05,0.72)
(0.1,0.79)
(0.15,0.81)
(0.2,0.83)
};\label{f1-p8}
%\addlegendentry{Voter} ;
\end{axis}
\end{tikzpicture}
{\color{gray}\vrule}
\begin{tikzpicture}
\begin{axis}
[
xlabel = Threshold,
ytick={0,0.2,0.4,0.6,0.8,1},
ymin=0,
ymax=1,
ylabel = $F_1$ score,
width=0.22\textwidth,
label style={font=\small}
]
\addplot coordinates{
(0.01,0.88)
(0.05,0.963)
(0.1,0.968)
(0.15,0.97)
(0.2,0.972)
};\label{f1-p1}
%\addlegendentry{Tax} ;
\addplot[color=myred,mark=square*] coordinates{
(0.01,0.77)
(0.05,0.899)
(0.1,0.914)
(0.15,0.92)
(0.2,0.921)
};\label{f1-p2}
%\addlegendentry{Hospital} ;
\addplot[color=black,mark=triangle*] coordinates{
(0.01,0.77)
(0.05,0.903)
(0.1,0.914)
(0.15,0.922)
(0.2,0.925)
};\label{f1-p3}
%\addlegendentry{SP Stock} ;
\addplot[color=mygreen,mark=x] coordinates{
(0.01,0.735)
(0.05,0.751)
(0.1,0.837)
(0.15,0.865)
(0.2,0.884)
};
%\addlegendentry{Inspection} ;
\addplot[color=mypink,mark=diamond*] coordinates{
(0.01,0.771)
(0.05,0.816)
(0.1,0.867)
(0.15,0.881)
(0.2,0.915)
};
%\addlegendentry{Airport} ;
\addplot[color=mybrown,mark=10-pointed star] coordinates{
(0.01,0.708)
(0.05,0.82)
(0.1,0.841)
(0.15,0.881)
(0.2,0.899)
};
%\addlegendentry{Adult} ;
\addplot[color=darkpurple,mark=pentagon*] coordinates{
(0.01,0.81)
(0.05,0.85)
(0.1,0.87)
(0.15,0.91)
(0.2,0.94)
};\label{f1-p7}
%\addlegendentry{Flight} ;
\addplot[color=orange,mark=|] coordinates{
(0.01,0.85)
(0.05,0.89)
(0.1,0.92)
(0.15,0.96)
(0.2,0.98)
};\label{f1-p8}
%\addlegendentry{Voter} ;
\end{axis}
\end{tikzpicture}
\begin{tikzpicture}
\begin{axis}
[
xlabel = Threshold,
ytick={0,0.2,0.4,0.6,0.8,1},
ymin=0,
ymax=1,
%ylabel = $F_1$ score,
width=0.22\textwidth,
label style={font=\small}
]
\addplot coordinates{
(0.01,0.866)
(0.05,0.887)
(0.1,0.901)
(0.15,0.908)
(0.2,0.916)
};\label{f1-p1}
%\addlegendentry{Tax} ;
\addplot[color=myred,mark=square*] coordinates{
(0.01,0.758)
(0.05,0.777)
(0.1,0.79)
(0.15,0.796)
(0.2,0.802)
};\label{f1-p2}
%\addlegendentry{Hospital} ;
\addplot[color=black,mark=triangle*] coordinates{
(0.01,0.76)
(0.05,0.777)
(0.1,0.791)
(0.15,0.796)
(0.2,0.803)
};\label{f1-p3}
%\addlegendentry{SP Stock} ;
\addplot[color=mygreen,mark=x] coordinates{
(0.01,0.726)
(0.05,0.743)
(0.1,0.755)
(0.15,0.761)
(0.2,0.767)
};
%\addlegendentry{Inspection} ;
\addplot[color=mypink,mark=diamond*] coordinates{
(0.01,0.761)
(0.05,0.778)
(0.1,0.791)
(0.15,0.797)
(0.2,0.804)
};
%\addlegendentry{Airport} ;
\addplot[color=mybrown,mark=10-pointed star] coordinates{
(0.01,0.7)
(0.05,0.716)
(0.1,0.728)
(0.15,0.733)
(0.2,0.74)
};
%\addlegendentry{Adult} ;
\addplot[color=darkpurple,mark=pentagon*] coordinates{
(0.01,0.81)
(0.05,0.85)
(0.1,0.87)
(0.15,0.89)
(0.2,0.91)
};\label{f1-p7}
%\addlegendentry{Flight} ;
\addplot[color=orange,mark=|] coordinates{
(0.01,0.82)
(0.05,0.87)
(0.1,0.89)
(0.15,0.91)
(0.2,0.93)
};\label{f1-p8}
%\addlegendentry{Voter} ;
\end{axis}
\end{tikzpicture}
\begin{tikzpicture}
\begin{axis}
[
xlabel = Threshold,
ytick={0,0.2,0.4,0.6,0.8,1},
ymin=0,
ymax=1,
%ylabel = $F_1$ score,
width=0.22\textwidth,
label style={font=\small}
]
\addplot coordinates{
(0.01,0.8)
(0.05,0.825)
(0.1,0.85)
(0.15,0.877)
(0.2,0.929)
};\label{f1-p1}
%\addlegendentry{Tax} ;
\addplot[color=myred,mark=square*] coordinates{
(0.01,0.7)
(0.05,0.756)
(0.1,0.787)
(0.15,0.827)
(0.2,0.911)
};\label{f1-p2}
%\addlegendentry{Hospital} ;
\addplot[color=black,mark=triangle*] coordinates{
(0.01,0.7)
(0.05,0.723)
(0.1,0.744)
(0.15,0.768)
(0.2,0.847)
};\label{f1-p3}
%\addlegendentry{SP Stock} ;
\addplot[color=mygreen,mark=x] coordinates{
(0.01,0.67)
(0.05,0.69)
(0.1,0.725)
(0.15,0.81)
(0.2,0.893)
};\label{f1-p4}
%\addlegendentry{Inspection} ;
\addplot[color=mypink,mark=diamond*] coordinates{
(0.01,0.7)
(0.05,0.723)
(0.1,0.744)
(0.15,0.773)
(0.2,0.853)
};\label{f1-p5}
%\addlegendentry{Airport} ;
\addplot[color=mybrown,mark=10-pointed star] coordinates{
(0.01,0.67)
(0.05,0.69)
(0.1,0.761)
(0.15,0.792)
(0.2,0.873)
};\label{f1-p6}
%\addlegendentry{Adult} ;
\addplot[color=darkpurple,mark=pentagon*] coordinates{
(0.01,0.77)
(0.05,0.8)
(0.1,0.89)
(0.15,0.91)
(0.2,0.94)
};\label{f1-p7}
%\addlegendentry{Flight} ;
\addplot[color=orange,mark=|] coordinates{
(0.01,0.77)
(0.05,0.8)
(0.1,0.89)
(0.15,0.92)
(0.2,0.94)
};\label{f1-p8}
%\addlegendentry{Voter} ;
\end{axis}
\end{tikzpicture}
}
%\vspace{-1em}
    \caption{$F_1$ score for varying sample sizes and fixed $\epsilon=0.01$ (top left) and $\epsilon=0.1$ (top right), and varying thresholds and fixed sample size $0.3$ (bottom left) and $0.4$ (bottom right), under $f_1$ (left), $f_2$ (middle), and $f_3$ (right). Datasets: Tax (\ref{f1-p1}), Stock (\ref{f1-p3}), Hospital (\ref{f1-p2}), Food (\ref{f1-p4}), Airport (\ref{f1-p5}), Adult (\ref{f1-p6}), Flight (\ref{f1-p7}), and Voter (\ref{f1-p8}).}
    \label{fig:sample}
%    \vspace{-1em}
\end{figure*}

We now report on the quality of the ADCs obtained from a sample. \ester{In all of our experiments, the sample size is big enough so that the term $z_{1-2\alpha}\cdot\sqrt{\frac{\hat{p}(1-\hat{p})}{n}}$ in the approximation function $f_1'$ defined in Section~\ref{sec:sample} has practically no impact on the function. Therefore, we use the same approximation function and threshold on both the sample and the entire dataset.}
In the experiments reported in Figure~\ref{fig:sample}, we use a standard measure of quality, namely the $F_1$ score (i.e., $2\cdot\frac{\mbox{precision}\times\mbox{recall}}{\mbox{precision}+\mbox{recall}}$). We compare the ADCs obtained from the sample with the ADCs obtained from the entire dataset. 

We first fix a threshold and consider different sample sizes. The first three charts on the top of Figure~\ref{fig:sample} show the $F_1$ score for a fixed threshold $\epsilon=0.01$ for varying sample sizes, ranging from $1\%$ to $40\%$ of the tuples in the dataset, for all three approximation functions. The last three charts show the $F_1$ score for a fixed threshold $\epsilon=0.1$ for varying sample sizes. Clearly, the larger the sample is, the more accurate the results we obtain. Generally, we see that in order to obtain an $F_1$ score of about $0.7$ or above we need to see about $40\%$ of the tuples in the dataset. \ester{Note that we obtain a higher $F_1$-score on larger datasets (for which sampling is particularly important), as for such datasets a relatively small sample allows us to see enough tuples to obtain accurate results. For example, on the Tax and NCVoter datasets we consistently obtain an $F_1$-score of at least $0.7$ or $0.8$ when seeing $30\%$ or $40\%$ of the tuples, respectively.}

The first three charts on the bottom of Figure~\ref{fig:sample} depict the $F_1$ score for a fixed sample size of $30\%$ and varying thresholds, ranging from $0.01$ to $0.2$, for all three approximation functions. The last three charts depict the $F_1$ score for a fixed sample size of $40\%$ and varying thresholds. Here, we can see that we obtain more accurate results when considering a higher threshold. This is due to the fact that a higher $\epsilon$ allows for more exceptions, and the DCs obtained using a higher threshold can be seen as obtained using a smaller sample of the database (as a smaller part of the database satisfies them). Hence, we are able to obtain result with high accuracy when considering a relatively small random sample. \ester{We conclude that the choice of the right threshold and sample size should be based on the size of the original dataset and the approximation function (as we discuss in the next subsection).}

\renewcommand{\arraystretch}{2}
\begin{figure}[b!]
    \centering
    \scalebox{0.9}{
\begin{tikzpicture}
\begin{axis}[%tax
    ybar=0pt,
    ymax=800000,
    %legend style={at={(0.35,0.97)},
    %  anchor=north},
    ylabel={Runtime (seconds)},
    %xlabel={Sample size},
    %symbolic x coords={20\%,40\%,60\%,80\%,100\%},
    xtick={0,1,2,3,4,5,6,7},
    xticklabels={Tax,Stock,Hospital,Food,Airport,Adult,Flight,Voter},
    bar width=0.16,
    height=0.2\textwidth,
    width=0.53\textwidth,
    ymode=log,
    log basis y={10},
    tick label style={font=\small},
    label style={font=\small},
    enlargelimits=0.1
    ]
\addplot[fill=orchid,postaction={
         pattern=horizontal lines}] coordinates {(0,16138.85) (1,835.86) (2,1299.301) (3,1348.31) (4,120.48) (5,83.79) (6,2062.16) (7,6445.49)};\label{20_time}
\addplot[fill=lightpink,postaction={pattern=crosshatch dots}] coordinates {(0,28820.62) (1,1926.83) (2,3003.279) (3, 1642.92) (4,223.66) (5,175.42) (6,4267.49) (7,13964.18)};\label{40_time}
\addplot[fill=lightorange,postaction={
         pattern=north east lines}] coordinates {(0,40430.23) (1,2679.4) (2,4068.433) (3,1763.98) (4,262.68) (5,238.72) (6,8160.796) (7,28902.189)};\label{60_time}
\addplot[fill=darkgreen] coordinates { (0,244748.11) (1,3961.02) (2,5208.571) (3,2109.52) (4,421.22) (5,346.09) (6,28389.2) (7,292180.939)};\label{80_time}
\addplot[fill=darkpurple,postaction={
         pattern=north west lines}] coordinates { (0,650623.87) (1,5070.65) (2,6020.341) (3,2529.93) (4,563.01) (5,434.48) (6,75162.57) (7,784279.888)};\label{100_time}
\legend{}
\end{axis}
\end{tikzpicture}}
\vspace{-1em}
\caption{Running times of $\algname{ADCMiner}$ for varying sample sizes---$20\%$ (\ref{20_time}), $40\%$ (\ref{40_time}), $60\%$ (\ref{60_time}), $80\%$ (\ref{80_time}), and $100\%$ (\ref{100_time}).}
\label{fig:runtime-sample}
  \end{figure}

Next, we show the improvement in running times obtained when considering a sample. Figure~\ref{fig:runtime-sample} depicts the running times of \algname{ADCMiner} for varying sample sizes on all datasets for the function $f_1$ (as shown in the previous subsection, the running times for all three functions are very close; hence, all three functions follow a similar trend). On the SP Stock dataset, we are able to reduce the total running time from eighty five to thirty two minutes when considering a sample that consists of $40\%$ of the tuples---a reduction of more than $60\%$. For the Flight dataset, the running time goes down from almost twenty one hours to seventy minutes---a reduction of almost $95\%$. For the Tax dataset, we are not able to use the same algorithm for constructing the evidence set on $100\%$ and $40\%$ of the tuples in the database. Using the original algorithm for constructing the evidence set by Chu et al.~\cite{DBLP:journals/pvldb/ChuIP13} we obtain a reduction of more than $94\%$---from $7.5$ days to $10.5$ hours. Using the algorithm $\algname{BFASTDC}$ to construct the evidence set we can obtain a similar reduction (of almost $90\%$) in the running time~\cite{DBLP:conf/dexa/PenaA18}.

\ester{Finally, we validate the theoretical analysis of Section~\ref{sec:sample} as follows. For each dataset, we run our algorithm with the approximation function $f_1$ on varying sample sizes ranging from $5\%$ to $80\%$ of the tuples in the dataset. For each such sample, we compute the average value of $\epsilon-\hat{p}$ over the discovered ADCs (recall that $\hat{p}$ is the proportion of violating tuple pairs). Figure~\ref{fig:validation} depicts the values obtained in this experiment. Note that the actual numbers are very small; hence, the reported numbers are scaled up for each dataset (i.e., multiplied by a constant $10^x$, where $x$ depends on the dataset). We see that as the sample size increases, the value $\epsilon-\hat{p}$ decreases. Moreover, for each dataset, we have that $(\epsilon-\hat{p})\sim \frac{1}{\sqrt{n}}$ (where $\sim$ denotes asymptotic equivalence and $n$ is defined as in Section~\ref{sec:sample}), which supports our main result of Section~\ref{sec:sample} (i.e., Inequality~\ref{eq:eps})}.

\subsection{Qualitative Analysis}
  
%%%%%%%%%%%%%%%%%%%%%%%%%%%%%%%%%%%%%%%

%%%%%%%%%%%%%%%%%%%%%%%%%%%%%%%%%%%%%%

We compare the three approximation functions discussed in Section~\ref{sec:functions} in the following way. For each one of the datasets, we have a set of ``golden'' DCs; i.e., DCs obtained by domain experts. We take a sample of $10K$ tuples from each one of the datasets and add noise to the resulting dataset, such that each value has a probability of $0.001$ to be modified, and if it is modified, then it has $50\%$ chance of being changed to a new value from the active domain of the corresponding column and $50\%$ chance to being changed to a typo. We also generate another dirty dataset in a similar way, but in this case, we only allow changing values in $0.001$ of the tuples. Hence, in the first dataset, the errors are distributed among the tuples \ester{(and the number of modified tuples is usually very close to the number of modified values)}, while in the second dataset, the errors are concentrated in a small subset of the tuples.

Then, we run our algorithm on the two dirty datasets obtained from each one of the original datasets, with varying approximation thresholds $\epsilon$ (ranging from $10^{-6}$ to $10^{-1}$). For each $\epsilon$, we compute the G-recall, that is, the number of golden DC returned divided by the total number of golden DCs. We report the results in Figure~\ref{fig:app-fun}. \ester{We also report the G-recall for $\epsilon=0$ (i.e., when considering valid DCs) above each diagram (in parentheses).}
We observe the following phenomena. 
\ester{First, the G-recall for valid DCs is consistently zero, or very close to zero, which highlights the importance of considering approximate DCs.}
Second, the function $f_1$ produces results with a higher G-recall on smaller thresholds (i.e., $10^{-5}-10^{-3}$), while the other two functions have a higher G-recall on the larger thresholds (i.e., $10^{-2}-10^{-1}$). This is due to the fact that the functions $f_2$ and $f_3$ are more sensitive in the sense that a single tuple adds $\frac{1}{n}$ to the value of the functions $f_2$ and $f_3$ (where $n$ is the number of tuples), while a pair of tuples adds $\frac{1}{n^2}$ to the value of the function $f_1$.

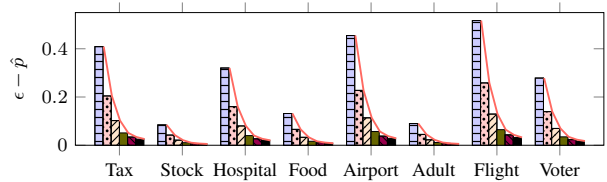
\begin{figure}[b!]
    \centering
    \scalebox{0.9}{
\begin{tikzpicture}
\begin{axis}[%tax
    ybar=0pt,
    ymax=0.55,
    ymin=0,
    %legend style={at={(0.35,0.97)},
    %  anchor=north},
    ylabel={$\epsilon-\hat{p}$},
    %xlabel={Sample size},
    %symbolic x coords={20\%,40\%,60\%,80\%,100\%},
    xtick={0,1,2,3,4,5,6,7},
    xticklabels={Tax,Stock,Hospital,Food,Airport,Adult,Flight,Voter},
    bar width=0.13,
    height=0.2\textwidth,
    width=0.53\textwidth,
    tick label style={font=\small},
    label style={font=\small},
    %enlargelimits=0.1
    ]
\addplot[fill=orchid,postaction={
         pattern=horizontal lines}] coordinates {(0,0.40886) (1,0.084416) (2,0.32027) (3,0.13161) (4,0.45485) (5,0.08992) (6,0.51689) (7,0.27860)};\label{val_5}
\addplot[fill=lightpink,postaction={pattern=crosshatch dots}] coordinates {(0,0.20443) (1,0.042208) (2,0.16013) (3, 0.06581) (4,0.22742) (5,0.04496) (6,0.25844) (7,0.13930)};\label{val_10}
\addplot[fill=lightorange,postaction={
         pattern=north east lines}] coordinates {(0,0.10222) (1,0.021104) (2,0.08007) (3,0.03290) (4,0.11371) (5,0.02248) (6,0.12922) (7,0.06965)};\label{val_20}
\addplot[fill=darkgreen] coordinates { (0,0.05111) (1,0.010552) (2,0.04003) (3,0.01645) (4,0.05686) (5,0.01124) (6,0.06461) (7,0.03482)};\label{val_40}
\addplot[fill=darkpurple,postaction={
         pattern=north west lines}] coordinates { (0,0.03407) (1,0.007035) (2,0.02669) (3,0.01097) (4,0.03790) (5,0.00749) (6,0.04307) (7,0.02322)};\label{val_60}
\addplot[fill=black] coordinates { (0,0.02555) (1,0.005276) (2,0.02002) (3,0.00823) (4,0.02843) (5,0.00562) (6,0.03231) (7,0.01741)};\label{val_80}
\addplot[red,sharp plot,thick,update limits=false,] coordinates { (-0.25,0.40886) (-0.12,0.20443) (0.01,0.10222) (0.14,0.05111) (0.27,0.03407) (0.4,0.02555)};
\addplot[red,sharp plot,thick,update limits=false,] coordinates { (0.75,0.084416) (0.88,0.042208) (1.01,0.021104) (1.14,0.010552) (1.27,0.007035) (1.4,0.005276)};
\addplot[red,sharp plot,thick,update limits=false,] coordinates { (1.75,0.32027) (1.88,0.16013) (2.01,0.08007) (2.14,0.04003) (2.27,0.02669) (2.4,0.02002)};
\addplot[red,sharp plot,thick,update limits=false,] coordinates { (2.75,0.13161) (2.88,0.06581) (3.01,0.03290) (3.14,0.01645) (3.27,0.01097) (3.4,0.00823)};
\addplot[red,sharp plot,thick,update limits=false,] coordinates { (3.75,0.45485) (3.88,0.22742) (4.01,0.11371) (4.14,0.05686) (4.27,0.03790) (4.4,0.02843)};
\addplot[red,sharp plot,thick,update limits=false,] coordinates { (4.75,0.08992) (4.88,0.04496) (5.01,0.02248) (5.14,0.01124) (5.27,0.00749) (5.4,0.00562)};
\addplot[red,sharp plot,thick,update limits=false,] coordinates { (5.75,0.51689) (5.88,0.25844) (6.01,0.12922) (6.14,0.06461) (6.27,0.04307) (6.4,0.03231)};
\addplot[red,sharp plot,thick,update limits=false,] coordinates { (6.75,0.27860) (6.88,0.13930) (7.01,0.06965) (7.14,0.03482) (7.27,0.02322) (7.4,0.01741)};
\legend{}
\end{axis}
\end{tikzpicture}}
%\vspace{-1em}
\caption{The average difference between $\epsilon$ and $\hat{p}$ over the ADCs obtained from varying sample sizes---5\% (\ref{val_5}), 10\% (\ref{val_10}), 20\% (\ref{val_20}), 40\% (\ref{val_40}), 60\% (\ref{val_60}), and , 80\% (\ref{val_80}).}
\label{fig:validation}
  \end{figure}

%On the other hand, the function $f_1$ is more sensitive to changes of the threshold as $\epsilon=0.01$, for example, allows for $1M$ violating pairs, while $\epsilon=0.001$ allows for $100K$ violating pairs. This is a significant reduction in the number of allowed violations. For the other two functions, $\epsilon=0.001$ allows for $10$ ``problematic'' tuples, while $\epsilon=0.0001$ allows for $1$ ``problematic' tuple, and the difference is not so significant. This is why we observe a more stable behaviour of the functions $f_2$ and $f_3$. In the first chart of Figure~\ref{fig:sample}, the G-recall goes up and down when $\epsilon$ gets smaller, as first, some of the golden DCs are not minimal while

\begin{figure*}
  \scalebox{0.7}{
\begin{tikzpicture}
\begin{axis}
[
legend style = { at = {(0.97,0.4)}},
%xlabel = $\epsilon$ ($10^x$),
ylabel = G-recall,
ymax = 1,
ymin =0,
width=0.22\textwidth
]
\addplot coordinates{
(-6,0.44)
(-5,0.44)
(-4,0.67)
(-3,0.67)
(-2,0)
(-1,0)
};
%\addlegendentry{$f_1$} ;
\addplot coordinates{
(-6,0)
(-5,0)
(-4,0)
(-3,0.22)
(-2,0.44)
(-1,0.44)
};
%\addlegendentry{$f_2$} ;
\addplot[color=black,mark=triangle*] coordinates{
(-6,0)
(-5,0)
(-4,0)
(-3,0.44)
(-2,1)
(-1,0.77)
};
%\addlegendentry{$f_3$} ;
\end{axis}
\node[align=center, text=black] at (rel axis cs:0.19,1.13) {\textbf{Tax}};
\node[align=center, text=black] at (rel axis cs:0.4,1.13) {($0$)};
\end{tikzpicture}
\begin{tikzpicture}
\begin{axis}
[
legend style = { at = {(0.97,0.4)}},
%xlabel = $\epsilon$ ($10^x$),
%ylabel = G-recall,
ymax = 1,
ymin =0,
width=0.22\textwidth
]
\addplot coordinates{
(-6,0.44)
(-5,1)
(-4,0.67)
(-3,0.67)
(-2,0)
(-1,0)
};
%\addlegendentry{$f_1$} ;
\addplot coordinates{
(-6,0.11)
(-5,0.11)
(-4,0.11)
(-3,0.22)
(-2,0.55)
(-1,1)
};
%\addlegendentry{$f_2$} ;
\addplot[color=black,mark=triangle*] coordinates{
(-6,0.11)
(-5,0.11)
(-4,0.11)
(-3,1)
(-2,1)
(-1,0.78)
};
%\addlegendentry{$f_3$} ;
\end{axis}
\node[align=center, text=black] at (rel axis cs:0.25,1.13) {($0.11$)};
\end{tikzpicture}
{\color{gray}\vrule}
\begin{tikzpicture}
\begin{axis}
[
legend style = { at = {(0.97,0.4)}},
%xlabel = $\epsilon$ ($10^x$),
%ylabel = G-recall,
ymax = 1,
ymin =0,
width=0.22\textwidth
]
\addplot coordinates{
(-6,1)
(-5,1)
(-4,1)
(-3,1)
(-2,0.83)
(-1,0.83)
};
%\addlegendentry{$f_1$} ;
\addplot coordinates{
(-6,0)
(-5,0)
(-4,0)
(-3,1)
(-2,1)
(-1,1)
};
%\addlegendentry{$f_2$} ;
\addplot[color=black,mark=triangle*] coordinates{
(-6,0)
(-5,0)
(-4,0)
(-3,1)
(-2,1)
(-1,1)
};
%\addlegendentry{$f_3$} ;
\end{axis}
\node[align=center, text=black] at (rel axis cs:0.25,1.13) {\textbf{Stock}};
\node[align=center, text=black] at (rel axis cs:0.52,1.13) {($0$)};
\end{tikzpicture}
\begin{tikzpicture}
\begin{axis}
[
legend style = { at = {(0.97,0.4)}},
%xlabel = $\epsilon$ ($10^x$),
%ylabel = $F_1$ score,
ymax = 1,
ymin =0,
width=0.22\textwidth
]
\addplot coordinates{
(-6,1)
(-5,1)
(-4,1)
(-3,1)
(-2,0.833)
(-1,0.833)
};
%\addlegendentry{Tax} ;
\addplot coordinates{
(-6,0.17)
(-5,0.17)
(-4,0.17)
(-3,1)
(-2,1)
(-1,1)
};
%\addlegendentry{Hospital} ;
\addplot[color=black,mark=triangle*] coordinates{
(-6,0.17)
(-5,0.17)
(-4,0.17)
(-3,1)
(-2,1)
(-1,1)
};
%\addlegendentry{SP Stock} ;
\end{axis}
\node[align=center, text=black] at (rel axis cs:0.16,1.13) {($0$)};
\end{tikzpicture}
{\color{gray}\vrule}
\begin{tikzpicture}
\begin{axis}
[
legend style = { at = {(0.97,0.4)}},
%xlabel = $\epsilon$ ($10^x$),
%ylabel = $F_1$ score,
ymax = 1,
ymin =0,
width=0.22\textwidth
]
\addplot coordinates{
(-6,0)
(-5,0.57)
(-4,0.71)
(-3,1)
(-2,0.28)
(-1,0)
};
%\addlegendentry{Tax} ;
\addplot coordinates{
(-6,0)
(-5,0)
(-4,0)
(-3,0)
(-2,0.14)
(-1,0.57)
};
%\addlegendentry{Hospital} ;
\addplot[color=black,mark=triangle*] coordinates{
(-6,0)
(-5,0)
(-4,0)
(-3,0.14)
(-2,1)
(-1,1)
};
%\addlegendentry{SP Stock} ;
\end{axis}
\node[align=center, text=black] at (rel axis cs:0.32,1.13) {\textbf{Hospital}};
\node[align=center, text=black] at (rel axis cs:0.68,1.13) {($0$)};
\end{tikzpicture}
\begin{tikzpicture}
\begin{axis}
[
legend style = { at = {(0.97,0.4)}},
%xlabel = $\epsilon$ ($10^x$),
%ylabel = $F_1$ score,
ymax = 1,
ymin =0,
width=0.22\textwidth
]
\addplot coordinates{
(-6,0.285)
(-5,0.714)
(-4,1)
(-3,1)
(-2,0.285)
(-1,0)
};\label{functions-p1}
%\addlegendentry{Tax} ;
\addplot coordinates{
(-6,0)
(-5,0)
(-4,0)
(-3,0)
(-2,0.57)
(-1,0.85)
};\label{functions-p2}
%\addlegendentry{Hospital} ;
\addplot[color=black,mark=triangle*] coordinates{
(-6,0)
(-5,0)
(-4,0)
(-3,1)
(-2,1)
(-1,1)
};\label{functions-p3}
%\addlegendentry{SP Stock} ;
\end{axis}
\node[align=center, text=black] at (rel axis cs:0.15,1.13) {($0$)};
\end{tikzpicture}
{\color{gray}\vrule}
\begin{tikzpicture}
\begin{axis}
[
legend style = { at = {(0.97,0.4)}},
%xlabel = $\epsilon$ ($10^x$),
%ylabel = G-recall,
ymax = 1,
ymin =0,
width=0.22\textwidth
]
\addplot coordinates{
(-6,0.1)
(-5,1)
(-4,0.1)
(-3,0.1)
(-2,0.1)
(-1,0.1)
};\label{functions-p1}
%\addlegendentry{Tax} ;
\addplot coordinates{
(-6,0)
(-5,0)
(-4,0)
(-3,0.1)
(-2,0.8)
(-1,1)
};\label{functions-p2}
%\addlegendentry{Hospital} ;
\addplot[color=black,mark=triangle*] coordinates{
(-6,0)
(-5,0)
(-4,0)
(-3,0.1)
(-2,1)
(-1,1)
};\label{functions-p3}
%\addlegendentry{SP Stock} ;
\end{axis}
\node[align=center, text=black] at (rel axis cs:0.22,1.13) {\textbf{Food}};
\node[align=center, text=black] at (rel axis cs:0.47,1.13) {($0$)};
\end{tikzpicture}
\begin{tikzpicture}
\begin{axis}
[
legend style = { at = {(0.97,0.4)}},
%xlabel = $\epsilon$ ($10^x$),
%ylabel = $F_1$ score,
ymax = 1,
ymin =0,
width=0.22\textwidth
]
\addplot coordinates{
(-6,0.1)
(-5,1)
(-4,0.1)
(-3,0.1)
(-2,0.1)
(-1,0.1)
};\label{functions-p1}
%\addlegendentry{Tax} ;
\addplot coordinates{
(-6,0)
(-5,0)
(-4,0)
(-3,0.1)
(-2,0.8)
(-1,1)
};\label{functions-p2}
%\addlegendentry{Hospital} ;
\addplot[color=black,mark=triangle*] coordinates{
(-6,0)
(-5,0)
(-4,0)
(-3,0.2)
(-2,1)
(-1,1)
};\label{functions-p3}
%\addlegendentry{SP Stock} ;
\end{axis}
\node[align=center, text=black] at (rel axis cs:0.15,1.13) {($0$)};
\end{tikzpicture}
}
\scalebox{0.7}{
\begin{tikzpicture}
\begin{axis}
[
legend style = { at = {(0.97,0.4)}},
xlabel = $\epsilon$ ($10^x$),
ylabel = G-recall,
ymax = 1,
ymin =0,
width=0.22\textwidth
]
\addplot coordinates{
(-6,0.22)
(-5,0.44)
(-4,0.88)
(-3,0.66)
(-2,0.66)
(-1,0.44)
};\label{functions-p1}
%\addlegendentry{Tax} ;
\addplot coordinates{
(-6,0)
(-5,0)
(-4,0)
(-3,0.11)
(-2,0.22)
(-1,0.77)
};\label{functions-p2}
%\addlegendentry{Hospital} ;
\addplot[color=black,mark=triangle*] coordinates{
(-6,0)
(-5,0)
(-4,0)
(-3,0.11)
(-2,0.77)
(-1,0.66)
};\label{functions-p3}
%\addlegendentry{SP Stock} ;
\end{axis}
\node[align=center, text=black] at (rel axis cs:0.3,1.13) {\textbf{Airport}};
\node[align=center, text=black] at (rel axis cs:0.63,1.13) {($0$)};
\end{tikzpicture}
\begin{tikzpicture}
\begin{axis}
[
legend style = { at = {(0.97,0.4)}},
xlabel = $\epsilon$ ($10^x$),
%ylabel = $F_1$ score,
ymax = 1,
ymin =0,
width=0.22\textwidth
]
\addplot coordinates{
(-6,0.22)
(-5,0.77)
(-4,1)
(-3,0.66)
(-2,0.66)
(-1,0.44)
};\label{functions-p1}
%\addlegendentry{Tax} ;
\addplot coordinates{
(-6,0)
(-5,0)
(-4,0)
(-3,0.11)
(-2,0.55)
(-1,1)
};\label{functions-p2}
%\addlegendentry{Hospital} ;
\addplot[color=black,mark=triangle*] coordinates{
(-6,0)
(-5,0)
(-4,0)
(-3,0.55)
(-2,0.88)
(-1,0.66)
};\label{functions-p3}
%\addlegendentry{SP Stock} ;
\end{axis}
\node[align=center, text=black] at (rel axis cs:0.15,1.13) {($0$)};
\end{tikzpicture}
{\color{gray}\vrule}
\begin{tikzpicture}
\begin{axis}
[
legend style = { at = {(0.97,0.4)}},
xlabel = $\epsilon$ ($10^x$),
%ylabel = $F_1$ score,
ymax = 1,
ymin =0,
width=0.22\textwidth
]
\addplot coordinates{
(-6,0)
(-5,0)
(-4,0.33)
(-3,1)
(-2,1)
(-1,0.67)
};\label{functions-p1}
%\addlegendentry{Tax} ;
\addplot coordinates{
(-6,0)
(-5,0)
(-4,0)
(-3,0)
(-2,0)
(-1,0)
};\label{functions-p2}
%\addlegendentry{Hospital} ;
\addplot[color=black,mark=triangle*] coordinates{
(-6,0)
(-5,0)
(-4,0)
(-3,0.33)
(-2,1)
(-1,0.67)
};\label{functions-p3}
%\addlegendentry{SP Stock} ;
\end{axis}
\node[align=center, text=black] at (rel axis cs:0.24,1.13) {\textbf{Adult}};
\node[align=center, text=black] at (rel axis cs:0.52,1.13) {($0$)};
\end{tikzpicture}
\begin{tikzpicture}
\begin{axis}
[
legend style = { at = {(0.97,0.4)}},
xlabel = $\epsilon$ ($10^x$),
%ylabel = $F_1$ score,
ymax = 1,
ymin =0,
width=0.22\textwidth
]
\addplot coordinates{
(-6,0)
(-5,0.33)
(-4,0.67)
(-3,1)
(-2,1)
(-1,0.67)
};\label{functions-p1}
%\addlegendentry{Tax} ;
\addplot coordinates{
(-6,0)
(-5,0)
(-4,0)
(-3,0)
(-2,0)
(-1,0.33)
};\label{functions-p2}
%\addlegendentry{Hospital} ;
\addplot[color=black,mark=triangle*] coordinates{
(-6,0)
(-5,0)
(-4,0)
(-3,1)
(-2,1)
(-1,0.67)
};\label{functions-p3}
%\addlegendentry{SP Stock} ;
\end{axis}
\node[align=center, text=black] at (rel axis cs:0.15,1.13) {($0$)};
\end{tikzpicture}
{\color{gray}\vrule}
\begin{tikzpicture}
\begin{axis}
[
legend style = { at = {(0.97,0.4)}},
xlabel = $\epsilon$ ($10^x$),
%ylabel = G-recall,
ymax = 1,
ymin =0,
width=0.22\textwidth
]
\addplot coordinates{
(-6,0)
(-5,0.08)
(-4,0.69)
(-3,0.85)
(-2,0.85)
(-1,0)
};\label{functions-p1}
%\addlegendentry{Tax} ;
\addplot coordinates{
(-6,0)
(-5,0)
(-4,0)
(-3,0)
(-2,0)
(-1,0.08)
};\label{functions-p2}
%\addlegendentry{Hospital} ;
\addplot[color=black,mark=triangle*] coordinates{
(-6,0)
(-5,0)
(-4,0)
(-3,0.31)
(-2,0.92)
(-1,0.92)
};\label{functions-p3}
%\addlegendentry{SP Stock} ;
\end{axis}
\node[align=center, text=black] at (rel axis cs:0.25,1.13) {\textbf{Flight}};
\node[align=center, text=black] at (rel axis cs:0.53,1.13) {($0$)};
\end{tikzpicture}
\begin{tikzpicture}
\begin{axis}
[
legend style = { at = {(0.97,0.4)}},
xlabel = $\epsilon$ ($10^x$),
%ylabel = $F_1$ score,
ymax = 1,
ymin =0,
width=0.22\textwidth
]
\addplot coordinates{
(-6,0.08)
(-5,0.92)
(-4,1)
(-3,0.92)
(-2,0.85)
(-1,0)
};\label{functions-p1}
%\addlegendentry{Tax} ;
\addplot coordinates{
(-6,0.08)
(-5,0.08)
(-4,0.08)
(-3,0.08)
(-2,0.15)
(-1,0.92)
};\label{functions-p2}
%\addlegendentry{Hospital} ;
\addplot[color=black,mark=triangle*] coordinates{
(-6,0.08)
(-5,0.08)
(-4,0.08)
(-3,0.92)
(-2,0.92)
(-1,0.92)
};\label{functions-p3}
%\addlegendentry{SP Stock} ;
\end{axis}
\node[align=center, text=black] at (rel axis cs:0.25,1.13) {($0.08$)};
\end{tikzpicture}
{\color{gray}\vrule}
\begin{tikzpicture}
\begin{axis}
[
legend style = { at = {(0.97,0.4)}},
xlabel = $\epsilon$ ($10^x$),
%ylabel = $F_1$ score,
ymax = 1,
ymin =0,
width=0.22\textwidth
]
\addplot coordinates{
(-6,0.6)
(-5,0.6)
(-4,0.6)
(-3,0.6)
(-2,0.8)
(-1,0.8)
};\label{functions-p1}
%\addlegendentry{Tax} ;
\addplot coordinates{
(-6,0)
(-5,0)
(-4,0)
(-3,0.4)
(-2,0.6)
(-1,0.6)
};\label{functions-p2}
%\addlegendentry{Hospital} ;
\addplot[color=black,mark=triangle*] coordinates{
(-6,0)
(-5,0)
(-4,0)
(-3,0.6)
(-2,1)
(-1,1)
};\label{functions-p3}
%\addlegendentry{SP Stock} ;
\end{axis}
\node[align=center, text=black] at (rel axis cs:0.23,1.13) {\textbf{Voter}};
\node[align=center, text=black] at (rel axis cs:0.5,1.13) {($0$)};
\end{tikzpicture}
\begin{tikzpicture}
\begin{axis}
[
legend style = { at = {(0.97,0.4)}},
xlabel = $\epsilon$ ($10^x$),
%ylabel = $F_1$ score,
ymax = 1,
ymin =0,
width=0.22\textwidth
]
\addplot coordinates{
(-6,0.7)
(-5,0.7)
(-4,1)
(-3,1)
(-2,0.9)
(-1,0.9)
};\label{functions-p1}
%\addlegendentry{Tax} ;
\addplot coordinates{
(-6,0.7)
(-5,0.7)
(-4,0.7)
(-3,0.7)
(-2,0.7)
(-1,0.9)
};\label{functions-p2}
%\addlegendentry{Hospital} ;
\addplot[color=black,mark=triangle*] coordinates{
(-6,0.7)
(-5,0.7)
(-4,0.8)
(-3,0.8)
(-2,0.8)
(-1,1)
};\label{functions-p3}
%\addlegendentry{SP Stock} ;
\end{axis}
\node[align=center, text=black] at (rel axis cs:0.15,1.13) {($0$)};
\end{tikzpicture}
}
\vspace{-1em}
\caption{G-recall for varying thresholds under $f_1$ (\ref{functions-p1}), $f_2$ (\ref{functions-p2}) and $f_3$ (\ref{functions-p3}) for spread (left) and skewed (right) noise.}
    \label{fig:app-fun}
\end{figure*}
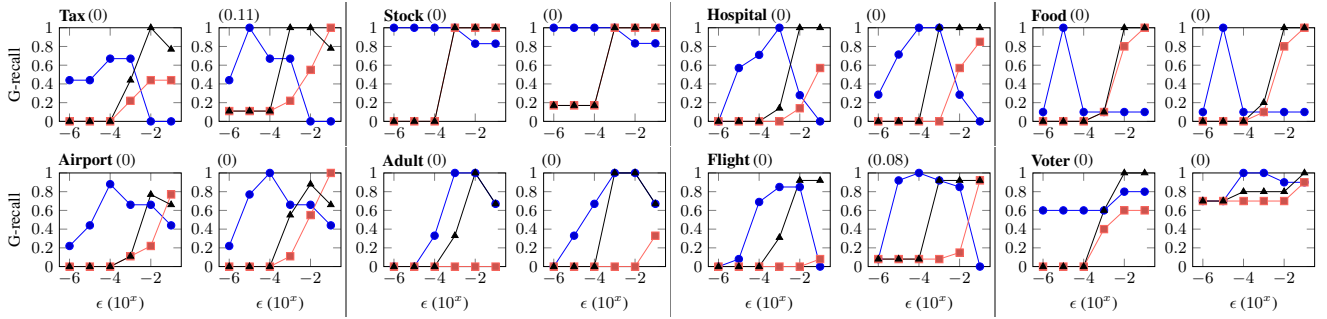

\begin{table*}[t]
\small
    \centering
    \scalebox{0.93}{
    \begin{tabular}{|c||c|}
\hline
\rowcolor{lightgray}\textbf{Approximate DC} & \textbf{Valid DC}\\\hline
$\forall t,t'\neg(t[\att{St}]=t'[\att{St}]\wedge t[\att{Salary}]> t'[\att{Salary}]\wedge t[\att{Tax}]< t'[\att{Tax}])$ & \makecell{$\forall t,t'\neg(t[\att{St}]=t'[\att{St}]\wedge t[\att{Salary}]> t'[\att{Salary}]\wedge t[\att{Tax}]< t'[\att{Tax}]\wedge t[\att{G}]= t'[\att{G}]$\\$\wedge t[\att{SE}]\ge t'[\att{SE}]\wedge t[\att{Ph}]= t'[\att{Ph}])$}\\\hline
$\forall t,t'\neg(t[\att{High}]< t[\att{Low}])$ & \makecell{$\forall t,t'\neg(t[\att{High}]< t[\att{Low}]\wedge t[\att{Open}]< t[\att{High}]\wedge t[\att{Low}]\le t[\att{Close}])$}\\\hline
$\forall t,t'\neg(t[\att{St}]=t'[\att{St}]\wedge t[\att{MC}]= t'[\att{MC}]\wedge t[\att{StAvg}]\neq t'[\att{StAvg}])$ & \makecell{$\forall t,t'\neg(t[\att{St}]=t'[\att{St}]\wedge t[\att{MC}]= t'[\att{MC}]\wedge t[\att{StAvg}]\neq t'[\att{StAvg}]$\\$\wedge t[\att{City}]= t'[\att{City}]\wedge t[\att{Sample}]= t'[\att{Sample}])$}\\\hline
$\forall t,t'\neg(t[\att{Zip}]=t'[\att{Zip}]\wedge t[\att{St}]\neq t'[\att{St}])$ & \makecell{$\forall t,t'\neg(t[\att{Zip}]=t'[\att{Zip}]\wedge t[\att{St}]\neq t'[\att{St}])\wedge t[\att{Name}]= t'[\att{Name}]$\\$\wedge t[\att{FacilityType}]\neq t'[\att{FacilityType}])$}\\\hline
\makecell{$\forall t,t'\neg(t[\att{OSt}]=t'[\att{OSt}]\wedge t[\att{DSt}]= t'[\att{DSt}]\wedge t[\att{DTime}]\ge t'[\att{DTime}])$\\$\wedge t[\att{ATime}]\le t'[\att{ATime}])\wedge t[\att{ETime}]> t'[\att{ETime}])$} & \makecell{$\forall t,t'\neg(t[\att{St}]=t'[\att{St}]\wedge t[\att{MC}]= t'[\att{MC}]\wedge t[\att{SA}]= t'[\att{SA}]\wedge t[\att{PN}]= t'[\att{PN}]$\\$\wedge t[\att{Owner}]\neq t'[\att{Owner}])$}\\\hline
$\forall t,t'\neg(t[\att{Age}]< t'[\att{Age}]\wedge t[\att{BirthYear}]< t'[\att{BirthYear}])$ & \makecell{$\forall t,t'\neg(t[\att{Age}]< t'[\att{Age}]\wedge t[\att{BirthYear}]< t'[\att{BirthYear}]\wedge t[\att{C}]\neq t'[\att{C}]$\\$\wedge t[\att{Status}]\neq t'[\att{Status}]\wedge t[\att{Reason}]= t'[\att{Reason}])$}\\
\hline
\end{tabular}}
\caption{Approximate vs Valid DCs. Attributes: $\att{St}$ -- state, $\att{Ph}$ -- phone, $\att{G}$ -- gender, $\att{SE}$ -- single exemption, $\att{MC}$ -- measure code, $\att{OSt}$, $\att{DSt}$ -- origin and destination state, $\att{Dtime}$, $\att{ATime}$, $\att{ETime}$ -- departure, arrival, and elapsed time, $\att{C}$ -- county. 
%\label{table:running_example}}
%IHAB:assuming you want this
\label{table:ADCvsDC}}
\end{table*}

Another interesting phenomenon is that for we consistently obtain a higher G-recall on the error-concentrated datasets (especially for the functions $f_2$ and $f_3$). This is expected, especially for the function $f_3$, as when the errors are concentrated in a small subset of the tuples, these tuples will participate in every violation of the DC, and we only need to remove them from the database to satisfy the DC. The function $f_3$ (or, more accurately, our greedy approximation algorithm for this function) usually behaves better than the function $f_2$, especially on the error-concentrated datasets, and we are able to obtain a higher G-recall for a larger range of thresholds. As explained in Section~\ref{sec:functions}, this is due to the fact that one erroneous tuple may result in a set of problematic tuples that contains every tuple in the database, while if we just remove this tuple, the DC will be satisfied. \ester{For this same reason, while with the function $f_2$ we constantly obtain the best accuracy using $\epsilon=10^{-1}$, with the function $f_3$, we sometimes obtain better results with the smaller threshold $\epsilon=10^{-2}$.}%The function $f_1$ is the only one that was able to retrieve all golden DCs on the hospital dataset where the errors may appear in every tuple of the database.

\ester{Observe that in the experiments reported in Figure~\ref{fig:sample}, we have used six specific thresholds, with which we are not always able to obtain the highest possible G-recall. If we conduct a more refined analysis, we find that using the threshold $5\times 10^{-5}$ for the function $f_1$ on the Tax dataset, for example, we are able to obtain a G-recall of $1$. When we increase the threshold, we are able to obtain more general DCs (i.e., DCs consisting of less predicates) that we cannot obtain using smaller thresholds; however, some DCs become ``too general'', and we also lose some of the good DCs that we obtained using the smaller threshold. Hence, we need to find the threshold that will generate the best results with high probability. Using the above insights, we can choose a certain threshold (that depends on the approximation function), that will generate good results with high probability. Based on Figure~\ref{fig:app-fun}, the best thresholds in that sense are $10^{-4}$, $10^{-2}$, and $10^{-1}$ for the functions $f_1$, $f_2$, and $f_3$, respectively. Using these thresholds we obtained an average G-recall of $0.71$, $0.72$, and $0.97$, respectively.}

\ester{Finally, Table~\ref{table:ADCvsDC} presents some of the golden DCs that we were able to obtain with the three approximation functions using the best threshold according to Figure~\ref{fig:app-fun}, as well as an example of a corresponding valid DC from the same dirty dataset, obtained with the threshold $\epsilon=0$. The DCs were obtained from the Tax, SP Stock, Hospital, Food, Flight, and NCVoter datasets}. Many valid DCs are obtained from a single approximate DC by adding more predicates to cover for the errors in the database, which results in longer and less general DCs. Therefore, we often obtain less DCs and shorter DCs when considering ADCs. However, this is not always the case, as in some cases we also discover constraints that are approximate DCs, but cannot be extended to any minimal valid DC.

For example, the DC stating that the same zip code cannot correspond to two states (obtained from the Food dataset) becomes the DC stating that the same zip code cannot correspond to two states if the name and the type of the facility are the same. Clearly, we do not expect to obtain such complicated constraints, which strengthens our motivation for considering ADCs rather than valid DCs. In fact, while this DC generally holds, there are a few multi-state US zip codes (e.g., the zip code 84536 belongs to Utah and Arizona). If our original database contained two tuples with the same zip code and different states we would not be able to discover this DC unless considering ADCs. This example shows that ADCs are meaningful even when the database is clean, as they allow us to discover rules that are generally correct, but may have a few exceptions (about $0.03\%$ of zip codes in the US cross states).

\section{Concluding Remarks}\label{sec:conclusions}
We investigated the problem of detecting and enumerating minimal ADCs from data. We introduced a formal definition of an ADC based on a general family of approximation functions that subsumes previous proposals. We devised an algorithm for enumerating minimal ADCs and experimentally evaluated its performance on both real-world and synthetic datasets. Our experimental results showed that constructing the input to our enumeration algorithm requires orders of magnitude more time than enumerating the ADCs for large datasets. We showed that we are able to obtain good results (with high precision and recall) from a sample while avoiding the high computational cost. We also provided a theoretical analysis for the problem of discovering ADCs from a sample.

The computational complexity of enumerating DCs remains an open problem (in terms of combined complexity, where both the schema and database are given as input). In particular, it would be interesting to understand whether this problem is equivalent, harder or easier than the minimal hitting set problem. The main difference between the two problems is our knowledge about the relationships between the elements (e.g., we know that if a tuple pair does not satisfy a predicate then it satisfies the complement predicate). It is not clear how this additional information affects the complexity. 

% Another important direction is that of ranked enumeration of DCs (both valid and approximate ones). Chu et al.~\cite{DBLP:conf/icde/ChuIP13} introduced a ranking function that can be used to sort the DCs after generating all of them; however, incorporating the ranking function into the enumeration algorithm (to generate the results with the higher rank first) will allow us to stop the enumeration process at any point, obtain the highest-ranked DCs and avoid generating the lowest-ranked DCs altogether.

% The following two commands are all you need in the
% initial runs of your .tex file to
% produce the bibliography for the citations in your paper.
\bibliographystyle{abbrv}
\bibliography{main}  % vldb_sample.bib is the name of the Bibliography in this case
% You must have a proper ".bib" file
%  and remember to run:
% latex bibtex latex latex
% to resolve all references

\end{document}